\newcommand\blfootnote[1]{%
  \begingroup
  \renewcommand\thefootnote{}\footnote{#1}%
  \addtocounter{footnote}{-1}%
  \endgroup
}
\definecolor{tencent_blue}{RGB}{0, 82, 217}
\definecolor{tencent_orange}{RGB}{238, 126, 71}
\definecolor{redorange}{RGB}{255, 68, 51}
\newcommand{\bbP}{\mathbb{P}}
\newcommand{\bbR}{\mathbb{R}}
\newcommand{\bbE}{\mathbb{E}}
\newcommand{\bfw}{\mathbf{w}}
\newcommand{\calF}{\mathcal{F}}
\newcommand{\calW}{\mathcal{W}}
\newcommand{\calB}{\mathcal{B}}
\newcommand{\calN}{\mathcal{N}}
\newcommand{\calC}{\mathcal{C}}
\newcommand{\zero}{\mathbf{0}}
\newcommand{\diag}{\mbox{diag}}
\newcommand{\one}{\mathbf{1}}
\newcommand{\bbI}{\mathbb{I}}
\newcommand{\Var}{\mathrm{Var}}
\newcommand{\Cov}{\mathrm{Cov}}
\newcommand{\diff}{\mathrm{d}}
\newcommand{\bb}[1]{\left[#1\right]}
\newcommand{\bp}[1]{\left(#1\right)}
\newcommand{\bc}[1]{\left\{#1\right\}}
\newcommand{\bn}[1]{\left\|#1\right\|}
\newcommand{\ba}[1]{\left|#1\right|}
\begin{document}

\TITLE{Estimating Treatment Effects under Algorithmic Interference: A Structured Neural Networks Approach}
\RUNTITLE{Estimating Treatment Effects under Algorithmic Interference: A Structured Neural Networks Approach}

\ARTICLEAUTHORS{
\AUTHOR{\parbox[t]{.22\textwidth}{\centering Ruohan Zhan\\ \scriptsize{ruohan.zhan@ucl.ac.uk}\\
\scriptsize{University College London}}  
\parbox[t]{.2\textwidth}{\centering Shichao Han \\ \scriptsize{shichaohan@tencent.com}\\
\scriptsize{Tencent Inc.}}
\parbox[t]{.3\textwidth}{\centering Yuchen Hu \\ \scriptsize{yuchenhu@mit.edu}\\\scriptsize{Massachusetts Institute of Technology}}
\parbox[t]{.24\textwidth}{\centering Zhenling Jiang*\\
\scriptsize{zhenling@wharton.upenn.edu}
\\ \scriptsize{University of Pennsylvania}}
}
\RUNAUTHOR{}
}

\ABSTRACT{
Online user-generated content platforms allocate billions of dollars of promotional traffic through algorithms in two-sided marketplaces. 
To evaluate updates to these algorithms, platforms frequently rely on creator-side randomized experiments. 
However, because treated and control creators compete for exposure, such experiments suffer from algorithmic interference: exposure outcomes depend on competitors’ treatment status. We show that commonly used difference-in-means estimators can therefore be severely biased and may even recommend deploying inferior algorithms.
To address this challenge, we develop a structured semiparametric framework that explicitly models the competitive allocation mechanism underlying exposure. Our approach combines an algorithm choice model that characterizes how exposure is allocated across competing content with a viewer response model that captures engagement conditional on exposure. We construct a debiased estimator grounded in the double machine learning framework to recover the global treatment effect of platform-wide rollout. Methodologically, we extend DML asymptotic theory to accommodate correlated samples arising from overlapping consideration sets.
Using Monte Carlo simulations and a large-scale field experiment on a major short-video platform, we show that our estimator closely matches an interference-free benchmark obtained from a costly double-sided experimental design. In contrast, standard estimators exhibit substantial bias and, in some cases, even reverse the sign of the effect.

}

\thispagestyle{empty}

\theARTICLETITLE

\theARTICLEAUTHORS

\vspace{20pt}

\begin{small}  
\noindent\theABSTRACT

\vspace{5pt}

\noindent\emph{Key words}: Algorithmic Interference; Treatment Effect Estimation; Online Content Platforms; Creator-side Randomization; Semiparametric Choice Model; Double/Debiased Estimation and Inference
\end{small}

\theARTICLERULE

\vspace{-60pt}

\blfootnote{We are grateful to Susan Athey, Tat Chan, Wanning Chen, Ryan Dew, Yue Fang, Bryan Graham, David Holtz, Ganesh Iyer, Ramesh Johari, Jussi Keppo, Hannah Li, Ilan Lobel, Xiaojie Mao, Tu Ni, Nian Si, K Sudhir, Yuyan Wang, Zikun Ye, Dennis Zhang, Jinglong Zhao,  conference participants at CODE@MIT 2023, Conference on AI/ML and Business Analytics 2023,  ACIC 2024, QME 2024, China India Insights Conference 2025, UTD FORMS Conference 2026, and seminar attendees at London Business School, University of California Berkeley, The Chinese University of Hong Kong, Carnegie Mellon University, New York University, University of Washington, National University of Singapore, Tsinghua University, The Chinese University of Hong Kong (Shenzhen) for their helpful comments. The short version of this paper has appeared in the ACM Conference on Economics and Computation (EC'24), and we are especially thankful to the anonymous reviewers for their constructive feedback. We are grateful to the WeChat Experimentation Team at Tencent for their generous support, particularly  Darwin Yong Wang, who has helped shape and advance this work's applications in the online platform.
}

\blfootnote{*Corresponding author. 
}

\newpage
\setcounter{page}{1}

\section{Introduction}

Online user-generated content platforms operate two-sided marketplaces connecting creators and viewers. One important source of monetization is through paid promotional traffic, which generates billions of dollars in platform revenue annually.  Creators pay for exposure, and the platform’s allocation algorithms determine which promoted content is shown to which viewers. Because these algorithms govern exposure, they directly shape both creator outcomes and platform revenue. As platforms routinely update these algorithms, evaluating proposed changes before deployment becomes a high-stakes and recurring managerial decision.

To evaluate such updates, platforms commonly rely on creator-side randomized experiments, in which creators and their content are randomly assigned to treatment or control algorithms, and downstream engagement outcomes are compared across groups. This design is natural in promotional settings because allocation policies, such as scoring adjustments, pacing rules and exposure caps, operate at the creator level and directly affect creator outcomes. 
However, the managerial question of interest is not the effect of changing the algorithm for an individual creator while holding others fixed. Rather, platforms seek to evaluate the global treatment effect (GTE) when the new algorithm is deployed to all creators. 



We show that the standard difference-in-means (DIM) estimator, which compares outcomes between treated and control creators, can be severely biased due to algorithmic interference. 
When treated and control items compete for exposure, a change in the scoring rule for some creators mechanically affects the exposure received by others. This competitive allocation structure breaks the standard intuition that randomization guarantees unbiased comparisons. Consequently, an individual creator’s observed outcome depends not only on its own treatment assignment but also on the treatment status of others,  which violates the Stable Unit Treatment Value Assumption (SUTVA) \citep{imbens2004nonparametric}, and leads to interference bias \citep{bajari2021multiple,johari2022experimental,bright2022reducing,goli2023bias}. This bias can lead to wrong business decisions. In a large-scale field experiment we conducted on Weixin Channels, a leading short-video platform, we show that a standard DIM estimator concludes a new algorithm has a significant positive effect on an outcome, while the interference-free ground truth reveals a statistically significant negative effect. Relying on the flawed method would have caused the platform to deploy a worse algorithm.

The algorithmic interference generates two sources of bias in creator-side experiments. First, changes in the scoring rule shift exposure between treated and control creators within the experiment. If treated creators receive higher scores, they ``crowd out'' control items, leading to \emph{content exposure bias}.  
Second, promotional ranking systems are highly personalized. A scoring change is likely to have heterogeneous effects across viewers, leading to systematic differences in audiences exposed to treated vs. control content. Observed outcome differences therefore partially reflect shifts in viewer composition. We refer to this channel as \emph{viewer selection bias}. Importantly, both distortions arise from algorithm competition within the experimental environment and do not reflect the effect of global rollout.

The goal of this paper is to develop a reliable approach for estimating treatment effects in creator-side experiments that accounts for algorithmic interference. To evaluate whether a new promotional ranking algorithm should be adopted, we must compare two counterfactual scenarios: global treatment, in which all creators are scored under the new algorithm, and global control, in which all creators are scored under the status quo rule.
At a high level, our approach explicitly models the competitive allocation mechanism that gives rise to interference bias. The key component is a ``algorithm choice model'' that captures how exposure is allocated across competing content items, conditional on their treatment status, within each consideration set. 
We complement this with a viewer response model that predicts outcomes conditional on exposure. Treatment effects are then recovered by using the estimated models to simulate counterfactual exposure and response patterns under global treatment and global control.

The algorithm choice model is semi-parametric: its structural component enables counterfactual evaluation, while a flexible neural network learns the complex mapping from viewer-content pairs to scores. 
For the second component, a ``viewer response model'' uses another flexible neural network to predict the outcome for any given item-viewer pair once exposure occurs. Building on these components, we develop a ``debiased estimator'' of the global treatment effect. Because the allocation and response models are estimated using flexible machine learning methods, naive plug-in estimates can lead to biased inference. The debiased estimator corrects for this bias and delivers valid statistical inference even in the presence of complex, high-dimensional nuisance estimation  \citep{chernozhukov2018double, farrell2020deep}. 

A key methodological contribution of our paper is extending the DML framework beyond its canonical i.i.d.~samples to accommodate correlated data induced by overlapping consideration sets. When items appear repeatedly across viewer queries, their treatment status is shared rather than independently assigned, leading to sample dependence. We establish valid asymptotic inference for our debiased estimator under sample correlation. 
Our results rely on a mild exposure condition requiring that each item receive only limited average exposure across the viewer population. This assumption is natural in promotional traffic settings, where exposure is typically spread across many competing items.
This extension not only provides theoretical justification for our approach but also broadens the applicability of DML methods to other settings with correlated data, such as panel and marketplace environments.

We first evaluate our approach using Monte Carlo simulations and compare it against several benchmark methods. The difference-in-means (DIM) estimators directly compare outcomes between treated and control units. The pure deep learning estimator treats treatment effect estimation as a flexible prediction problem without imposing structure. The propensity-based methods, including IPW and AIPW, rely on reweighting based on the probability of observing the realized treatment vector within each consideration set. The simulation results confirm our theoretical predictions. The proposed debiased estimator successfully recovers the true treatment effect and provides valid statistical inference. In contrast, the DIM estimators and pure deep learning estimator are biased and fail to recover the true treatment effect. While the propensity-based estimators are unbiased in principle, their variance grows exponentially with the size of the consideration set, leading to unstable and unreliable estimates even when the set is only moderately large.

Beyond Monte Carlo studies, we conduct a large-scale experiment on the Weixin short-video platform to validate the performance of the proposed method. Besides running a creator-side experiment, our validation relies on a double-sided experiment \citep{ye2023cold,su2024long}. Specifically, we partition viewers and creators into three equal-sized distinct sub-universes, where viewers can only access creators within their own sub-universe. One sub-universe runs the creator-side experiment, while the other two operate fully under the treatment and control algorithms, respectively. While this double-sided design provides an interference-free ground truth because creators in the treated and control sub-universes do not compete for viewer attention, it is very costly for routine use: partitioning the market reduces statistical power, degrades market thickness, and requires substantial engineering resources. 

Using data from the standard creator-side experiment, we document clear evidence of both bias channels. Consistent with content exposure bias, treated items account for 56\% of realized exposures despite a 50\% assignment probability. In addition, we find systematic differences in the characteristics of viewers exposed to treated vs. control content, providing direct evidence of viewer selection bias. 
We apply our proposed estimator as well as the benchmark methods to data from the creator-side experiment, and compare the estimates to the ground-truth estimates from the costly double-sided design. Across three outcome measures, we find that the proposed estimator consistently yields results comparable to the ground-truth estimates. In contrast, benchmark estimators exhibit significant bias and, in some cases, even produce effects with the wrong sign, leading to incorrect business decisions. 

Our paper makes two primary contributions. Substantively, we provide a reliable method for evaluating promotional algorithms using standard creator-side experiments, allowing platforms to avoid costly deployment mistakes without resorting to expensive double-sided experimental designs. Methodologically, we introduce a structured neural network framework with debiased machine learning to deliver valid causal inference under algorithmic interference and correlated data. Together, these contributions provide both practical guidance for platform experimentation and theoretical advances for causal inference in algorithmic allocation environments.

\subsection{Related Literature}
\label{sec:related_work}

There is a growing literature studying interference in randomized experiments, where an individual's outcome is affected by others' treatments, violating the Stable Unit Treatment Value Assumption (SUTVA). Interference can lead to biased treatment effect estimates under standard A/B testing framework \citep{savje2021average,hu2021average,farias2022markovian,johari2022experimental,dhaouadi2023price,zhu2024seller}. To correctly estimate treatment effect, most of the existing literature leverages innovative \emph{experimental designs} to address the bias from interference. For example, one can use a clustered randomization design if interference primarily occurs within a cluster (e.g., on social networks) and then randomizes treatment at the cluster level \citep{holtz2020limiting,holtz2023reducing,ugander2013graph,hudgens2008toward}. Without a clear cluster structure, a switchback design can be applied that assigns a treatment condition randomly at the market level across different time periods \citep{bojinov2023design,hu2022switchback,ni2023design,xiong2023bias}. These experimental designs can be very useful but do not apply to all settings, including ours, where the content markets are too interconnected to form clusters and there exist significant time varying factors influencing the outcomes. 

Our work aligns more closely with an alternative approach that proposes innovative \emph{estimators} while leveraging standard experimental design. This requires domain-specific analysis of interference types. For example, in experiments for ranking algorithms, \cite{goli2023bias} address interference by leveraging items with positions close to those under the counterfactual ranking using historical A/B test data. In marketplaces where the platform matches supply and demand via linear programming, \cite{bright2022reducing} propose a shadow-price based estimator to mitigate interference bias. For interference bias that arises from consumers impacting others through limited capacity, \cite{farias2023correcting} model consumer behavior via a Markovian structure to address interference. Our paper proposes an estimator that applies to a common source of algorithmic interference where treated and control units compete for exposure in online platforms.

This paper draws from recent advances in double machine learning and semiparametric inference \citep{newey1994asymptotic,chernozhukov2018double,farrell2020deep,chernozhukov2019semi}.
In particular, the proposed treatment effect estimator directly builds on the doubly robust estimators for semi-parametric models proposed by \cite{farrell2020deep}, where the parametric outcome model is enriched by non-parametric components. We show that the inference results of the debiased estimator in \cite{farrell2020deep} apply to correlated samples, thus extending results from \cite{chernozhukov2018double}, which deals with i.i.d. samples. While in our setting, sample correlation arises from the overlapped items in consideration sets that share treatment status, our inference results can apply to other settings with correlated samples, such as panel settings with repeated observations from the same customer. Since many empirical data can be correlated, this inference result can broaden the applicability of the doubly robust estimator, which has already seen wide adoption in marketing and business research (e.g., \citealt{mummalaneni2022content}, \citealt{ye2023deep}, \citealt{kim2023tv}, \citealt{cheng2023selecting}, \citealt{ellickson2024using}).

\section{Background on Promotional Algorithm Experiments}
\label{sec:background}

We start by providing background on promotional algorithm experiments, which aim to assess whether a new promotional allocation  algorithm should be adopted. We then discuss double-sided randomization as a conceptual “oracle” design and highlight its various limitations.

\subsection{Algorithmic Treatment Effect}
\label{sec:recsys}

Online content platforms rely on algorithmic allocation systems to deliver promotional content to users. 
In our short-video context, once a viewer hits a promotional slot, the platform selects one video from a set of eligible videos depending on viewer and content characteristics. Operationally, the algorithm typically consists of multiple stages, including retrieval, ranking, and re-ranking, that progressively narrow the candidate set. In our short-video context, the algorithm selects a single item for each viewer query.\footnote{Our framework can be extended to slate listing via multiple-item choice models, where the core idea of interference applies. We focus on the single-item case, consistent with our empirical context.}


The central managerial question is whether a proposed algorithmic update should replace the status quo allocation rule. Let the treatment denote the new algorithm and the control denote the existing algorithm. The object of interest is not the effect of treatment on an individual creator holding others fixed, but rather the effect of implementing the new algorithm platform-wide. Therefore, the \emph{treatment effect} of interest is the difference in total expected outcomes (aggregated across all viewer queries and content items) between two counterfactual environments: \emph{global treatment} where all creators receive traffic allocated  by the new algorithm vs. \emph{global control} where all creators receive   traffic by  the status quo algorithm. This estimand is commonly referred to as the Global Treatment Effect (GTE) \citep{johari2022experimental} or the Total Average Treatment Effect (TATE) \citep{goli2023bias}. We provide a formal definition using the potential outcomes framework in Section~\ref{sec:interference_bias}.

\subsection{Double-Sided Randomization as an Oracle Design}
\label{sec:double_sided_experiment}

A natural approach to estimating this treatment effect is to design an experiment that directly mimics the global treatment and global control policies. This can be achieved by a \emph{blocked double-sided design} that partitions the market into two separate sub-universes by randomizing both creators and viewers. In this design,  one sub-universe  contains a subset of viewers and creators operated entirely under  the treatment algorithm, while the other  contains a separate subset of viewers and creators operated under the control algorithm. By ensuring content from one sub-universe is excluded from the other, this design eliminates cross-group interference by construction. Comparing outcomes between these sub-universes therefore yields a direct, unbiased estimate of the treatment effect, making it a conceptual ``oracle'' for identification.

Despite being an interference-free benchmark, the double-sided design is very costly and rarely used for routine experiments for three main reasons. 
First is the statistical cost: platforms run hundreds of concurrent orthogonal experiments, and a double-sided design for each would require splitting the market into ever-smaller sub-universes. This causes sample sizes to decline exponentially, destroying the statistical power of all experiments. Second is the economic cost: partitioning the market leads to ``market thinning'', which degrades the experience for both parties. Creators receive less exposure and viewers are exposed to potentially less relevant content. Third is the engineering cost: implementing and maintaining multiple parallel versions of the entire platform with separate viewers, creators, and algorithms requires significant and ongoing operational resources, introducing complexity and system latency. Aside from these practical limitations, it is also unclear whether double-sided experiments are perfectly suited to recover the true global treatment effect for the entire marketplace. Because equilibrium outcomes may depend on the overall market size, estimates derived from these artificially partitioned small sub-universes may not generalize to the full platform. For these reasons, platforms often rely on the more practical creator-side experiments.

\subsection{Creator-side Randomization}

To evaluate promotional algorithms, creator-side randomization, in which creators (and their content items\footnote{In practice, randomization occurs at the creator level, with content inheriting the same status. Throughout the paper, we use ``creator'' and ``content items'' interchangeably.}) are assigned to treatment or control, is natural for two reasons. First, the primary objective in promotional settings is to measure the effect on creator-level outcomes, such as total engagement or return on promotion. Second, promotional allocation algorithms need to operate under creator-level constraints, such as budgets, pacing rules, or exposure caps, which makes it necessary to randomize across creators.

Indeed, viewer-level randomization, which assigns viewers to the treatment or control algorithms, does not apply to evaluate promotional algorithm updates. Both the outcome of interest and constraints are defined at the creator level and cannot be clearly evaluated or enforced by randomizing viewers. Moreover, viewer-level randomization introduces its own interference. Under creator-level budget or pacing constraints, exposure delivered to treated viewers mechanically affects the exposure available to control viewers. In addition, engagement from treated viewers changes content performance signals that affect subsequent exposure for control viewers, which is documented as symbiosis bias in literature \citep{brennan2025reducing}. For these reasons, viewer-level randomization is not suitable to evaluate promotional algorithm updates.

\section{Interference Bias from Creator Experiments}
\label{sec:interference_bias}

In this section, we first formalize the target estimand, the global treatment effect. We then describe  creator-side experiments and the standard difference-in-means (DIM) estimators that compare treated and control outcomes. Next, we discuss how interference arises and highlight two sources of bias: content exposure bias and viewer selection bias. We illustrate how these biases distort DIM estimators through both numerical examples and theoretical analysis. 

\subsection{Target Estimand}

The target estimand is the global treatment effect, which is the difference in aggregate outcomes under full deployment of the new algorithm relative to the status quo. We formalize this objective using the potential outcomes framework \citep{imbens2015causal}. Let $\calC$ denote the set of all content items. For each content item $c$, let $w_c\in\{0,1\}$ denote its treatment status,\footnote{For notational simplicity, we restrict attention to a binary treatment; the framework extends naturally to multiple treatments.} and let $\mathbf{w}_{-c}$ denote the treatment assignments of all other items. The full treatment assignment vector is $\textbf{w}$. For each content $c$, the potential outcome $r_c(\textbf{w})$ is determined by a viewer $v$ that generates a response $y$ (e.g., like or comment). We set $r_c(\textbf{w}) = y$ if content $c$ is exposed to viewer $v$, and $0$ otherwise.\footnote{We set $0$ as the default outcome because in empirical applications outcomes typically measure engagement (e.g., watch time or interactions), which is naturally zero when the item is not exposed to viewers. Other default values could be adopted in different applications without affecting the analysis.}

The value of a policy $\pi$, which specifies a treatment assignment rule, given viewer population $\bbP_v$, is defined as the expected outcome across all content items \citep{johari2022experimental, goli2023bias}:
\begin{equation}
    \label{eq:policy_value}
    Q(\pi; \bbP_v):=\sum_{c\in\calC} \bbE_{v\sim\bbP_v, \bfw\sim \pi}\bb{  r_c(\textbf{w})}. 
\end{equation}
Let $\pi_\one$ denote the global treatment policy where $(w_c, \bfw_{-c})=\one$ and $\pi_\zero$ denote the global control policy where $(w_c, \bfw_{-c})=\zero$. The platform’s adoption decision depends on whether the policy value under global treatment exceeds that under global control. We define the treatment effect as:
\begin{equation}
    \label{eq:ate_definition}
    \tau:= Q(\pi_\one;\bbP_v)-Q(\pi_\zero;\bbP_v).
\end{equation}
This estimand $\tau$ is the global treatment effect of the new algorithm, as it represents the expected impact of a global rollout across all creators.

\subsection{Creator-Side Experiments and DIM Estimators}

We describe the data structure from creator-side experiments. Each observation $i$ corresponds to a viewer query in which one content item is exposed, represented by the tuple $(V_i,\vv{C}_i,\vv{W}_i,k_i^*,Y_i)$. $V_i$ denotes the viewer characteristics. The consideration set $\vv{C}_i=\{C_{i,1},\dots,C_{i,K}\}$ contains the set of $K$ content items that reach the stage before randomization and are therefore unaffected by treatment assignment.\footnote{The model easily accommodates consideration sets of varying size; we assume a fixed $K$ for notational convenience.} $\vv{W}_i=\{W_{i,1},\dots,W_{i,K}\}$ records treatment assignment, with each $W_{i,k}$ drawn i.i.d.~from a Bernoulli distribution with treatment probability $q$. The algorithm exposes item $k_i^*$, and $Y_i$ is the viewer's response. We have a sample of $n$ viewer queries. 


Difference-in-means (DIM) estimators are commonly used to estimate treatment effects by comparing outcomes between treated and control items. We focus on two DIM estimators commonly used in the literature \citep{horvitz1952generalization,hajek1971comment}. Given $n$ samples ${(V_i,\vv{C}_i,\vv{W}_i,k_i^*,Y_i)}_{i=1}^n$ from a creator-side experiment with treatment probability $q$:
\begin{align}
   \text{Horvitz-Thompson Estimator:} \quad & \hat{\tau}^{HT-DIM}_n := \frac{\sum_{i=1}^n W_{i,k_i^*}Y_i}{nq} - \frac{\sum_{i=1}^n(1-W_{i,k_i^*})Y_i}{n(1-q)},\label{eq:HT-DIM}\\
\text{Hájek Estimator:} \quad &
    \hat{\tau}^{HA-DIM}_n := \frac{\sum_{i=1}^nW_{i,k_i^*}Y_i}{\sum_{i=1}^n W_{i,k_i^*}} - \frac{\sum_{i=1}^n(1-W_{i,k_i^*})Y_i}{\sum_{i=1}^n(1-W_{i,k_i^*})},  \label{eq:HA-DIM}
\end{align}
where $Y_i$ is the outcome for the exposed item $k_i^*$, and $W_{i,k_i^*}$ denotes its treatment status. The two estimators differ only in their normalization. The Horvitz–Thompson estimator normalizes by the assignment probability $q$, while the Hájek estimator normalizes by realized exposures. Next, we show why DIM estimators are biased under creator-side experiments.

\subsection{Interference and Sources of Bias}
\label{sec:recommender_interference}

In creator-side experiments, interference arises because treated and control items compete for exposure within the same consideration set. As a consequence, an item's outcome depends not only on its own treatment status but also on the treatment status of other items in the set, violating the Stable Unit Treatment Value Assumption (SUTVA). More specifically, interference leads to two forms of bias. First, suppose the treatment algorithm boosts the scores of treated items. Then these higher-scoring treated items are now more likely to be exposed, ``crowding out'' the control items. This leads to \emph{content exposure bias}: the share of treated items actually exposed to viewers differs from the treatment assignment probability. Second, since allocation algorithm is highly personalized, it might learn that the treatment is particularly effective for certain types of viewers (e.g., highly engaged or more valuable viewers). This leads to \emph{viewer selection bias}: the compositions of viewers exposed to treated and control items differ systematically. As a result, the outcomes of treated and control items are not directly comparable because they are drawn from different viewer populations.

Such interference directly affects both DIM estimators. With treated and control items exposed to systematically different viewers, both the Horvitz-Thompson and Hájek estimators are subject to viewer selection bias. In addition, the Horvitz-Thompson estimator is affected by content exposure bias, since it normalizes by the assignment probability $q$ rather than realized exposures. As we will show in Section \ref{sec:toy_example}, this does not imply that the Hájek estimator is less biased: the magnitude of bias depends on whether the two sources of bias act in the same or opposite directions.

\subsubsection{Illustrating the Bias with A Toy Example.}
\label{sec:toy_example}

We use a simple example to illustrate the two sources of interference bias. Suppose each item is assigned to the treatment or control with equal probability, and treated items receive a positive score uplift. Because treated and control items compete for exposure, treated items crowd out controls and are exposed more than 50\% of the time, despite 50/50 treatment assignment. This gap between assignment and realized exposure generates content exposure bias. Not accounting for the actual exposure leads to an upward bias in the treatment effect.

Now consider a scenario where the treatment uplift varies across viewers. Viewers differ in their baseline tendency to engage with promotional content (e.g., highly engaged viewers are more likely to make a purchase). Consider viewers with high vs. low baseline tendencies. When uplift is larger for high-baseline viewers (e.g., the algorithm prioritizes conversions), treated items are more likely to be exposed to high-baseline viewers relative to control items. Not accounting for the viewer selection bias leads to an upward bias of the treatment effect. 
The opposite happens when treatment uplift is larger for low-baseline viewers. Treated items are more likely to reach viewers with lower baseline tendency, and viewer selection bias leads to a downward bias in the treatment effect. 

\begin{figure}[t]
    \centering
     \caption{Illustrative Example of Inference Bias in Creator-Side Randomization}
    \includegraphics[width=.8\linewidth]{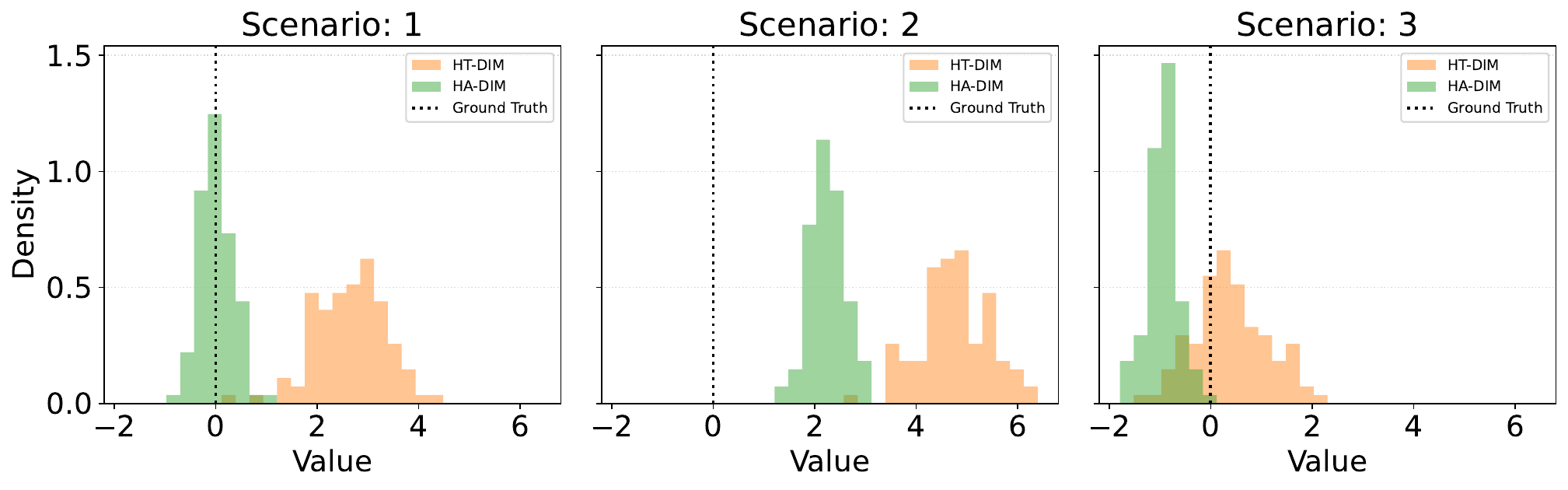}
    \label{fig:interference_bias}
    \vspace{0.5em}
    \begin{minipage}{0.9\linewidth}
    \footnotesize \textit{Notes:} HT-DIM = Horvitz–Thompson DIM estimator; HA-DIM = Hájek DIM estimator.\\
    Left figure: uplift independent of viewer type $\rightarrow$ positive exposure bias only. Middle figure: uplift larger for high-baseline viewers $\rightarrow$  positive exposure bias and positive viewer selection bias. Right figure: uplift larger for low-baseline viewers $\rightarrow$ positive exposure bias and negative viewer selection bias. 
    \end{minipage}
\end{figure}

Figure~\ref{fig:interference_bias} summarizes these patterns using a numerical example. In the left figure, when the uplift does not depend on viewer types, the Hájek estimator recovers the true effect by accounting for realized exposures, while the Horvitz–Thompson estimator overestimates due to content exposure bias. In the middle figure, when the uplift is larger for high-baseline viewers, both biases lead to upward-biased estimates. The Horvitz–Thompson estimator shows a larger bias since it reflects both item exposure and viewer selection effects, while Hájek estimator reflects only viewer selection bias. In the right figure, when the uplift is higher for low-baseline viewers, viewer selection bias and content exposure bias act in opposite directions. Hájek, affected only by viewer selection bias, underestimates the effect, while Horvitz–Thompson, affected by both, ends up slightly overestimating the effect.




\subsection{Theoretical Analysis of DIM Estimators}
\label{sec:bias_dim}
We now formalize the intuition developed above by characterizing the asymptotic limits of DIM estimators under creator-side experiments. We show how content exposure bias and viewer selection bias affect these estimators in large samples.

Let $e^*(q \mid v)$ be the probability that viewer $v$ is exposed to a treated item when the treatment assignment probability is $q$. The population-level probability of exposure is then $e^*(q) = \mathbb{E}_{v \sim \mathbb{P}_v} \left[ e^*(q \mid v) \right]$, where $\bbP_v$ is the distribution of viewers. Let $\bbP_{v}^{1,q}$  and $\bbP_{v}^{0,q}$ denote the distributions of viewers exposed to treated and control items, respectively. Content exposure bias occurs whenever $ e^*(q)\neq q$. Viewer selection bias arises whenever $\bbP_v^{1,q} \neq \bbP_v^{0,q}$, indicating that viewers exposed to treated vs. control differ systematically from the overall population.

We assume that no single item appears in an excessively large fraction of consideration sets (formal condition outlined in Assumption \ref{assump:item_appearance} in Appendix~\ref{appendix:dim_bias_proof}). This assumption is plausible for promotional content, which is subject to budget and traffic allocation constraints that naturally prevent any single item from dominating viewer queries.


 \begin{theorem}[Bias of DIM Estimators]
 \label{thm:dim_bias}
 Suppose Assumption \ref{assump:item_appearance} holds. Under creator-side randomization with treatment probability $q\in(0,1)$, the Horvitz-Thompson estimator $\hat{\tau}^{HT-DIM}_n$ converges in probability to  $\tau^{HT}$, where
    \begin{align}
     \label{eq:ht_dim_convergence}
      \quad \tau^{HT} := \frac{e^*(q)}{q}  \sum_{c\in\calC}\bbE_{v\sim \bbP_{v}^{1,q}, \bfw\sim \calB(q)}  [r_c\mid w_{k^*}=1]  
 &- \frac{1-e^*(q)}{1-q}\sum_{c\in\calC}\bbE_{v\sim \bbP_{v}^{0,q}, \bfw\sim \calB(q)}[r_c\mid w_{k^*}=0],
    \end{align}
    with $w_{k^*}$ denoting the treatment status of exposed item and $\calB(q)$ denoting the Bernoulli trial with assignment probability $q$.
    The Hájek estimator $ \hat{\tau}^{HA-DIM}_n$ converges in probability to $ \tau^{HA}$, where
    \begin{equation}
      \label{eq:ha_dim_convergence}
      \tau^{HA} := \sum_{c\in\calC}\bbE_{v\sim \bbP_{v}^{1,q}, \bfw\sim \calB(q)}[r_c\mid w_{k^*}=1]  - \sum_{c\in\calC}\bbE_{v\sim \bbP_{v}^{0,q}, \bfw\sim \calB(q)}[r_c\mid w_{k^*}=0].
    \end{equation}
\end{theorem}

The proof is given in Appendix~\ref{appendix:dim_bias_proof}. For comparison, the true treatment effect is
\begin{equation}
\label{eq:ate_property}
  \tau =       \sum_{c\in\calC}\bbE_{v\sim \bbP_{v}, \bfw=\one}[r_c\mid w_{k^*}=1]  - \sum_{c\in\calC}\bbE_{v\sim \bbP_{v}, \bfw=\zero}[r_c\mid w_{k^*}=0].
\end{equation}
This result formalizes the two sources of bias discussed previously. Relative to the true treatment effect, the bias in the Hájek estimator is driven by the difference between the viewer populations ($\bbP_v^{1,q} \neq \bbP_v^{0,q}$), which is precisely viewer selection bias. The Horvitz-Thompson estimator is subject to this same bias, and is further affected by the deviation of the realized exposure from the assignment probability ($e^*(q) \neq q$), which is content exposure bias.

\section{Modeling Interference}
\label{sec:modeling}

In this section, we introduce a framework to model the interference pathway in creator-side experiments. An algorithm choice model captures how treated and control items compete for exposure, while a viewer response model describes outcomes once exposure occurs. Combining the two enables counterfactual analysis of alternative treatment policies to estimate treatment effects. We conclude by discussing the scope and robustness of the modeling approach.

\subsection{Algorithm Choice Model}

Upon receiving a viewer query, the algorithm chooses one content item for viewer $V_i$ from the consideration set $\vv{C_i}$. The evaluation of an item $C_{i,k}$ depends on its treatment status $W_{i,k}$, where $W_{i,k}=1$ indicates the treatment algorithm and $W_{i,k}=0$  the control algorithm. 


We approximate the algorithm allocation with a choice model built around a latent score $S_{i,k}$ of each item $C_{i,k}$ for viewer $V_i$. Intuitively, this score summarizes several stages of the algorithmic pipeline into a single measure that depends on viewer and content characteristics and the treatment status. The score functions are estimated to rationalize the observed exposure. Specifically, let the latent score $S_{i,k}$ be:
\begin{equation}
\label{eq:score}
    S_{i,k} = s_0(V_i, C_{i,k}) + W_{i,k}\cdot s_1(V_i, C_{i,k}) + \epsilon_{i,k}.
\end{equation}
The score function $s_0(\cdot,\cdot)$ represents baseline evaluation under the control algorithm, while $s_1(\cdot,\cdot)$ captures the treatment uplift. Both functions are parameterized by neural networks to flexibly approximate personalized exposures across diverse viewer–content pairs \citep{covington2016deep}. The error term $\epsilon_{i,k}$ accounts for the randomness in algorithmic allocation and the approximation errors of neural networks in representing the actual pipeline. We assume that the error term $\epsilon_{i,k}$ follows an i.i.d. Type 1 Extreme Value distribution, and the item with the highest score or ``utility'' in the consideration set is exposed to viewer $V_i$.

Under this formulation, the probability that item $k$ is exposed to viewer $V_i$ given consideration set $\vv{C_i}$ and treatment allocation $\vv{W_i}$ follows a multinomial logit form:
\begin{align}
\label{eq:choice_model}
p_{k}(V_i, \vv{C_i},\vv{W_{i}}; s_0, s_1):=\bbP\bp{k_i^*=k} 
= \frac{e^{s_0(V_i, C_{i,k}) +  W_{i,k}\cdot s_1(V_i, C_{i,k})}
}{\sum_{k'=1}^K e^{s_0(V_i, C_{i,k'}) +  W_{i,k'}\cdot s_1(V_i, C_{i,k'})}
}.
\end{align}
The choice model explicitly captures interference in the algorithmic allocation process. An item's exposure probability depends not only on its own treatment status but also on the treatment assignment of the other items in the set. 

The algorithm choice model is semi-parametric: it combines a structured logit choice model with flexible neural networks. The structural component enables counterfactual analysis under alternative treatment policies, which is difficult for fully nonparametric black-box models that often fail to generalize to counterfactual settings. 
The flexible neural network parameterization of utilities allows rich heterogeneity across viewer–item pairs. Thus, the model can capture complex viewer–content relationships, allowing the model to approximate the behavior of real-world algorithms. 
 

Beyond enabling treatment effect estimation, the algorithm choice model can be practically valuable for platforms. By approximating the complex pipeline with a distilled model \citep{hinton2015distilling}, it can reduce latency when serving viewers online during peak traffic \citep{tang2018ranking}. Moreover, the learned score functions summarize the overall value of items to viewers, which can serve as a useful metric for evaluation.

\subsection{Viewer Response Model} 

To evaluate counterfactual treatment policies, we also need to model how viewers respond once an item is exposed. Outcomes of interest may include view time, likes, or conversions. We model the response $Y_i$ of viewer $V_i$ to exposed item $C_{i,k_i^*}$ as
\begin{equation}
\label{eq:viewer_outcome}
    Y_i = z(V_i, C_{i,k_i^*}) + \zeta_i,
\end{equation}
where $z(\cdot, \cdot)$ is a flexible function, parameterized by a neural net, and $\zeta_i$ is i.i.d.~noise. Separate response models can be trained for different outcomes.

This specification relies on two simplifying assumptions. First, conditional on the exposed item, outcomes do not depend on the item’s treatment status. This is reasonable since viewers are typically not aware of changes in algorithms. Second, it assumes that outcomes do not depend on prior content exposure.
In principle, the viewer response model could be extended to account for previous content interactions by incorporating past viewer consumption into the characteristics of the viewer query sample $V_i$, but doing so will introduce considerable complexity.

In practice, platforms often maintain pre-trained response predictors, which we leverage in our empirical setting. It is worth noting that our framework does not require these predictions to be unbiased. As detailed in Section~\ref{sec:main_estimator}, we introduce a debiased estimator to correct for potential errors in the nuisance predictions, which  is made possible by the treatment randomization in creator-side experiments.

\subsection{Counterfactual Analysis for Treatment Effect}
\label{sec:counterfactual}

The algorithm choice model in Equation~\eqref{eq:choice_model} and the viewer response model in Equation~\eqref{eq:viewer_outcome} together allow us to evaluate counterfactual treatment assignment policies. For any policy $\pi$, the expected outcome, or policy value, is (see Appendix \ref{appendix:policy_value} for detailed derivation):
\begin{equation}
\label{eq:policy_value_model}
   Q(\pi)=~ \bbE_{(V_i, \vv{C}_{i},\vv{W_i}\sim \pi)}\bb{ \sum_{k=1}^K z(V_i, C_{i,k}) \cdot p_{k}\bp{V_i, \vv{C_i},\vv{W_{i}}; s_0, s_1}}.  
\end{equation}
The inner term calculates the sum of each item's exposure probability multiplied by its expected response. The outer expectation is taken over the distribution of the viewer, consideration set, and treatment assignment.

The treatment effect can be estimated by comparing the policy values of global treatment $Q(\pi_{\one})$ and global control $Q(\pi_{\zero})$: 
\begin{equation}
\label{eq:ATE}
        \tau = Q(\pi_{\one})-Q(\pi_{\zero})= \bbE_{(V_i, \vv{C}_i)}\bb{\sum_{k=1}^K z(V_i, C_{i,k}) \cdot  \bp{p_{k}(V_i, \vv{C_i},\vv{W_{i}}=\mathbf{1}; s_0, s_1) - p_{k}(V_i, \vv{C_i},\vv{W_{i}}=\mathbf{0}; s_0, s_1) }}.
\end{equation}
Here the difference in exposure probabilities quantifies how the likelihood of showing each content item $C_{i,k}$ changes when going from global control to global treatment.

Finally, the framework is flexible to extend beyond the global treatment effect by comparing global treatment versus control. It can be applied to estimate heterogeneous effects on different types of creators. For a subgroup $\calC_0$, the corresponding subgroup treatment effect can be obtained by summing over items in the subgroup of interest for $C_{i,k}\in\calC_0$. 
We can evaluate policies that allocate a fraction of traffic to the treatment algorithm (e.g., as the platform scales up the proportion of treatment algorithm). One can also compute the treatment effect in relative terms with $\frac{Q(\pi_{\one})-Q(\pi_{\zero})}{Q(\pi_{\zero})}$. 

\subsection{Scope and Assumptions of the Proposed Approach}
\label{subsec:scope}

Our framework is well suited for creator experiments occurring in the later, re-ranking stage of an algorithmic allocation system. At this stage, items compete for exposure within a relatively small consideration set, which allows for tractable choice model estimation. The approach is less suited for early retrieval stages that involve very large candidate pools where the large number of options makes the choice model difficult to estimate reliably. Furthermore, our analysis focuses on short-run outcomes, taking content and queries as given. We abstract away from other potential forms of interference, such as viewer-side temporal dynamics \citep{farias2023correcting} or long-run content supply effects. Capturing these important but distinct dynamics would require different modeling approaches and are left as avenues for future work.

The proposed choice model relies on two parametric assumptions. First, we assume the choice probabilities follow a multinomial logit form. This is a standard workhorse model in the discrete choice literature, arising from the assumption of Gumbel-distributed error terms on the each item's latent utility. While parametric, the resulting softmax function is highly flexible and can approximate any categorical distribution \citep{cervera2021uncertainty,ye2023deep}, making it a reasonable choice given the stochastic nature of online algorithmic allocation. Second, we assume that the score of each item  has an additive structure that depends only on its own characteristics and treatment status: $  \bar{S}_{i,k} = s_0(V_i, C_{i,k}) +W_{i,k}\cdot s_1(V_i, C_{i,k})$. This additive separability allows us to isolate the treatment uplift $s_1$ as a distinct function. A fully general model, where an item's score depends on the treatment status of all other items, would be much computationally costly, with the complexity scaling exponentially with the consideration set size. Given the strong empirical fit of the proposed specification (see Section~\ref{sec:empirical}), introducing additional flexibility is unlikely to materially improve performance while substantially increasing computational burden. 


A potential concern lies in possible errors in representing the algorithm choice model and the viewer response model. These models may be imperfect due to factors such as misspecification, omitted features, or neural network approximation errors. As we explain in Section~\ref{sec:main_estimator}, our framework is designed to be robust to such imperfections. The debiasing procedure exploits the randomness in treatment assignment to correct first-order bias arising from imperfectly estimated nuisance functions, thereby preserving the consistency and inference validity of the treatment effect estimate. However, this robustness depends on the nuisance models being ``good enough'' approximations. If they are severely misspecified, they may fail to achieve the convergence rate required for the debiasing theory to hold.



\section{Estimation and Inference Procedure}

In this section, we describe the estimation and inference procedure of the proposed model. We start with the estimation of the algorithm choice and viewer response models. We then construct a debiased estimator for the global treatment effect that remains valid even when the nuisances converge at slower-than-$\sqrt{n}$ rates. 

\subsection{Model Estimation}

\label{sec:estimation}
We start by estimating the nuisance components that characterize the data-generating process as described in Section \ref{sec:modeling}. These nuisance components include the algorithm score components ($s_0$, $s_1$), which capture how the platform assigns exposure probabilities to content items under control and treatment, and viewer response $z$, which captures how viewers react once an item is exposed. We use deep feedforward neural networks with ReLU activation for their approximation guarantees \citep{farrell2020deep,farrell2021deep}. Below we describe the estimation procedure, the conditions required for identification, and the convergence guarantees.

The algorithm scores $(s_0,s_1)$ are estimated by minimizing cross-entropy loss, which is equivalent to maximizing the multinomial logit likelihood. Intuitively, this procedure learns the score functions so that the choice model best replicates actual exposures. For a viewer query $i$, we observe viewer~$V_i$ with consideration set $\vv{C_i}$ and treatment status $\vv{W_i}$ with realized exposure $k_i^{*}$. The loss is: 
\begin{align}
    \label{eq:cross_entropy}
    \ell_1(V_i,\vv{C_i},\vv{W_i}, k_i^*;&\tilde{s}_0, \tilde{s}_1):=~ -\log\bp{p_{k_i^*}\bp{V_i, \vv{C_i}, \vv{W_i};  \tilde{s}_0, \tilde{s}_1}}
\end{align} 
We estimate the nuisance components $\hat{s}_0, \hat{s}_1 \in\calF_{DNN}$ by minimizing the empirical loss:
\begin{align}
\label{eq:learn_score_functions}
    \bp{\hat{s}_0, \hat{s}_1} \in &\argmin_{\tilde{s}_0, \tilde{s}_1\in\calF_{DNN}}~\frac{1}{n}\sum_{i=1}^n \ell_1\bp{
    \cdot}.
\end{align}


For the viewer response model $z$, we estimate the mapping from exposed viewer–item pairs to outcomes. When viewer $V_i$ gets exposed to content item $C_{i,k^*_i}$, we observe the outcome $Y_i$. Estimating the viewer response function is a standard machine learning task. For continuous outcomes, we use the mean squared error loss for any candidate function $\tilde{z}$:
\begin{equation}
\label{eq:loss2}
    \ell_2(V_i, C_{i,k^*_i}, Y_i; \tilde{z}) = \bp{\tilde{z}(V_i, C_{i,k^*_i}) - Y_i}^2.
\end{equation}
For categorical outcomes, one can use the cross entropy loss. With either type of loss function, we get the estimated $\hat{z}\in\calF_{DNN}$ by minimizing the loss: 
\begin{equation}
    \hat{z} \in \argmin_{\tilde{z}\in\calF_{DNN}} \frac{1}{n}\sum_{i=1}^n \ell_2(
    \cdot).
\end{equation}

To ensure that these nuisance functions are well defined, we impose an overlap condition (Assumption~\ref{assump:exposure_overlap} in Appendix~\ref{appendix:positive_exposure}), which ensures that each item in the consideration set has a positive probability of being exposed under both treatment and control. Under this assumption, the nuisance functions are identified, as formalized in Proposition~\ref{prop:identification} and proved in Appendix \ref{appendix:identification_proof}.
\begin{proposition}[Identification]
\label{prop:identification}
Suppose the algorithm choice behavior follows the semi-parametric form in Equation~\eqref{eq:choice_model}, and the viewer response model follows the nonparametric form in Equation~\eqref{eq:viewer_outcome}. Under Assumption \ref{assump:exposure_overlap}, the nuisance functions $(s_0,s_1,z)$ can be nonparametrically identified, up to a location normalization of the baseline score $s_0(V_i, C_{i,1})\equiv 0$.
\end{proposition}


\subsection{Debiased Estimator}
\label{sec:main_estimator}

Having estimated the nuisance components, we next construct an estimator of the treatment effect. We begin by describing a direct plug-in approach and then introduce our debiased (DB) estimator. Because the nuisance components estimated via neural networks generally converge at rates slower than $\sqrt{n}$,  the direct plug-in approach  generally fails to deliver valid statistical inference.
This limitation is important in practice because platforms do not base algorithm deployment decisions on point estimates alone. Reliable inference is needed to determine whether an estimated gain or loss is statistically meaningful and whether it could plausibly be zero or negative. Without valid uncertainty quantification, decision-makers risk acting on statistical noise.
The debiased estimator, by contrast, corrects for the bias induced by nuisance estimation errors through Neyman orthogonality and enables statistically valid inference. Formal results are presented in Section~\ref{sec:asymptotic}.

We start with a straightforward plug-in estimator.  With the estimated algorithm choice and viewer response models, one can directly plug in the estimated components into the GTE formulation in Equation~\eqref{eq:ATE}. For each viewer-consideration set pair $(V_i, \vv{C_i})$, the direct plug-in estimate $\mu$ is:
\begin{align}
    &\mu(V_i,\vv{C_i};\hat{s}_0, \hat{s}_1,\hat{z}) = \sum_{k=1}^K \hat{z}(V_i, C_{i,k})
     \big[p_{k}(V_i, \vv{C_i}, \vv{W_i}=\one;  \hat{s}_0, \hat{s}_1) - p_{k}(V_i, \vv{C_i}, \vv{W_i}=\zero;  \hat{s}_0, \hat{s}_1)\big]
\end{align}
Averaging $\mu(\cdot)$ across the sample produces a plug-in estimator of the global treatment effect. If the components $({s}_0, {s}_1,z)$ are fully parametric, their estimates can achieve  $\sqrt{n}$-consistency and we can directly rely on the plug-in estimator for treatment effect estimation. However, with these components being approximated by neural networks, the direct plug-in is generally not $\sqrt{n}$-consistent and often converges at slower rates, resulting in biased inference.

To overcome the limitation of the direct plug-in estimator, we propose a debiased estimator that is $\sqrt{n}-$consistent and asymptotically normal. The key idea is to adjust the plug-in estimator with a correction term that removes first-order bias from nuisance estimation. Following \cite{farrell2020deep}, for each observation, the debiased estimator $\psi$ is: 
\begin{align}
\label{eq:debiase_estimate}
\psi_i^{DB}=~\mu(V_i,\vv{C_i}; \hat{s}_0, \hat{s}_1,\hat{z}) - \underbrace{\nabla\mu(V_i, \vv{C_i}; \hat{s}_0, \hat{s}_1,\hat{z})^T H(V_i,\vv{C_i};\hat{s}_0, \hat{s}_1,\hat{z})^{-1} \nabla\ell(V_i, \vv{C_i}, \vv{W_i}, k^*_i, Y_i; \hat{s}_0, \hat{s}_1,\hat{z})}_{\text{debiasing term}}.
\end{align}
The first term is the plug-in estimate based on the fitted nuisance functions, and the second term removes the first-order bias introduced by imperfect nuisance estimation. $\nabla\mu$ is the gradient of the plug-in estimator $\mu$ with respect to the nuisance components. $\nabla\ell$ is the gradient of the total loss $\ell$, which combines $\ell_1$ from the algorithm choice model and $\ell_2$ from the viewer response model, with respect to the nuisance functions. $H$ is the expected Hessian of the loss $\ell$ with respect to the nuisances, with the expectation taken over treatment assignment $\vv{W_i}$ under the creator-side randomization. Explicit expressions and derivations are given in Appendix \ref{appendix:debiased_term} and Appendix \ref{appendix:gradients}. 

Intuitively, the gradient $\nabla \mu$ captures the sensitivity of the treatment effect estimate to small perturbations in nuisance components.
The gradient $\nabla\ell$ reflects how nuisance estimation errors affect the total loss, which is scaled by the curvature of the loss function $H^{-1}$.\footnote{The validity of the debiasing step relies on the randomization of treatment assignment in creator-side experiments, which guarantees invertibility of the Hessian $H$ under Assumption \ref{assump:exposure_overlap} (bounded scores). Formal results and proofs on Hessian invertibility are provided in Appendix \ref{appendix:invertible_H}.} Together, these terms form the correction $\nabla \mu^{\top} H^{-1} \nabla \ell$, which offsets the first-order bias in the plug-in estimator and ensures that the final estimate is locally robust to nuisance estimation errors. 

 

We next describe how the components of the debiasing term can be computed in practice. Both $\nabla\mu$ and $\nabla\ell$ can be explicitly written down and calculated (see Appendix~\ref{appendix:gradients} for details). When an explicit form is not available, these derivatives can be obtained through numeric differentiation. The expected Hessian $H$ can be explicitly computed in our case, unlike the general setup of \cite{farrell2020deep}, because the platform’s treatment assignment mechanism is known. But the computational complexity of explicit calculation scales exponentially with the size of the consideration set. When the size of the consideration set is manageable, as in the simulation studies in Section \ref{sec:sim}, we compute the exact expectation. When the consideration set is large, as in our empirical study in Section \ref{sec:empirical}, we approximate the expected Hessian via Monte Carlo simulations: specifically, we sample the treatment assignment vector $\vv{W_i}$ from the creator-side randomization and take the empirical mean of the resulting Hessians to estimate the expectation. 

The overall debiased (DB) estimator $\hat{\tau}_n^{DB}$ takes the average of $\psi_i^{DB}$ across all observations:
\begin{equation}
\label{eq:dr_estimator_general}
    \hat{\tau}_n^{DB} :=\frac{1}{n}\sum_{i=1}^n \psi_i^{DB}.
\end{equation}
The standard error of the DB etimator $\hat{\tau}_n^{DB}$ is estimated as $n^{-1/2}(\widehat{V}^{DB}_n)^{1/2}$, where the estimated variance $\widehat{V}^{DB}_n$ is computed as:
\begin{equation}
  \label{eq:dr_variance_general}
    \widehat{V}^{DB}_n =
    \frac{1}{n}\sum_{i=1}^n \bp{ \psi_i^{DB} - \hat{\tau}_n^{DB}}^2. 
\end{equation}

Following \cite{chernozhukov2018double} and \cite{farrell2020deep}, we employ sample splitting and cross-fitting to estimate the nuisance components. Specifically, the data are partitioned into $K$ folds. For each fold $k$, the nuisance functions $(\hat{s}_0^{(-k)}, \hat{s}_1^{(-k)}, \hat{z}^{(-k)}, \hat{H}^{(-k)})$ are estimated using all observations except those in fold $k$. The treatment effect is then estimated on the held-out fold to obtain $\hat{\tau}_n^{(k)}$. The cross-fitted treatment effect estimator is the average across $K$ folds: $\hat{\tau}^{DB}_n = \frac{1}{K}\sum_{k=1}^K \hat{\tau}_n^{(k)}$, with estimated variance  $\widehat{V}_n =\frac{1}{K}\sum_{k=1}^K \widehat{V}_n^{(k)}$.

\subsection{Asymptotic Results under Correlated Samples}
\label{sec:asymptotic}

We now turn to the asymptotic properties of the debiased estimator. We start with the convergence rates of the estimated nuisance components, then establish the Neyman orthogonality of the debiased estimator. We derive the $\sqrt{n}-$consistency and asymptotic normality, which together guarantee valid inference of the debiased estimator. A key contribution of our theoretical analysis is to extend double machine learning asymptotics to correlated data. 

We start with the convergence rates of the nuisance components estimated with neural networks. Leveraging recent results on the approximation power of deep neural networks \citep{farrell2020deep,farrell2021deep}, we provide convergence guarantees. The proof is provided in Appendix \ref{appendix:score_convergence}. 

\begin{proposition}[Convergence]
\label{prop:convergence}
Suppose the algorithm choice follows the semi-parametric form in Equation~\eqref{eq:choice_model}, and the viewer response model follows the nonparametric form in Equation~\eqref{eq:viewer_outcome} and is uniformly bounded. Suppose Assumption \ref{assump:exposure_overlap} and Assumption \ref{assump:smoothness} (in Appendix \ref{appendix:nuisance_estimation}) hold. Let $p$ be the smoothness parameter of the true nuisance functions $(s_0, s_1, z)$ and $d$ the dimension of the viewer-content covariates. If the neural networks $\hat{s}_0,\hat{s}_1, \hat{z}$ have width $J\asymp O(n^{d/2(p+d)}\log^2n)$ and depth $L \asymp \log n$, then for sufficiently large $n$,  with probability $1 - \exp(n^{d/(p+d)}\log^8 n)$,
\begin{equation*}
    \|\hat{s}_0 - s_0\|_{L_2} + \|\hat{s}_1 - s_1\|_{L_2}  + \|\hat{z} - z\|_{L_2} = O\bp{n^{-\frac{p}{p+d}}\log^8 n + \frac{\log\log n}{n}}.
\end{equation*}
\end{proposition}
Proposition~\ref{prop:convergence} characterizes the convergence rate of the nuisance components. For typical levels of smoothness $p$ and input dimension $d$, the convergence rate $n^{-\frac{p}{p+d}}$ lies between the parametric rate $n^{-\frac{1}{2}}$ and the slower nonparametric rate $n^{-\frac{1}{4}}$. This result is useful for two reasons. First, the direct plug-in estimator would inherit the slower convergence and therefore fail to achieve $\sqrt{n}$ consistency. Second, the debiased estimator is constructed to be orthogonal to first-order nuisance estimation error, so the established convergence rate ensures that it retains $\sqrt{n}$ consistency and valid asymptotic inference even when the nuisance components converge at a slower rate.

We next establish that the debiased estimator $\psi$ satisfies universal Neyman orthogonality \citep{chernozhukov2019semi,foster2023orthogonal}. This property ensures that small imperfections in the nuisance estimates have only a second-order impact on the debiased estimate. Neyman orthogonality is a key property for establishing the asymptotic normality of the debiased estimator. The proof is given in Appendix \ref{appendix:universal_orthogonality}. 

\begin{proposition}[Universal Orthogonality]
\label{prop:universal_orthogonality}
    The debiased estimator $\psi$, defined in Equation~\eqref{eq:debiase_estimate}, is universally orthogonal with respect to the nuisance components. Specifically, for any nuisance components $(\tilde{s}_0, \tilde{s}_1, \tilde{z}, \tilde{H})$, 
    \[
    \bbE[\nabla\psi(V,\vv{C}, \vv{W}, k^*, Y;\tilde{s}_0 = s_0, \tilde{s}_1 =s_1, \tilde{z}=z, \tilde{H}=H )\mid V, \vv{C}]=0,
    \]
    where $(V,\vv{C}, \vv{W}, k^*, Y)$ is drawn from creator-side experiments, and $\nabla\psi$ is the gradient with respect to the nuisances.
\end{proposition}

Leveraging the Neyman orthogonality property, we now show our main asymptotic results. The debiased estimator achieves $\sqrt{n}$-consistency and asymptotic normality even when the nuisance components converge at rates slower than $n^{-1/2}$. This central limit theorem thereby enables valid statistical inference and confidence interval construction for the treatment effect. The formal result is stated in Theorem~\ref{thm:clt} below.

\begin{restatable}[Asymptotic Normality of the Debiased Estimator]{theorem}{clt}
\label{thm:clt} Suppose Assumptions \ref{assump:item_appearance} \& \ref{assump:exposure_overlap}  hold. Assume that the data-generating process follows the algorithm choice model in Equation~\eqref{eq:choice_model} and the viewer response model in Equation~\eqref{eq:viewer_outcome}.
Suppose that the estimated nuisance functions are all bounded by the constant  $C$ in Assumption \ref{assump:exposure_overlap} and satisfy the convergence  rate:
$
    \|\hat{s}_0 - s_0\|_{L_2} +  \|\hat{s}_1 - s_1\|_{L_2} +\|\hat{z} - z\|_{L_2}
    = o(n^{-1/4})$.
Then the debiased estimator $\hat{\tau}_n^{DB}$ defined in Equation~\eqref{eq:dr_estimator_general} is $\sqrt{n}$-consistent and asymptotically normal, with  estimated variance $\widehat{V}^{DB}_n$  in Equation~\eqref{eq:dr_variance_general}: 
\begin{equation*}
n^{1/2}\bp{\hat{\tau}_n^{DB} - \tau} /(\widehat{V}^{DB}_n)^{1/2} \Rightarrow \calN(0,1),    
\end{equation*}
implying that $\hat{\tau}_n^{DB} - \tau = O_p(n^{-1/2})$.
\end{restatable}

Our main theoretical contribution lies in extending the asymptotic results of the debiased estimator to correlated samples. This extension is essential in our empirical application, where sample correlation arises from overlapping consideration sets. Even when each viewer query $i$ is treated as independent, samples are correlated because once an item appears in a consideration set, its treatment status in subsequent overlapping consideration sets becomes deterministic. Only items unique to a given consideration set follow independent Bernoulli randomization.  

To account for sample correlation, we model the data-generating process sequentially and employ martingale limit theorems to characterize the asymptotic behavior of the debiased estimator.
Assumption \ref{assump:item_appearance}, which requires the average number of appearances per item to grow at a rate slower than $O(n^{1/3})$,  controls the degree of sample correlation and ensures variance convergence under the martingale framework.
The proof proceeds by comparing the empirical debiased estimator $\hat{\tau}_n^{DB}$ with an oracle counterpart $\tilde{\tau}_n^{DB}$ that assumes the nuisance components are known without error. The oracle estimator forms a martingale difference sequence and hence converges to a normal distribution. By Neyman orthogonality, the discrepancy between the empirical and oracle estimators is asymptotically negligible, and the bound on their difference is shown to be sufficiently tight to establish the desired limit distribution. We provide detailed proof in Appendix \ref{appendix:clt}.


\section{Monte Carlo Simulation}
\label{sec:sim}

In this section, we use Monte Carlo simulations to evaluate the performance of the proposed debiased estimator. We compare it against various benchmark estimators and show that, unlike the benchmarks, the debiased estimator delivers unbiased estimates and valid inference.



\subsection{Benchmark Estimators}
\label{sec:benchmarks_sim}

We consider the following benchmark estimators. 

\textbf{Difference-in-means estimators.} 
As described in Section~\ref{sec:bias_dim}, we include two difference-in-means (DIM) estimators commonly used in practice: 
the \emph{Horvitz–Thompson DIM} estimator $\hat{\tau}^{HT-DIM}_n$, and 
the \emph{Hájek DIM} estimator $\hat{\tau}^{HA-DIM}_n$. 
Both estimators compare average outcomes between treated and control items but differ in their normalization.

\textbf{Pure deep learning based estimator.}
A natural alternative to the proposed semiparametric approach is to rely entirely on a flexible deep learning model and treat treatment effect estimation as a prediction task. Specifically, we train a neural network to directly learn the  outcome 
as a function of viewer characteristics, content characteristics, and the treatment assignment vector. The inputs coincide with those used in our proposed estimator; however, instead of imposing a  two-stage structured decomposition of the data generating process into a semi-parametric choice model and a viewer response model, we estimate the mapping nonparametrically and allow all inputs to enter a flexible neural network  without parametric structure. In other words, the network learns an unrestricted function mapping  ($V_i$, $\vv{C_i}$, $\vv{W_i}$) to the outcome  $Y_i$.
Once the model is trained, we simulate the global rollout counterfactual by replacing the observed treatment vector with the global treatment assignment ($\bfw_i=\mathbf{1}$) and the global control assignment ($\bfw_i=\mathbf{0}$). The treatment effect is then obtained by comparing the predicted outcomes under these two counterfactual environments.

We formalize this benchmark as a Pure Deep Learning (PDL) estimator. Let $\hat \mu(V_i,\vv C_i,\vv W_i)$ denote the trained neural network that outputs the predicted outcome. The PDL estimator of the treatment effect is defined as
\begin{equation*}
 \hat\tau^{PDL}=\frac{1}{n}\sum_{i=1}^n \left(\hat \mu(V_i,\vv C_i,\mathbf{1})-\hat \mu(V_i,\vv C_i,\mathbf{0})
 \right).   
\end{equation*}

\textbf{Propensity-based estimators.} 
We consider estimators based on importance weighting \citep{hahn1998role}. The key idea is to reweight each observation by the likelihood ratio (propensity) of the treatment assignment under a target environment (e.g., global treatment) relative to the experimental environment. 
In our setting, each outcome $Y_i$
depends on the entire treatment vector $\bfw_i = (W_{i1},\dots,W_{iK}) \sim \calB(q)$, where treatments are independently assigned to the $K$ items in the consideration set under the creator-side experiment.  Our goal is to compare  outcomes under two counterfactual policies: (i) global treatment ($\pi_\one$), under which all $K$ items are treated, and (ii) global control ($\pi_\zero$), under which all $K$ items are in control.  A natural approach is therefore to weight each observation by the likelihood ratio of the realized treatment vector $\bfw_i$ under the target policy relative to  the experimental design. This leads to the inverse-propensity-weighted (IPW) estimator
\begin{equation}
\label{eq:vanilla_ipw_estimator}
\hat{\tau}_n^{IPW}
= \frac{1}{n} \sum_{i=1}^n 
  \left[\frac{\bbP_{\bfw_i\sim\pi_\one}(\bfw_i=\one)}{\bbP_{\bfw_i\sim\calB(q)}(\bfw_i=\one) }
 - \frac{\bbP_{\bfw_i\sim\pi_\zero}(\bfw_i=\zero)}{\bbP_{\bfw_i\sim\calB(q)}(\bfw_i=\zero) }\right] Y_i
= \frac{1}{n} \sum_{i=1}^n 
  \left[\frac{I(\bfw_i=\one)}{q^K }
 - \frac{I(\bfw_i=\zero)}{(1-q)^K }\right] Y_i
\end{equation}

While this estimator removes the interference bias, its variance  is at the scale of $ \Theta\big(\frac{q^{-K} + (1-q)^{-K}}{n}\big)$ that increases exponentially with the consideration-set size $K$, making it unstable in practice even when $K$ is only moderately large. 
To mitigate variance,  the augmented inverse-propensity-weighted (AIPW) estimator was proposed to incorporate regression adjustment \citep{hahn1998role}. The AIPW estimator takes the form:
\begin{equation}
\label{eq:vanilla_aipw_estimator}
\hat{\tau}_n^{AIPW}
= \frac{1}{n} \sum_{i=1}^n 
  \left[\hat{\mu}(V_i, \vv{C}_1, \one) + \frac{I(\bfw_i=\one)}{q^K }(Y_i - \hat{\mu}(V_i, \vv{C}_1, \one))
  \right] - \left[
  \hat{\mu}(V_i, \vv{C}_1, \zero) 
 + \frac{I(\bfw_i=\zero)}{(1-q)^K }(Y_i - \hat{\mu}(V_i, \vv{C}_1, \zero)\right],
\end{equation}
where $\hat{\mu}(V_i, \vv{C}_1, \one)$ and $\hat{\mu}(V_i, \vv{C}_1, \zero)$ denote predicted outcomes under global treatment and global control, respectively. In our implementation, these predictions are obtained from the pure deep learning model as described above.

Although AIPW typically improves finite-sample performance relative to IPW, the exponential variance problem persists. Because both IPW and AIPW rely on consideration-set-level propensities $q^K$
 and  $(1-q)^K$, their variance continues to scale exponentially in $K$. Thus, regression adjustment does not fundamentally resolve the instability induced by high-dimensional treatment vectors.

\subsection{Simulation Setup}

We design a simulation environment that mirrors the key features of creator-side experiments on content platforms.
For each of the $Q = 3000$ viewer queries, a consideration set of size $K$ is formed by sampling without replacement from $J = 500$ content items. We consider two scenarios with consideration set sizes $K=3$ and $K=8$. Each item $j$ has two attributes: a binary indicator $C_{j,1} \sim \mathrm{Bernoulli}(0.5)$ and a continuous attribute $C_{j,2} \sim \mathrm{Unif}(0,1)$. Each viewer $i$ has a preference parameter $V_i \sim \mathrm{Unif}(0,5)$, independently drawn across queries. Content items are randomly assigned to treatment or control with 50\% probability.

For viewer $i$ and item $k$ in the consideration set, the baseline score is  $s_0(V_i, C_k) = (V_i + C_{k,2}) + 0.1 (V_i + C_{k,2})^2$. Under treatment, the algorithm score adjustment is $s_1(V_i, C_k) = C_{k,1} \cdot (V_i + C_{k,2})$. Because this adjustment is heterogeneous across viewer--item pairs, it can lead to both content exposure and viewer selection biases. Exposure is determined by a multinomial logit model, and one item $k_i^*$ is exposed per query. The viewer response model is given by $Y_i = (V_i + C_{k_i^*,2}) + \varepsilon_i$, where $\varepsilon_i \sim \mathcal{N}(0,0.1)$. The viewer response depends on the viewer and the exposed item but not on treatment status.

Before evaluating the proposed and benchmark estimators, we compute the ground truth global treatment effect (GTE) using a large Monte Carlo sample of $100,000$ viewer queries. The GTE is defined as the difference in expected outcomes between global treatment, where all items are scored using $s_0+s_1$, and global control, where all items are scored using $s_0$. For $K=3$, the ground-truth GTE is approximately $-0.007$, and for $K=8$, the ground-truth GTE is approximately $0.033$. These values serve as benchmarks for evaluating estimator performance.

\subsection{Simulation Results}
\label{sec:simulation}

The Monte Carlo simulation runs $B=100$ independent replications. Each replication uses 3-fold cross-fitting to estimate nuisance components and treatment effects. For each estimator, we compute both the point estimate and its standard error.\footnote{The standard error for the DB estimator is calculated as $n^{-1/2}\widehat{V}^{1/2}_n$ following Theorem~\ref{thm:clt}, using 3-fold cross-fitting for nuisance estimation. For the Hájek DIM estimator, we use the plug-in variance from Theorem 1.2 of \cite{wager2024book}. For the remaining benchmarks, standard errors are obtained as the sample standard deviation of the estimates.} 
We compare the proposed debiased (DB) estimator $\hat{\tau}^{DB}_n$ with the benchmark estimators. Table~\ref{tab:montecarlo} reports the bias and uncertainty estimates for the debiased (DB) and benchmark estimators under consideration set sizes $K=3$ and $K=8$. Columns~(1) and (2) summarize estimator bias. Column~(1) reports the average deviation of the estimate from the GTE across simulation replications, and Column~(2) reports its standard error. 
Column~(3) reports the Monte Carlo standard deviation of the estimator, and Column~(4) reports the average estimated standard error.

\begin{table}[htbp]
\centering
\caption{Point Estimate and Uncertainty Estimate}
\vspace{3mm}
\small
\begin{tabular}{lccccc}
\toprule
          & \multicolumn{2}{c}{Bias $(\hat{\tau}-\tau)$} & & \multicolumn{2}{c}{Uncertainty $\hat\sigma$} \\
\cmidrule(lr){2-4} \cmidrule(lr){5-6}
         & Mean & (Std. Err.) &  & Monte Carlo SD & Estimated SE\\
         & (1) & (2) & &  (3) & (4)\\
\cmidrule(lr){1-4} \cmidrule(lr){5-6}
\addlinespace[0.25em]
Debiased (proposed) \\
\addlinespace[0.25em]
DB ($K=3$)      & -0.0005 & (0.0009) &  & 0.0091 & 0.0102 \\
DB ($K=8$)     &  -0.0004 & (0.0013) &  & 0.0132 & 0.0151 \\
\addlinespace[0.25em]
Difference-in-means \\
\addlinespace[0.25em]
HT-DIM ($K=3$)  & 2.2704 & (0.0353) &  & 0.3530 & 0.1161 \\
HT-DIM ($K=8$) & 4.3238 & (0.0253) & &  0.2529 & 0.0972 \\
\addlinespace[0.25em]
HA-DIM ($K=3$) & 0.2907 & (0.0058) &  & 0.0585 & 0.0380 \\
HA-DIM  ($K=8$) &  0.8907 & (0.0084) & &  0.0843 & 0.0365 \\
\addlinespace[0.25em]
Pure deep learning \\
\addlinespace[0.25em]
PDL ($K=3$)    & 0.0081 & (0.0017) &  & 0.0169  & 0.0008 \\
PDL ($K=8$)   & 0.0156 & (0.0031)  & & 0.0310  & 0.0011 \\
\addlinespace[0.25em]
Propensity-based \\
\addlinespace[0.25em]
IPW ($K=3$)    & 0.0772 & (0.1005)  & & 1.0052 & 0.2472 \\
IPW ($K=8$)   & -0.0987 & (0.2311) & & 2.3117 & 1.3774 \\
\addlinespace[0.25em]
AIPW ($K=3$)   & 0.0098 & (0.0112) &  & 0.1117 &  0.1510\\
AIPW ($K=8$)   & 0.0363 & (0.0622) & &  0.6218 & 0.8339 \\
\bottomrule
\label{tab:montecarlo}
\end{tabular}
\end{table}

The DB estimator performs well across all dimensions. Its average bias is effectively zero and small relative to sampling variability. Moreover, its estimated standard errors closely match the true standard errors approximated by the Monte Carlo variability across replicates, supporting valid statistical inference. These results confirm that the debiasing step successfully eliminates first-order bias from nuisance estimation and properly accounts for dependence induced by overlapping consideration sets.

In contrast, both DIM estimators show substantial bias due to interference under creator-side randomization. In addition, their reported standard errors underestimate true variability because they ignore the correlation induced by shared treatment assignments across overlapping consideration sets. As a result, they fail both in point estimation and inference. The PDL estimator also exhibits systematic bias. Because the model is trained flexibly in experimental environment with both treated and control items, it does not correctly extrapolate to the counterfactual global treatment and control policies. Its estimated standard error estimates are also unreliable.

Propensity-based estimators (IPW and AIPW) are unbiased in expectation. However, since their inverse propensity weights scale with $q^{-K}$ and $(1-q)^{-K}$, variance increases exponentially in $K$. Even at $K=3$, their Monte Carlo standard deviation is large. Although AIPW partially mitigates bias through regression adjustment, it does not resolve the fundamental variance explosion. The large variance becomes worse as sample size increases from 3 to 8, which makes it impractical to use with even a moderate-sized consideration set. Their estimated standard errors also deviate from true variability in finite samples.

Taken together, the DB estimator has negligible bias and accurate uncertainty quantification. All other benchmark estimators fail to recover the true GTE. It is also worth mentioning that most benchmark estimators underestimate uncertainty, largely because they ignore sample correlations arising from overlapping items in consideration sets that share the same treatment status. By contrast, the DB estimator properly accounts for this dependence structure and achieves the desired asymptotic properties through Neyman orthogonality (Section~\ref{sec:main_estimator}).


\section{Empirical Application}
\label{sec:empirical}

In this section, we present results from a large-scale field experiment. The results show that the treatment effect estimates from the proposed debiased estimator closely align with the approximate ground truth, obtained from the costly double-sided design, whereas estimates from other methods deviate more substantially and in some cases even have the opposite sign.

\subsection{Institutional Background}

We conducted a large-scale field experiment in collaboration with \emph{Weixin Channels}, the short-video platform embedded in China’s largest mobile messaging app. When a viewer reaches a promotional ``slot'', the algorithm selects from the pool of promoted videos. Each viewer query thus results in the exposure of a single video, making it directly consistent with our algorithm choice model.

The treatment is a new promotional algorithm, while the control is the status quo. The intervention occurs at the re-ranking stage of the allocation pipeline where the consideration sets, consisting of videos that have reached the stage prior to the intervention, are relatively small (see Section~\ref{subsec:scope}). Interference arises when treated videos, using treatment algorithm for scoring, and control videos, using control algorithm, appear in the same consideration set competing for exposure.

We discuss the plausibility of the set of assumptions in our empirical context. The controlled appearances Assumption~\ref{assump:item_appearance} ensures that the same item does not repeatedly appear in many consideration sets on average. The bounded scores Assumption~\ref{assump:exposure_overlap} ensures that every video has a positive probability of being exposed. In our empirical setting, both assumptions are reasonable since promoted videos are subject to budgeting and delivery constraints, so that each item has positive exposure and no item can dominate a very large number of user queries.\footnote{In our data, the most popular item appears in only 2.33\% of consideration sets, while about 90\% of items appear in no more than $5.28 \times 10^{-5}$ of total viewer queries.} Furthermore, viewers rarely encounter two promoted videos in close succession, so temporal interference across queries is likely limited.

\subsection{Experimental Design} 
\label{sec:exp_design}

To evaluate the validity of treatment effect estimators, we simultaneously conduct a creator-side experiment and a blocked double-sided experiment, with the latter serving as an interference-free benchmark. We track three key outcomes of interest. While we cannot disclose their exact definitions for confidentiality reasons, they are all binary measures related to viewer engagement.

We randomly divide viewers and creators into three equal-sized sub-universes to mitigate potential market-size effects. Each sub-universe is blocked from the others, so viewers can only see videos from creators within the same sub-universe. The first sub-universe hosts the creator experiment, where half of the creators are randomly assigned to the treatment algorithm and half to the control. The second and third sub-universes implement the double-sided randomized experiment: one uses the treatment algorithm, while the other uses the control algorithm. Because the two sub-universes are blocked, treated and control videos never appear in the same consideration set, eliminating interference. As a result, the double-sided design provides an interference-free benchmark for the true treatment effect, denoted as $\hat{\tau}^{DS}$. While effective, the double-sided design is costly to implement and thus impractical for evaluating all creator-focused treatments at scale (see Section~\ref{sec:double_sided_experiment}).

The experiments from all three sub-universes were conducted simultaneously to avoid time-varying differences in creator outcomes. The experiment ran for eight days, from May 27 to June 3, 2024, encompassing approximately 1 million videos, evenly split across the three sub-universes. For each viewer query $i$, we observe $(V_i,\vv{C}_i, \vv{W}_i, k_i^*, Y_i)$, where $V_i$ and $C_{i,k}$ include both raw features and learned embeddings from historical data. The treatment assignment vector $\vv{W}_i = (W_{i,1}, \dots, W_{i,K})$ indicates whether the video was scored using the treatment ($W_{i,k}=1$) or control ($W_{i,k}=0$) algorithm.

\subsection{Randomization Check and Evidence of Interference}
\label{sec:evidence_interference}

For randomization check, we compare the distributions of several key covariates identified as strong predictors of the outcomes of interest in historical data. While we cannot disclose their exact definitions, these covariates are commonly examined by practitioners to ensure that randomization was implemented properly. The set includes video features, users’ historical engagement, pre-treatment values of the three outcome variables, and a system-level performance metric. Table~\ref{tab:balance} reports p-values from balance tests on six categorical covariates and four continuous covariates. The ``creator experiment” and ``double-sided experiment” rows test covariate balance between treatment and control groups within each experiment, while the ``across experiments” row compares covariates between the two experimental designs. None of the covariates show significant imbalance, confirming that randomization was implemented successfully.

\begin{table}[h]
\centering
\caption{P-values from Randomization Check}
\small
\label{tab:balance}
\begin{tabular}{lcccccc}
\toprule
 & \multicolumn{6}{c}{Categorical Covariates} \\
\cmidrule(lr){2-7}
 & Cat.1 & Cat.2 & Cat.3 & Cat.4 & Cat.5 & Cat.6 \\
\midrule
Creator experiment & 0.279 & 0.789 & 0.462 & 0.770 & 0.689 & 0.393 \\
Double-sided experiment & 0.137 & 0.102 & 0.976 & 0.200 & 0.165 & 0.067 \\
Across experiments      & 0.792 & 0.168 & 0.047 & 0.540 & 0.056 & 0.461 \\
\midrule
 & \multicolumn{6}{c}{Continuous Covariates} \\
\cmidrule(lr){2-7}
 & Cont.1 & Cont.2 & Cont.3 & Cont.4 &  &  \\
\midrule
Creator experiment & 0.580 & 0.403 & 0.521 & 0.549 &  &  \\
Double-sided experiment & 0.186 & 0.225 & 0.771 & 0.630 &  &  \\
Across experiments      & 0.537 & 0.921 & 0.461 & 0.819 &  &  \\
\bottomrule
\end{tabular}
\end{table}

Using data from the subworld implementing the standard creator-side experiment, we provide direct empirical evidence of interference operating through two channels: content exposure bias and viewer selection bias. First, although treatment assignment is balanced at 50\% treatment and 50\% control, realized exposure shares deviate from this benchmark. In actual exposures, 56\% of actual exposures go to treated items, with the remaining 44\% being control items. This discrepancy between assignment probability and exposure probability provides direct evidence of content exposure bias.
Second, we observe systematic differences in the characteristics of viewers exposed to treated versus control content. Across four key covariates that are all strong predictors of downstream outcomes, the relative differences between treated and control exposures, measured as (treat-control)/control, are -7.175\%, -8.017\%, -7.994\%, and +8.619\%. All these differences are statistically significant. This indicates that treated and control content reach systematically different viewer populations, providing direct evidence of viewer selection bias. 

\subsection{Estimation Details}
\label{sec:empirical_estimate}
We describe the implementation details of the proposed debiased estimator using data from the creator-side experiment. Estimating the debiased estimator in Equation~\eqref{eq:debiase_estimate} requires several components: the nuisance functions $\hat{s}_0$ and $\hat{s}_1$ from the algorithm choice model, the viewer response model $\hat{z}$, and the gradients and Hessian matrix to construct the debiasing term. We discuss each component in turn. 

We start with the algorithm choice model. The consideration set is defined by empirically identifying the smallest set of items that could realistically be exposed prior to treatment. Figure~\ref{fig:size_rank} shows the empirical exposure probability of items based on their ``pre-treatment'' ranking by the algorithm. Exposure probability is highest for the top-ranked item and quickly declines to nearly zero. Based on this pattern, we set the consideration set size to $K=15$, which captures nearly all items with non-negligible exposure probability. This choice ensures that the consideration sets cover relevant items while not being too large. 

\begin{figure}
    \centering
    \caption{
    Exposure Probability of Items with Pre-treatment Ranking
    }
    \includegraphics[width=.6\textwidth]{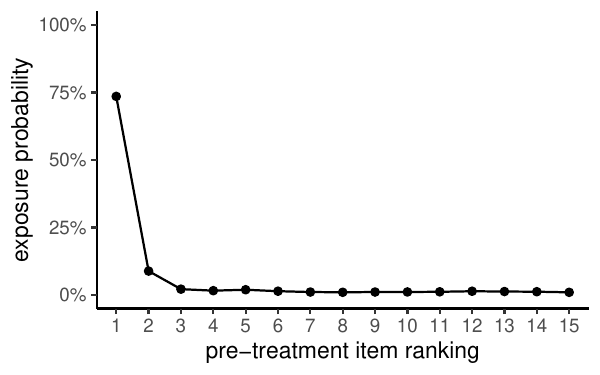} 
    \label{fig:size_rank}
\end{figure}

The algorithm choice model is estimated using data $(V_i, \vv{C}_i, \vv{W}_i, k^{\star})$. Figure~\ref{fig:choice_model_nn} illustrates the architecture of the semiparametric model. The viewer and content embeddings $(V_i, C_{i,k})$ pass through a black-box component that outputs two quantities: the baseline score $s_0$ and the treatment uplift $s_1$. These scores are then passed to a model layer that computes the exposure probabilities for all items in the consideration set. 

To compute the scores, we leverage the platform’s pre-trained representation model $\phi(\cdot)$ to generate features for each viewer–content pair $X_{i,k} := \phi(C_{i,k}, V_i)$. The representation model $\phi$, which is trained on historical data, captures a range of expected evaluations such as predicted viewer engagement. These features are then passed through two fully connected layers that output the baseline score and the treatment uplift. During estimation, the pre-trained representation model is held fixed, while the fully connected layers are updated. This approach can be viewed as fine-tuning the pre-trained model for our specific estimation task. Leveraging these pre-trained features substantially reduces both data and computational demands relative to training from raw embeddings.

In the model layer, the treatment assignment vector $\vv{W}_i$ enters explicitly, and exposure probabilities for all items in the consideration set are computed following Equation~\eqref{eq:choice_model}. The nuisance components $\hat{s}_0$ and $\hat{s}_1$ are estimated by maximizing the likelihood that item $k^{\star}$ is selected for viewer $V_i$ from the consideration set $\vv{C}_i$, conditional on treatment assignment $\vv{W}_i$. The model is trained with categorical cross-entropy loss as specified in Equation~\eqref{eq:cross_entropy}.

\begin{figure}
     \centering
        \caption{Semiparametric Algorithm Choice Model}
        \includegraphics[scale=.55]{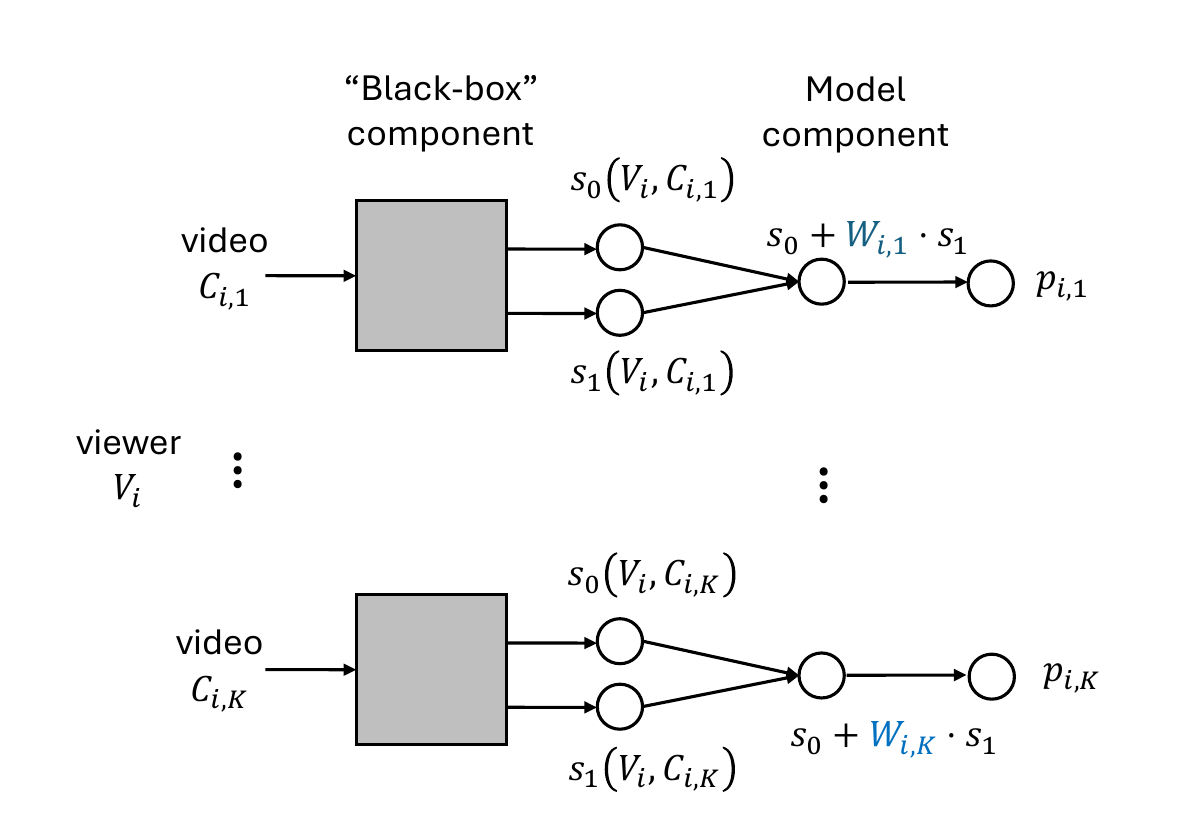}
        \label{fig:choice_model_nn}      
\end{figure}

We evaluate the fit of the algorithm choice model by comparing its predictions with the actual exposed items. Figure~\ref{fig:roc_empirical} presents the ROC curve. The dashed diagonal line corresponds to a non-informative classifier with an AUC of 0.5, whereas our fitted model achieves an AUC of 0.97. The very high out of sample predictive accuracy indicates that the structured logit–neural network specification can approximate the platform’s actual allocation behavior well, and that the imposed additive and logit structure does not materially restrict flexibility in practice.

\begin{figure}
 \centering
 \caption{ROC Curve of the Fitted Algorithm Choice Model}
    \includegraphics[scale=0.6]{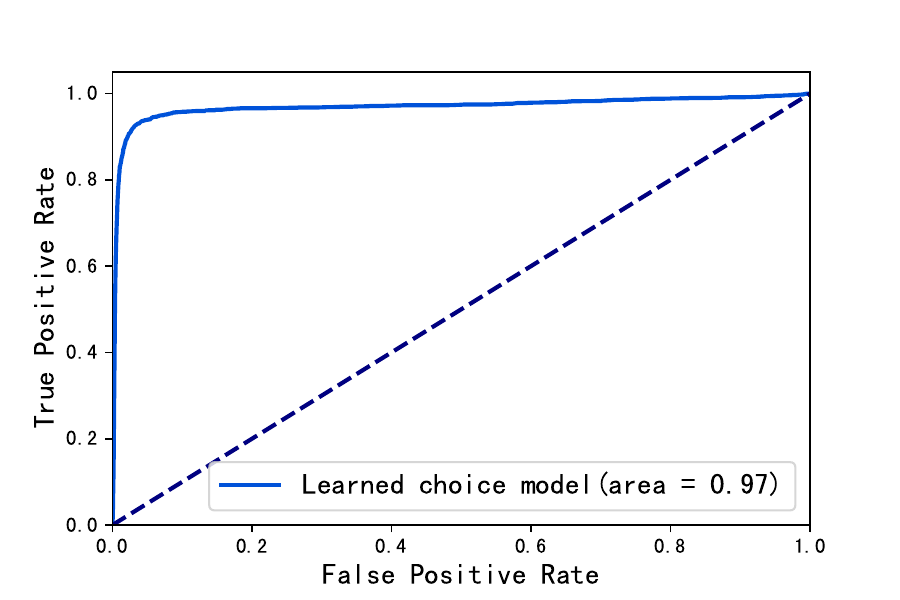} 
    \label{fig:roc_empirical}
\end{figure}


For the viewer response model, we leverage the platform’s pre-trained models that predict various outcomes for each viewer-content pair. Since the experiment tracks three outcomes of interest, each uses a separate viewer response model, while they all share the same algorithm choice model. The observed viewer responses from the experiment are used to compute the loss specified in Equation~\eqref{eq:loss2}, which contributes to the construction of the debiasing term.

The debiasing term contains the gradients of the plug-in estimator and the loss function with respect to the nuisance components, as well as the Hessian of the loss function. The gradients can be computed directly in our setting, as detailed in Section~\ref{sec:main_estimator}. The expected Hessian matrix of the loss function is taken with respect to the treatment assignment. To approximate this expectation, we draw 500 realizations of the treatment assignment and compute the empirical mean of the corresponding Hessians across these samples.

\subsection{Empirical Results}
\label{sec:treatment_effect_estimation}

We evaluate the debiased estimator using data from the field experiment. Specifically, we estimate the treatment effect from the creator-side experiment using the debiased estimator alongside all benchmark estimators. 
Since the propensity-based methods (IPW and AIPW) become highly unstable when the consideration set is large, as shown in the simulation results in Section~\ref{sec:simulation}, we exclude these estimators from the empirical analysis. 


\begin{figure}
     \centering
        \caption{Treatment Effect Estimation}
        \vspace{3mm}
    \begin{subfigure}[b]{0.32\textwidth}
        \caption{\small Outcome 1}
        \includegraphics[width=\textwidth]{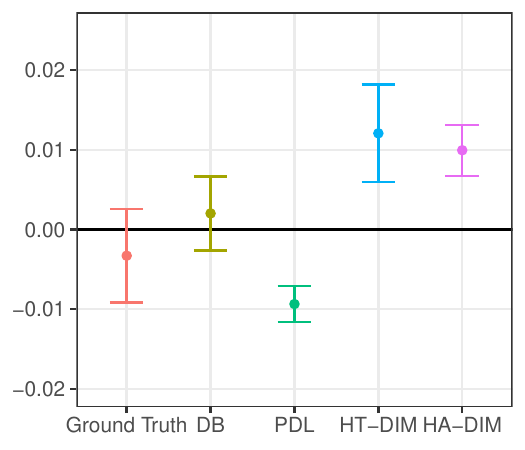}
    \end{subfigure}
    \begin{subfigure}[b]{0.33\textwidth}
            \caption{\small Outcome 2}
        \includegraphics[width=\textwidth]{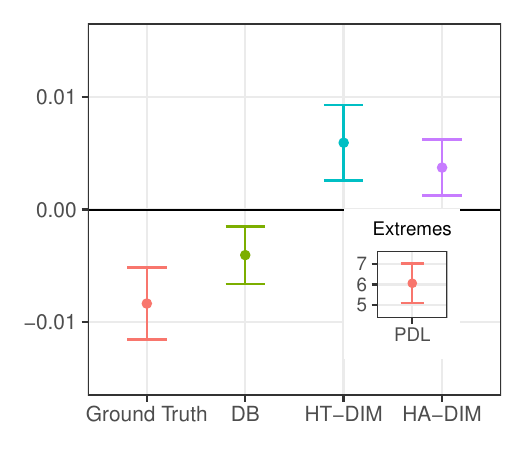}
    \end{subfigure}
    \begin{subfigure}[b]{0.32\textwidth}
        \caption{\small Outcome 3}
        \includegraphics[width=\textwidth]{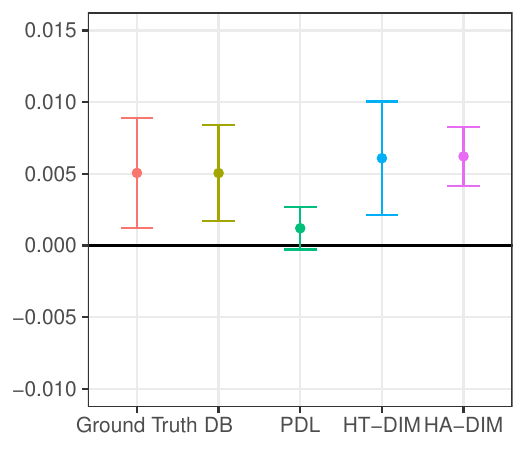}
    \end{subfigure}
    \label{fig:empirical}
      \parbox{0.95\textwidth}{
     \footnotesize \textit{Note:} ``Ground Truth'' is estimated using the double-sided experiment, ``DB'' is the proposed debiased estimator, ``PDL'' is the pure deep learning estimator, ``HT-DIM'' is the Horvitz-Thompson DIM estimator, and ``HA-DIM'' is the Hájek DIM estimator. }
\end{figure}




Figure~\ref{fig:empirical} presents the estimated treatment effects, with each panel corresponding to one of the three outcomes of interest. The solid black lines at zero indicate the natural benchmark for assessing whether the new algorithm should be adopted. For the proposed and benchmark estimators, we evaluate how closely their estimated treatment effects align with the ground-truth measures.\footnote{While the double-sided design eliminates interference between treated and control items, its sub-universes contain only one-third of the market. Since the treatment effect may vary with the market size, the double-sided estimates should be interpreted as our best available benchmark rather than an exact oracle for the full-platform global treatment effect.} 

The practical value and economic significance of our debiased (DB) estimator are most starkly illustrated by the results for outcome 2. For this metric, the ground truth measure from the double-sided experiment reveals a statistically significant negative treatment effect. Our DB estimator successfully recovers this result with comparable magnitude and significance, correctly identifying the new algorithm as the worse option. In sharp contrast, all benchmark estimators fail. Both HT-DIM and HA-DIM report statistically significant positive effects, while the pure deep learning estimator is wildly inaccurate. Relying on these estimators would have led to an incorrect managerial decision regarding which algorithm to adopt. This result highlights the most critical danger of using naive estimators: the potential for reaching directionally wrong conclusions.

The unreliability of the benchmark estimators is also apparent in the other two outcomes. For outcome 1, the ground-truth measure indicates a null effect, and the proposed debiased estimator correctly cannot reject the null hypothesis. In contrast, both HT-DIM and HA-DIM estimators both report statistically significant positive effects. The pure deep learning estimator reports a negative and significant effect. While not as severe as getting the sign wrong, this would still lead the platform to dedicate resources to a costly rollout for a change with no actual benefit. 

For outcome 3, while the benchmarks happen to align with the positive ground truth, their demonstrated failure in the other cases makes this result untrustworthy. The platform should not rely on an estimator that is correct only by chance. Across all three key metrics, the proposed debiased estimator is the only approach that consistently provides reliable results aligned with the ground truth, making it a trustworthy guide for high-stakes business decisions. In contrast, all other estimators yield confidence intervals that fail to overlap with the ground truth in at least two of the three outcomes.

Comparing across the three outcome measures, treatment effect on outcome 2 is particularly challenging to estimate. All benchmark methods perform poorly in this case, with the pure deep learning estimator yielding extreme estimates and the two DIM estimators producing effects with the wrong sign. Although it is difficult to pinpoint the exact source of difficulty, outcome~2 exhibits a highly skewed distribution with a large mass at zero,\footnote{The Fisher–Pearson coefficients of skewness for the three outcomes are 10.46, 21.21, and 11.09, respectively.} which generally makes the estimation of nuisance components more difficult. The debiased estimator mitigates the issue through its Neyman orthogonality property, a feature not shared by these benchmark estimators.

\section{Conclusion}

Creator-side experiments are important for platforms to evaluate creator-focused promotional algorithm updates. This paper demonstrates that algorithmic interference from competition between content creators makes standard DIM estimators dangerously misleading and leads to biased treatment effect estimation. We show empirically that relying on standard DIM estimators can lead to the wrong business decision by deploying the worse algorithm.

To address this challenge, we develop a new approach grounded in the Double/Debiased Machine Learning framework that provides a reliable estimate using data from a standard creator-side experiment. Our method explicitly models the interference pathway with a semi-parametric choice model and uses a debiased estimator to ensure valid inference. We validate our estimator against ground-truth estimates obtained from a costly double-sided experimental design, showing that it successfully recovers the true treatment effect where common benchmarks fail.

This paper's contributions are therefore both practical and theoretical. For practitioners, we offer a trustworthy tool to make better decisions about their algorithms. For academics, our key methodological contribution is the extension of the DML framework to handle the non-i.i.d., correlated data. While our analysis focuses on short-run effects in a re-ranking context, future work could extend this framework to model ranked slate listing or long-term ecosystem effects. More broadly, this framework demonstrates how combining structural and machine learning components in semi-parametric models can address complex interference and identification challenges in modern digital marketplaces.

\newpage

\bibliographystyle{informs2014}
\bibliography{references}

\newpage

\begin{APPENDICES}

\section{Proof of Theorem \ref{thm:dim_bias} (Bias of  DIM Estimators)}
\label{appendix:dim_bias_proof}

We adopt the following condition to control the extent of interference, in line with prior work on interference \citep{viviano2019policy, savje2021average}.  

\begin{assumption}[Controlled Item Appearance]  
\label{assump:item_appearance}  
Let $n$ be the sample size. Each item appears in at most $d_n$  consideration sets, with $\bbE[d_n^3]=o(n)$, where the expectation is taken with respect to the viewer population.  
\end{assumption}  

Let $n$ be the number of viewer queries and $m=|\calC|$ be the number of content items. We assume $m=O(n)$ without loss of generality since at most $nK$ contents are considered.
Let $q$ be the probability of treatment assignment.  
For a viewer $v$, we use $\vv{c}_v=(c_{v,c})$ and $\vv{w}_v=(w_{v,c})$ to denote the content consideration set and the treatment statuses there in, and use $\vv{p}_v = (p_{v,c})$ to denote the exposure probabilities. Let $e^*(q\mid v)$ be the probability of a viewer $v$ seeing treated content under the creator randomization with treatment assignment probability $q$. 
Mathematically, we have
\begin{equation}
    e^*(q\mid v) = \bbE_{\bfw\sim \calB(q)}[w_{k^*}\mid v].
\end{equation}
Similarly, define the population level probability of a viewer seeing a treated content item under the design as:
\begin{equation}
    e^*(q) = \bbE_{v\sim \bbP_v}\bb{e^*(q\mid v) },
\end{equation}
where $\bbP_v$ is the total viewer distribution.

Next, define the distributions of viewers exposed to treated and control content items under the randomization as $\bbP_{v}^{1,q}$ and $\bbP_{v}^{0,q}$ respectively, with:
\begin{equation}
        \frac{\diff\bbP_v^{1,q}}{\diff v} =   \frac{\diff\bbP_v}{\diff v}\cdot \frac{e^*(q\mid v)}{e^*(q)}\quad \mbox{and}\quad      \frac{\diff\bbP_v^{0,q}}{\diff v} =  \frac{\diff\bbP_v}{\diff v}\cdot\frac{1-e^*(q\mid v)}{1-e^*(q)}.
\end{equation}

\subsection{\texorpdfstring{Convergence of Horvitz-Thompson estimator ( $\hat{\tau}^{HT-DIM}_n $)}{HT estimator}}
Given $n$ samples $\{(V_i, \vv{C_i}, \vv{W_i}, k_i^*, Y_i )\}_{i=1}^n$ collected from a creator-side randomization experiment, with treatment probability $q$, 
 the Horvitz-Thompson estimator ($\hat{\tau}^{HT-DIM}_n $) is defined as:
 \begin{equation}
          \hat{\tau}^{HT-DIM}_n := \frac{\sum_{i=1}^n W_{i,k_i^*} Y_i}{nq} - \frac{\sum_{i=1}^n (1-W_{i,k_i^*})Y_i}{n(1-q)}\tag{\ref{eq:HT-DIM}}.
 \end{equation}

We first show that $\bbE[\hat{\tau}^{HT-DIM}_n] = \tau^{HT}$, where $\tau^{HT}$ is defined as:
    \begin{align}
     \tag{\ref{eq:ht_dim_convergence}}
      \quad \tau^{HT} := \frac{e^*(q)}{q}  \sum_{c\in\calC}\bbE_{v\sim \bbP_{v}^{1,q}, \bfw\sim \calB(q)}  [r_c\mid w_{k^*}=1]  
 &- \frac{1-e^*(q)}{1-q}\sum_{c\in\calC}\bbE_{v\sim \bbP_{v}^{0,q}, \bfw\sim \calB(q)}[r_c\mid w_{k^*}=0],
    \end{align}
For each sample $(V_i, \vv{C}_i, \vv{W}_i, k_i^*, Y_i)$ collected from the experiment,
we have:
    \begin{align*}
        \bbE\bb{\frac{1}{q} W_{i,k_i^*}Y_i} 
        &= \frac{1}{q}\bbE_{v\sim \bbP_v}\bb{\bbE_{\bfw\sim\calB(q)}\bb{ w_{k^*}y \mid v}}\\
        &= \frac{1}{q}\bbE_{v\sim \bbP_v^{1,q}}\bb{\frac{e^*(q)}{e^*(q|v)} \bbE_{\bfw\sim\calB(q)}\bb{ w_{k^*}y \mid v}}\\
        &= \frac{e^*(q)}{q}\bbE_{v\sim \bbP_v^{1,q}}\bb{\frac{1}{e^*(q|v)} \bbE_{\bfw\sim\calB(q)}\bb{ w_{k^*}y \mid v}}\\
        &= \frac{e^*(q)}{q}\bbE_{v\sim \bbP_v^{1,q}}\bb{\frac{1}{\bbE_{\bfw\sim\calB(q)}\bb{ w_{k^*}\mid v }} \bbE_{\bfw\sim\calB(q)}\bb{ w_{k^*}y\mid v }}\\
        &= \frac{e^*(q)}{q}\bbE_{v\sim \bbP_v^{1,q}}\bb{\frac{1}{\bbE_{\bfw\sim\calB(q)}\bb{ w_{k^*}\mid v }} \bbE_{\bfw\sim\calB(q)}\bb{ w_{k^*}\mid v } \bbE_{\bfw\sim\calB(q)}\bb{ y\mid w_{k^*}=1,v } }\\
        &= \frac{e^*(q)}{q}\bbE_{v\sim \bbP_v^{1,q}}\bb{ \bbE_{\bfw\sim\calB(q)}\bb{ y\mid w_{k^*}=1,v } }\\
        &= \frac{e^*(q)}{q}\bbE_{v\sim \bbP_{v}^{1,q},\bfw\sim\calB(q)}\bb{ y\mid w_{k^*}=1 } \\
        & = \frac{e^*(q)}{q}  \sum_{c\in\calC}\bbE_{v\sim \bbP_{v}^{1,q}, \bfw\sim \calB(q)}  [r_c\mid w_{k^*}=1], 
    \end{align*}
where the last line uses the definition of $r_c$ in Section \ref{sec:recsys} such that $r_c = y$ if content $c$ is exposed to viewer $v$, otherwise $r_c=0$.

Similarly, we have
    \begin{equation}
         \bbE\bb{\frac{1}{1-q} (1-W_{i,k_i^*})Y_i} =  \frac{1-e^*(q)}{1-q}\sum_{c\in\calC}\bbE_{v\sim \bbP_{v}^{0,q}, \bfw\sim \calB(q)}[r_c\mid w_{k^*}=0].
    \end{equation}
Recall that $\hat{\tau}^{HT-DIM}_n = \frac{1}{n}\sum_{i=1}^n \frac{1}{q} W_{i,k_i^*}Y_i - \frac{1}{1-q} (1-W_{i,k_i^*})Y_i$. Therefore we have  $\bbE[\hat{\tau}^{HT-DIM}_n] = \tau^{HT}$.

Next, we characterize the variance of $\hat{\tau}^{HT-DIM}_n$. By the law of total variance, we have
  \[
  \Var(\hat{\tau}^{HT-DIM}_n) = \bbE\bb{\Var\bp{\hat{\tau}^{HT-DIM}_n\mid \{(V_i,\vv{C}_i)\}_{i=1}^n}} + \Var\bp{\bbE\bb{\hat{\tau}^{HT-DIM}_n\mid \{(V_i,\vv{C}_i)\}_{i=1}^n}}.
  \]
  Conditioning on $\{(V_i,\vv{C}_i)\}_{i=1}^n$,  for each content $c$,
let $w_c$ denote its treatment status,  $d_c$ denote the number of consideration set including content $c$, and $R_c :=\frac{1}{n} \sum_{i=1}^n \one\bc{c= C_{i,k^*_i}}Y_i$ to denote the   average viewer-outcome for content $c$.  Denote $d = \max_{c}d_c$.
  We have: 
  \begin{align*}
      &
      \Var\bp{\frac{w_cR_c}{q} - \frac{(1-w_c)R_c}{1-q} \mid \{(V_i,\vv{C}_i)\}_{i=1}^n}\\
      &= \frac{\Var(w_cR_c\mid \{(V_i,\vv{C}_i))\}_{i=1}^n}{q^2} + \frac{\Var((1-w_c)R_c \mid \{(V_i,\vv{C}_i)\}_{i=1}^n)}{(1-q)^2} - \frac{2\Cov(w_cR_c, (1-w_c)R_c \mid \{(V_i,\vv{C}_i)\}_{i=1}^n)}{q(1-q) }\\
      &=\frac{\bbE[w_cR_c^2 \mid \{(V_i,\vv{C}_i)\}_{i=1}^n] -\bbE[w_cR_c \mid \{(V_i,\vv{C}_i)\}_{i=1}^n]^2 }{q^2} + \frac{\bbE[(1-w_c)R_c^2 \mid \{(V_i,\vv{C}_i)\}_{i=1}^n]-\bbE[(1-w_c)R_c]^2}{(1-q)^2 \mid \{(V_i,\vv{C}_i)\}_{i=1}^n}\\
      &\quad\quad + \frac{2\bbE[w_cR_c \mid \{(V_i,\vv{C}_i)\}_{i=1}^n]\bbE[(1-w_c)R_c \mid \{(V_i,\vv{C}_i)\}_{i=1}^n]}{q(1-q)}\\
      &=\frac{\bbE[R_c^2\mid w_c=1,\{(V_i,\vv{C}_i)\}_{i=1}^n] }{q}  -\bbE[R_c|w_c=1,\{(V_i,\vv{C}_i)\}_{i=1}^n]^2 \\
      &\quad\quad
      + \frac{\bbE[R_c^2\mid w_c=0,\{(V_i,\vv{C}_i)\}_{i=1}^n]}{1-q} - 
    \bbE[R_c\mid w_c=0,\{(V_i,\vv{C}_i)\}_{i=1}^n]^2
      \\
      &\quad\quad+ 2\bbE[R_c\mid w_c=1,\{(V_i,\vv{C}_i)\}_{i=1}^n]\bbE[R_c\mid w_c=0,\{(V_i,\vv{C}_i)\}_{i=1}^n]\\
      &=\frac{\bbE[R_c^2\mid w_c=1,\{(V_i,\vv{C}_i)\}_{i=1}^n] }{q} 
      + \frac{\bbE[R_c^2\mid w_c=0,\{(V_i,\vv{C}_i)\}_{i=1}^n]}{1-q} \\
      &\quad\quad- \bp{\bbE[R_c\mid w_c=1,\{(V_i,\vv{C}_i)\}_{i=1}^n]-\bbE[R_c\mid w_c=0,\{(V_i,\vv{C}_i)\}_{i=1}^n]}^2.
  \end{align*}
  Recall that $R_c :=\frac{1}{n} \sum_{i=1}^n \one\bc{c= C_{i,k^*_i}}Y_i$. By the boundedness of $Y_i$, we have that $R_c=O(\frac{d_c}{n})$, and thus:
  \[
  \Var\bp{\frac{w_cR_c}{q} - \frac{(1-w_c)R_c}{1-q}\mid \{(V_i,\vv{C}_i)\}_{i=1}^n} =O\bp{\frac{d_c^2}{n^2}} = O\bp{\frac{d^2}{n^2}}.
  \]
Also conditioning $\{(V_i,\vv{C}_i)\}_{i=1}^n$,  define $I_{c_1,c_2}=1$ if item $c_1$ and item $c_2$ are present at one consideration set;\footnote{By convention, we say  that $c_1$ and $c_2$ are in the same consideration set if both refer to the same item.} otherwise set $I_{c_1,c_2}=0$. Note that if $I_{c_1,c_2}=0$, there is no interference among the items $c_1~\&~c_2$ and thus $R_{c_1}$ and $R_{c_2}$ are independent. 
\begin{align*}
    &\Cov\bp{
    \frac{w_{c_1}R_{c_1}}{q} - \frac{(1-w_{c_1})R_{c_1}}{1-q}, \frac{w_{c_2}R_{c_2}}{q} - \frac{(1-w_{c_2})R_{c_2}}{1-q}
    ~\Big|~ \{(V_i,\vv{C}_i)\}_{i=1}^n} \\
    = & \left\{
    \begin{aligned}
       & 0 &&\mbox{if}\quad I_{c_1,c_2}=0,\\
       &O\Bigg(\sqrt{\Var\bp{\frac{w_{c_1}R_{c_1}}{q} - \frac{(1-w_{c_1})R_{c_1}}{1-q}\mid  \{(V_i,\vv{C}_i)\}_{i=1}^n}}&& \mbox{o.w.}\\
       &\quad \quad \times \sqrt{\Var\bp{\frac{w_{c_2}R_{c_2}}{q} - \frac{(1-w_{c_2})R_{c_2}}{1-q}\mid  \{(V_i,\vv{C}_i)\}_{i=1}^n}}\Bigg) 
    \end{aligned}
    \right.
\end{align*}
Together, we have
\begin{align*}
    \Var\bp{\hat{\tau}^{HT-DIM}_n\mid \{(V_i,\vv{C}_i)\}_{i=1}^n} \lesssim &\sum_{c\in\calC}^m \Var\bp{\frac{w_cR_c}{q} - \frac{(1-w_c)R_c}{1-q} \mid \{(V_i,\vv{C}_i)\}_{i=1}^n} \\
    &\quad\quad + \sum_{c_1\neq c_2\in\calC}\Cov\bp{
    \frac{w_{c_1}R_{c_1}}{q} - \frac{(1-w_{c_1})R_{c_1}}{1-q}, \frac{w_{c_2}R_{c_2}}{q} - \frac{(1-w_{c_2})R_{c_2}}{1-q}
    \mid \{(V_i,\vv{C}_i)\}_{i=1}^n} \\
   = & O\bp{\frac{d^2}{n^2}} + \sum_{c_1\neq c_2\in\calC} I_{c_1,c_2}  O\bp{\frac{d^2}{n^2}}\\
   = &  O_p\bp{\frac{md^3}{n^2}} =   O_p\bp{\frac{d^3}{n}}.
\end{align*}
Thus, 
\begin{equation}
\label{eq:total_law_variance_1}
\bbE\bb{ \Var\bp{\hat{\tau}^{HT-DIM}_n\mid \{(V_i,\vv{C}_i)\}_{i=1}^n}}=O\bp{\frac{\bbE[d^3]}{n}}.
\end{equation}
On the other hand, we have 
\begin{align*}
     \bbE\bb{\hat{\tau}^{HT-DIM}_n\mid \{(V_i,\vv{C}_i)\}_{i=1}^n}&= \sum_{c\in\calC} \bbE\bb{\frac{1}{n}\sum_{i=1}^n \frac{\one\bc{c=C_{i,k_i^*}}w_c}{q}Y_i
     -
     \frac{1}{n}\sum_{i=1}^n \frac{ \one\bc{c=C_{i,k_i^*}}(1-w_c)}{1-q}Y_i
     | \{(V_i,\vv{C}_i)\}_{i=1}^n}\\
     &= \frac{1}{n}\sum_{i=1}^n\sum_{c\in\calC} \bc{\bbE\bb{ \frac{\one\bc{c=C_{i,k_i^*}}w_c}{q}Y_i| \{(V_i,\vv{C}_i)\}_{i=1}^n} - \bbE\bb{ \frac{ \one\bc{c=C_{i,k_i^*}}(1-w_c)}{1-q}Y_i|  \{(V_i,\vv{C}_i)\}_{i=1}^n}}\\
     &= \frac{1}{n}\sum_{i=1}^nZ_i,\\
     \mbox{where}\quad Z_i &= \bbE\bb{ \bp{\frac{W_{i,k_i^*}}{q}-\frac{1-W_{i,k_i^*}}{1-q}}Y_i\mid \{(V_i,\vv{C}_i)\}_{i=1}^n}= \bbE\bb{ \bp{\frac{W_{i,k_i^*}}{q}-\frac{1-W_{i,k_i^*}}{1-q}}Y_i\mid (V_i,\vv{C}_i)}.
\end{align*}
Note that the randomness in $Z_i$ comes solely from $(V_i,\vv{C}_i)$, which are i.i.d.~across $i$.
Thus,
\begin{align}
    \Var\bp{ \bbE\bb{\hat{\tau}^{HT-DIM}_n\mid \{(V_i,\vv{C}_i)\}_{i=1}^n}} & = \Var\bp{\frac{1}{n}\sum_{i=1}^nZ_i} = \frac{\sum_{i=1}^n \Var(Z_i)}{n^2}=O\bp{\frac{1}{n}}.\label{eq:total_law_variance_2}
\end{align}
Combining Equation~\eqref{eq:total_law_variance_1} \& Equation~\eqref{eq:total_law_variance_2}, we have
\begin{align*}
\Var\bp{\hat{\tau}^{HT-DIM}_n} &=\bbE\bb{ \Var\bp{\hat{\tau}^{DIM}_n\mid \{(V_i,\vv{C}_i)\}_{i=1}^n}} +  \Var\bp{ \bbE\bb{\hat{\tau}^{DIM}_n\mid \{(V_i,\vv{C}_i)\}_{i=1}^n}}\\
&=O\bp{\frac{\bbE[d^3]}{n}} + O\bp{\frac{1}{n}} =O\bp{\frac{\bbE[d^3]}{n}}= o(1),   
\end{align*}
where the last equation is by Assumption \ref{assump:item_appearance}. Recall that we have proved $\bbE[\hat{\tau}^{HT-DIM}_n]$ equals to $\tau^{HT}$; then by Markov inequality, 
we have $\hat{\tau}^{HT-DIM}_n\xrightarrow{p}\tau^{HT}$.

\subsection{\texorpdfstring{Convergence of Hájek estimator ($\hat{\tau}^{HA-DIM}_n$)}{Hajek estimator}}
Given $n$ samples $\{(V_i, \vv{C_i}, \vv{W_i}, k_i^*, Y_i )\}_{i=1}^n$ collected from a creator-side randomization experiment, with treatment probability $q$, 
 the Hájek estimator ( $\hat{\tau}^{HA-DIM}_n $) is defined as:
 \begin{equation}
          \hat{\tau}^{HA-DIM}_n := \frac{\sum_{i=1}^nW_{i,k_i^*}Y_i}{\sum_{i=1}^nW_{i,k_i^*}} - \frac{\sum_{i=1}^n(1-W_{i,k_i^*})Y_i}{\sum_{i=1}^n(1-W_{i,k_i^*})},\tag{\ref{eq:HA-DIM}}.
 \end{equation}

We here show that $\hat{\tau}^{HA-DIM}_n\xrightarrow{p} \tau^{HA}$, where $\tau^{HA}$ is defined as:
    \begin{align}
     \tag{\ref{eq:ha_dim_convergence}}
      \quad \tau^{HA} :=  \sum_{c\in\calC}\bbE_{v\sim \bbP_{v}^{1,q}, \bfw\sim \calB(q)}  [r_c\mid w_{k^*}=1]  
 &- \sum_{c\in\calC}\bbE_{v\sim \bbP_{v}^{0,q}, \bfw\sim \calB(q)}[r_c\mid w_{k^*}=0],
    \end{align}

 By Slutsky's theorem, the missing component to show the convergence of $\hat{\tau}_n^{HA-DIM}$ is the convergence of $\frac{1}{n}\sum_{i=1}^n W_{i,k_i^*}$ to $e^*(q)$. Then  with  the convergence of $\hat{\tau}_n^{HT-DIM}\xrightarrow{p} \tau^{HT}$, we finish the proof. 
    
We now show the convergence of $\frac{1}{n}\sum_{i=1}^n W_{i,k}$. Note that by definition of $e^*(q)$, we have:
    \begin{equation*}
        \bbE\bb{\frac{1}{n}\sum_{i=1}^n W_{i,k_i^*}} = e^*(q).
    \end{equation*}
    Now we characterize its variance. 
    \begin{equation}
        \begin{split}
            \Var\bp{\frac{1}{n}\sum_{i=1}^n W_{i,k_i^*}} 
            &=\bbE\bb{\Var\bp{\frac{1}{n}\sum_{i=1}^n W_{i,k_i^*} | \{(V_i,\vv{C}_i)\}_{i=1}^n} }+\Var\bb{\bbE\bp{\frac{1}{n}\sum_{i=1}^n W_{i,k_i^*} | \{(V_i,\vv{C}_i)\}_{i=1}^n} }\\
            &= \frac{1}{n^2}\bbE\bb{\sum_{i\ne j}\Cov\bp{ W_{i,k_i^*}, W_{j,k_j^*}| \{(V_i,\vv{C}_i)\}_{i=1}^n} } + O\bp{\frac{1}{n}}\\
            &= O\bp{\frac{\bbE[d^3]}{n}} + O\bp{\frac{1}{n}} = o(1).
        \end{split}
    \end{equation}
    By Markov inequality, we have 
    \begin{equation}
        \frac{1}{n}\sum_{i=1}^n W_{i,k^*}\xrightarrow{p}e^*(q).
    \end{equation}
    Similarly, we  have:
    \begin{equation}
        \frac{1}{n}\sum_{i=1}^n (1-W_{i,k^*})\xrightarrow{p}1-e^*(q).
    \end{equation}
    Note that,
    \begin{align*}
        \hat{\tau}_n^{HA-DIM} &= \frac{n}{\sum_{i=1}^n W_{i,k^*}}\frac{\sum_{i=1}^n W_{i,k^*}Y_i}{n} -  \frac{n}{\sum_{i=1}^n (1-W_{i,k^*})}\frac{\sum_{i=1}^n (1-W_{i,k^*})Y_i}{n} \\
        \hat{\tau}_n^{HT-DIM} &= \frac{\sum_{i=1}^n W_{i,k^*}Y_i}{nq} -  \frac{\sum_{i=1}^n (1-W_{i,k^*})Y_i}{n(1-q)} .
    \end{align*}
    By Slutsky's theorem, we have $  \hat{\tau}_n^{HA-DIM}\xrightarrow{p}\tau^{HA}$.

\newpage




\section{Explicit Expressions}

\subsection{Counterfactual Policy Value}
\label{appendix:policy_value}
We write out the explicit expression of counterfactual policy values using the algorithm choice model. Given a policy $\pi$, the policy value is
\begin{align*}
 Q(\pi)&=\bbE_{\bfw\sim \pi}\bb{ \sum_{c\in\calC} r(c;w_c, \bfw_{-c}) }=\bbE_{\bfw\sim \pi}\bb{\sum_{c\in\calC}  \bbE_v[ y\bp{v,c; w_{c}, \bfw_{-c}}]}\\
 &= \bbE_v\bb{  \bbE_{\bfw\sim \pi}\bb{\sum_{c\in\calC} y\bp{v,c; w_{c}, \bfw_{-c}}}} = \bbE_{V_i, \vv{C}_i}\bb{  \bbE_{\vv{W}_i\sim \pi}\bb{\sum_{c\in C_i} y\bp{V_i,c; \vv{W}_i}}}\\
& =  \bbE_{V_i,\vv{C}_i, \vv{W_i}\sim \pi}\bb{Y_i(V_i, \vv{C_i};\vv{W_i}) }\\
&= \bbE_{V_i, \vv{C}_i,\vv{W_i}\sim \pi}\bb{\sum_{k=1}^K z(V_i, C_{i,k}) \cdot p_k\bp{ V_i, \vv{C_i}, \vv{W_i};s_0,s_1} }\\
&= \bbE_{V_i, \vv{C_i}, \vv{W_i}\sim \pi}\bb{\sum_{k=1}^K z(V_i, C_{i,k}) \cdot \frac{e^{s_0(V_i, C_{i,k}) +  W_{i,k}\cdot s_1(V_i, C_{i,k})}}{\sum_{k'=1}^K e^{s_0(V_i, C_{i,k'}) +  W_{i,k'} \cdot s_1(V_i, C_{i,k'})}} }.
\end{align*}

\subsection{Debiased Estimator}
\label{appendix:debiased_term}
We write out the explicit debiased estimate $\psi(\cdot)$ for each observation, where  we drop the subscript $i$ and write the notation as $(V,\vv{C},\vv{W},k^*,Y)$ for succinctness. 
For  estimated baseline score function $\hat{s}_0\in\calF_s$, treatment score uplift function $\hat{s}_1\in \calF_s$, and viewer response function $\hat{z}\in\calF_z$,  we have the estimated exposure probability be
\begin{align*}
p_k\bp{ V, \vv{C}, \vv{W}; \hat{s}_0,  \hat{s}_1} &= \frac{e^{ \hat{s}_0(V, C_{k}) +  W_{k} \cdot  \hat{s}_1(V, C_{k})}}{\sum_{k'=1}^K e^{ \hat{s}_0(V, C_{k'}) +  W_{k'} \cdot  \hat{s}_1(V, C_{k'})}}\stackrel{(i)}{=} \frac{e^{ \hat{s}_0(V, C_{k}) -  \hat{s}_0(V, C_{1}) +  W_{k} \cdot  \hat{s}_1(V, C_{k})}}{\sum_{k'=1}^K e^{ \hat{s}_0(V, C_{k'}) -  \hat{s}_0(V, C_{1}) +  W_{k'} \cdot  \hat{s}_1(V, C_{k'})}},
\end{align*}
where the equation (i) is obtained by normalizing both the numerator and denominator by the exponential of the first content item's baseline score. In other words, for any baseline score vector $\bp{\hat{s}_0(V, C_1), \hat{s}_0(V, C_2),\dots, \hat{s}_0(V, C_K)}$, if we replace it by  $\bp{0, \hat{s}_0(V, C_2)-\hat{s}_0(V, C_1), \dots, \hat{s}_0(V, C_K)-\hat{s}_0(V, C_1)}$, we will get the same exposure probability result. 

This implies that, for any nuisance estimates $(\hat{s}_0, \hat{s}_1,\hat{z})$,  the value $\mu(V,\vv{C};\hat{s}_0, \hat{s}_1,\hat{z})$ can be fully recovered by the vectors 
\begin{itemize}
\item $\vv{\hat{S}_0}=(\hat{S}_{0,2}, \dots, \hat{S}_{0,K}) \in \bbR^{K-1}$ with $\hat{S}_{0,k}=\hat{s}_0(V, C_k)-\hat{s}_0(V, C_{1})$;

\item  $\vv{\hat{S}_1}=(\hat{S}_{1,1}, \dots, \hat{S}_{1,K}) \in \bbR^K$ with $\hat{S}_{1,k}=\hat{s}_1(V, C_k)$;

\item  $\vv{\hat{Z}}=(\hat{Z}_{1,1}, \dots, \hat{Z}_{1,K}) \in \bbR^K $ with $\hat{Z}_{1,k}=\hat{z}(V, C_k)$.

\end{itemize}
 We thus abuse notations and write $\mu(V,\vv{C};\tilde{s}_0, \tilde{s}_1,\tilde{z})$  as $\mu\bp{\vv{\hat{S}_0}, \vv{\hat{S}_1}, \vv{\hat{Z}}}$. 

The bias of $\mu\bp{\vv{\hat{S}_0}, \vv{\hat{S}_1}, \vv{\hat{Z}}}$ comes from the deviation of $\bp{\vv{\hat{S}_0}, \vv{\hat{S}_1}, \vv{\hat{Z}}}$ to the true vectors $\bp{\vv{S_0}, \vv{S_1}, \vv{Z}}$ that are defined similarly  under  the true model $(s_0, s_1, z)$. 
 
 We next follow the double machine learning literature and use the outcome $(k^*, Y)$ to  correct the bias of $\mu\bp{\vv{\hat{S}_0}, \vv{\hat{S}_1}, \vv{\hat{Z}}}$ due to the bias of $\bp{\vv{\hat{S}_0}, \vv{\hat{S}_1}, \vv{\hat{Z}}}$ approximating the true $\bp{\vv{S_0}, \vv{S_1}, \vv{Z}}$. 
 
 Under the estimates  $\bp{\vv{\hat{S}_0}, \vv{\hat{S}_1}, \vv{\hat{Z}}}$, we reload the loss function notation and write it as:
 \begin{align*}
 \ell\bp{ \vv{W}, k^*, Y; \vv{\hat{S}_0}, \vv{\hat{S}_1}, \vv{\hat{Z}}} &= \ell_1\bp{\vv{W}, k^*; \vv{\hat{S}_0}, \vv{\hat{S}_1}} + \ell_2\bp{k^*, Y; \vv{\hat{Z}}},
 \end{align*}
 where 
 \begin{equation*}
\ell_1\bp{\vv{W}, k^*; \vv{\hat{S}_0}, \vv{\hat{S}_1}} = \left\{
 \begin{aligned}
& -W_{1}\hat{S}_{1,1} +\log\bp{e^{W_{1}\hat{S}_{1,1}} + \sum_{i=2}^K e^{\hat{S}_{0,k}+W_{k}\hat{S}_{1,k}} } \quad &&\mbox{if }k^*=1,\\
& -\bp{\hat{S}_{0,k}+W_{k}\hat{S}_{1,k}} +\log\bp{e^{W_{1}\hat{S}_{1,1}} + \sum_{i=2}^K e^{\hat{S}_{0,k}+W_{k}\hat{S}_{1,k}} } \quad &&
 \mbox{o.w.;}
 \end{aligned}
 \right.
 \end{equation*}
 and 
 \begin{equation*}
\ell_2\bp{k^*, Y; \vv{\hat{Z}}}, = \bp{\hat{Z}_{k^*}-Y}^2.
 \end{equation*}

 We are now ready to introduce the debiased term $\psi$:
\begin{align*}
\psi\bp{ \vv{W}, k^*, Y; \vv{\hat{S}_0}, \vv{\hat{S}_1}, \vv{\hat{Z}},\hat{H} } =~\mu\bp{\vv{\hat{S}_0}, \vv{\hat{S}_1}, \vv{\hat{Z}}} - \nabla\mu^T \hat{H}^{-1} \nabla\ell.
\end{align*}
Above, $\nabla\mu$ and $\nabla\ell$ are gradients of $\mu$ and $\ell$ with respect to the nuisance estimates $\bp{\vv{\hat{S}_0}, \vv{\hat{S}_1}, \vv{\hat{Z}}}$, and $\hat{H}$ estimates the expected Hessian of $\ell$ regarding $\bp{\vv{\hat{S}_0}, \vv{\hat{S}_1}, \vv{\hat{Z}}}$, where the expectation is taken with respect to the treatments $\vv{W}$ that are assigned following the specified creator-side randomization. The explicit expressions for these derivatives are deferred to Appendix \ref{appendix:gradients}.

\subsection{Expressions of Gradients and Hessian}
\label{appendix:gradients}
\subsubsection{\texorpdfstring{Gradient of $\mu$.}{Gradient}} 
\label{appendix:gradient_of_mu}
We have
\begin{align*}
\nabla \mu =\bp{ \frac{\partial \mu}{\partial \vv{\hat{S}_{0}}},  \frac{\partial \mu}{\partial \vv{\hat{S}_{1}}}, \frac{\partial \mu}{\partial \vv{\hat{Z}}}}^T = \bp{
    \frac{\partial \mu}{\partial \hat{S}_{0,2}}, \dots,   \frac{\partial \mu}{\partial \hat{S}_{0,K}},
      \frac{\partial \mu}{\partial \hat{S}_{1,1}}, \dots,   \frac{\partial \mu}{\partial \hat{S}_{1,K}},
     \frac{\partial \mu}{\partial \hat{Z}_1}, \dots, \frac{\partial \mu}{\partial \hat{Z}_K}
     }^T,
\end{align*}
where
\begin{itemize}
    \item for each $k=2,\dots, K$,
    \begin{align*}
        \frac{\partial\mu }{\partial \hat{S}_{0,k}} = P(k^*=k; \vv{\hat{S}_{0}}, \vv{\hat{S}_{1}}, \vv{W}\equiv 1)\bc{\hat{Z}_{k}-\bbE[Y\mid \vv{\hat{S}_{0}}, \vv{\hat{S}_{1}}, \vv{\hat{Z}},    \vv{W}\equiv 1]} \\
        -  P(k^*=k; \vv{\hat{S}_{0}}, \vv{\hat{S}_{1}}, \vv{W}\equiv 0)\bc{\hat{Z}_{k}-\bbE[Y\mid \vv{\hat{S}_{0}}, \vv{\hat{S}_{1}}, \vv{\hat{Z}},    \vv{W}\equiv 0]} .
    \end{align*}
    \item for each $k=1,\dots, K$,
    \begin{align*}
   \frac{\partial\mu }{\partial \hat{S}_{1,k}}  &=P(k^*=k; \vv{\hat{S}_{0}}, \vv{\hat{S}_{1}}, \vv{W}\equiv 1)
   \bc{
   \hat{Z}_{k}-\bbE[Y\mid \vv{\hat{S}_{0}}, \vv{\hat{S}_{1}}, \vv{\hat{Z}},    \vv{W}\equiv 1]
   },\\
    \frac{\partial\mu }{\partial \hat{Z}_{k}} &= P(k^*=k; \vv{\hat{S}_{0}}, \vv{\hat{S}_{1}}, \vv{W}\equiv 1) -P(k^*=k; \vv{\hat{S}_{0}}, \vv{\hat{S}_{1}}, \vv{W}\equiv 0).
\end{align*}
\end{itemize}
Above, $\langle\cdot, \cdot\rangle$ denotes inner product between two vectors.

\subsubsection{\texorpdfstring{Gradient of $\ell$.}{Gradient}}
\label{appendix:gradient_of_ell}
We have
\begin{align*}
  \nabla \ell = \bp{ \frac{\partial \ell_1}{\partial \vv{\hat{S}_{0}}},  \frac{\partial \ell_1}{\partial \vv{\hat{S}_{1}}}, \frac{\partial \ell_2}{\partial \vv{\hat{Z}}}}^T = \bp{
    \frac{\partial \ell_1}{\partial \hat{S}_{0,2}}, \dots,   \frac{\partial \ell_1}{\partial \hat{S}_{0,K}},
      \frac{\partial \ell_1}{\partial \hat{S}_{1,1}}, \dots,   \frac{\partial \ell_1}{\partial \hat{S}_{1,K}},
     \frac{\partial \ell_2}{\partial \hat{Z}_1}, \dots, \frac{\partial \ell_2}{\partial \hat{Z}_K}
     }^T,
\end{align*}
where

\begin{itemize}
    \item for each $k=2,\dots, K$:
    \begin{align*}
    \frac{\partial\ell_1 }{\partial \hat{S}_{0,k}}  &= P(k^*=k; \vv{\hat{S}_{0}}, \vv{\hat{S}_{1}}, \vv{W}) -\bbI[k^*=k].    
    \end{align*}

    \item for each $k=1,\dots, K$:
    \begin{align*}
    \frac{\partial\ell_1 }{\partial \hat{S}_{1,k}}  &= \bbI\bb{ W_k=1} \bp{P(k^*=k; \vv{\hat{S}_{0}}, \vv{\hat{S}_{1}}, \vv{W}) -\bbI[k^*=k]},\\
   \frac{\partial \ell_2}{\partial \hat{Z}_{k}} & =\bbI[k^*=k]\bp{\hat{Z}_{k}-Y}. 
    \end{align*}
\end{itemize}

\subsubsection{\texorpdfstring{Hessian of  $\ell$.}{Hessian}} \label{appendix:hessian}We have
\begin{align*}
 \nabla^2\ell =\begin{pmatrix}
        \frac{\partial^2 \ell_{1}}{\partial  \vv{\hat{S}_{0}}^2} &  
        \frac{\partial^2 \ell_{1}}{\partial \vv{\hat{S}_{0}} \partial \vv{\hat{S}_{1}} }   & 0\\
        \frac{\partial^2 \ell_{1}}{\partial \vv{\hat{S}_{1}} \partial \vv{\hat{S}_{0}} }  &   \frac{\partial^2 \ell_{1}}{\partial  \vv{\hat{S}_{1}}^2} & 0\\
      0&  0 &   \frac{\partial^2 \ell_2}{\partial \hat{Z}^2}
    \end{pmatrix},
\end{align*}
where the Hessian of loss function $\ell_1 $ follows;
 \begin{align*}
  \frac{\partial^2 \ell_1}{\partial \hat{S}_{0,k}^2}   &= P(k^*=k; \vv{\hat{S}_{0}}, \vv{\hat{S}_{1}}, \vv{W})\bp{1- P(k^*=k; \vv{\hat{S}_{0}}, \vv{\hat{S}_{1}}, \vv{W})}, 
   &&k \in \{2:K\} \\  
     \frac{\partial^2 \ell_1}{\partial \hat{S}_{1,k}^2}    &= W_{k}P(k^*=k; \vv{\hat{S}_{0}}, \vv{\hat{S}_{1}}, \vv{W})\bp{1- P(k^*=k; \vv{\hat{S}_{0}}, \vv{\hat{S}_{1}}, \vv{W})},  &&k \in \{1:K\} \\
      \frac{\partial^2 \ell_1}{\partial  \hat{S}_{0,k} \partial \hat{S}_{1,k}}   &= W_{k}P(k^*=k; \vv{\hat{S}_{0}}, \vv{\hat{S}_{1}}, \vv{W})\bp{1- P(k^*=k; \vv{\hat{S}_{0}}, \vv{\hat{S}_{1}}, \vv{W})}, &&k \in \{2:K\}  \\
       \frac{\partial^2 \ell_1}{\partial \hat{S}_{0,k_1} \partial\hat{S}_{0,k_2}}    &= -P(k^*=k_1; \vv{\hat{S}_{0}}, \vv{\hat{S}_{1}}, \vv{W})P(k^*=k_2; \vv{\hat{S}_{0}}, \vv{\hat{S}_{1}}, \vv{W}), && k_1\neq k_2 \in\{2:K\}\\
         \frac{\partial^2 \ell_1}{\partial \hat{S}_{0,k_1} \partial \hat{S}_{1,k_2}}    &= -W_{k_2}P(k^*=k_1; \vv{\hat{S}_{0}}, \vv{\hat{S}_{1}}, \vv{W})P(k^*=k_2; \vv{\hat{S}_{0}}, \vv{\hat{S}_{1}}, \vv{W}), && k_1\neq k_2, k_1 \in\{2:K\}, k_2\in\{1:K\}\\
         \frac{\partial^2 \ell_1}{\partial \hat{S}_{1,k_1} \partial
        \hat{S}_{1,k_2}
        }    &= -W_{k_1}W_{k_2} P(k^*=k_1; \vv{\hat{S}_{0}}, \vv{\hat{S}_{1}}, \vv{W})P(k^*=k_2; \vv{\hat{S}_{0}}, \vv{\hat{S}_{1}}, \vv{W})
        , && k_1\neq k_2 \in\{1:K\}.
    \end{align*}
and  the Hessian of loss function $\ell_2$ follows:
\begin{align*}
   &  \frac{\partial \ell_2^2}{\partial \hat{Z}_k^2} = \bbI[k^*=k], &&k \in \{1:K\} \\
   & \frac{\partial \ell_2^2}{\partial \hat{Z}_{k_1}\partial  \hat{Z}_{k_2}}=0, && k_1\neq k_2\in\{1:K\}.
\end{align*}

\newpage

\section{Supporting Results for Estimating Nuisance Components}
\label{appendix:nuisance_estimation}

\subsection{Overlap and Positive Exposure}
\label{appendix:positive_exposure}

\begin{assumption}[Overlap]
    \label{assump:exposure_overlap}
    There exists a universal constant $C$ such that the true score functions  are bounded: $\|s_0\|_{\infty}\leq C$ and  $\|s_1\|_{\infty}\leq C$.
\end{assumption}

This assumption plays the same role as the overlap condition in standard causal inference. It ensures that each item in the consideration set has a positive probability of being exposed under both treatment and control algorithms. This property guarantees that counterfactual exposure probabilities and nuisance functions are well defined.

\begin{lemma}
    \label{lemma:positive_exposure}
    Under Assumption \ref{assump:exposure_overlap}, there exists a universal constant $\delta>0$ such that each content item in the consideration set has at least $\delta$ probability to be exposed: 
    \[
    \bbP\bp{k_i^*=k\mid V_i, \vv{C_i}, \vv{W_i}; s_0, s_1}\geq \delta, \quad \forall ~ (V_i, \vv{C_i}, \vv{W_i}, k).
    \]
\end{lemma}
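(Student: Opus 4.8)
The plan is to lower-bound the recommended probability $P_{i,k}$ from Eq.~\eqref{eq:choice_model} uniformly, using only the boundedness of the score functions from Assumption~\ref{assump:exposure_overlap}. The key observation is that the choice probability is a softmax expression whose numerator and denominator are both controlled once the exponents are bounded. Since $\|s_0\|_\infty\le C$ and $\|s_1\|_\infty\le C$, each exponent $s_0(V_i,C_{i,k'})+W_{i,k'}\cdot s_1(V_i,C_{i,k'})$ lies in the interval $[-2C,\,2C]$, because $W_{i,k'}\in\{0,1\}$ and so the uplift term contributes at most $C$ in absolute value on top of the baseline term's contribution of at most $C$. This bound holds for every summand regardless of the treatment assignment, which is exactly what is needed for a statement that is uniform over $\vv{W_i}$.

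First I would write out $P_{i,k}$ explicitly and bound the numerator below by $e^{-2C}$ and each of the $K$ terms in the denominator above by $e^{2C}$, so that the denominator is at most $Ke^{2C}$. This yields
\[
\bbP\bp{k_i^*=k\mid V_i, \vv{C_i}, \vv{W_i}; s_0, s_1}
= \frac{e^{s_0(V_i, C_{i,k}) + W_{i,k}\cdot s_1(V_i, C_{i,k})}}{\sum_{k'=1}^K e^{s_0(V_i, C_{i,k'}) + W_{i,k'}\cdot s_1(V_i, C_{i,k'})}}
\geq \frac{e^{-2C}}{K e^{2C}} = \frac{e^{-4C}}{K}.
\]
Setting $\delta := e^{-4C}/K$ gives a universal constant that depends only on the score bound $C$ and the consideration-set size $K$, and crucially does not depend on the particular viewer, consideration set, treatment vector, or index $k$. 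This establishes the claim for every tuple $(V_i, \vv{C_i}, \vv{W_i}, k)$ simultaneously.

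The argument is essentially a routine softmax estimate, so there is no genuine obstacle; the only point deserving care is verifying that the exponent bound $[-2C,2C]$ is valid for \emph{all} assignments $W_{i,k'}$, including the counterfactual ones $\pi_\one$ and $\pi_\zero$ used later in the treatment-effect analysis, rather than merely for the realized randomization. This is immediate since the bound on each exponent never used the realized value of $W_{i,k'}$ beyond the fact that it is $0$ or $1$. I would also note that if consideration sets are allowed to vary in size (as mentioned in the footnote), one replaces $K$ by a uniform upper bound $\bar K$ on the set sizes, and $\delta := e^{-4C}/\bar K$ still works; this is the natural place to flag that the constant degrades with larger sets but remains strictly positive, which is all Lemma~\ref{lemma:positive_exposure} asserts.
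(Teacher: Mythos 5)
Your proof is correct and takes essentially the same approach as the paper's: lower-bound the softmax numerator by $e^{-2C}$, upper-bound each of the $K$ denominator terms by $e^{2C}$, and set $\delta = e^{-4C}/K$. Incidentally, your simplification of the constant is the correct one; the paper's proof uses the identical inequality but writes the final constant as $e^{-2C-2\log(K)C}$, which appears to be an algebraic slip for $e^{-4C-\log K}=e^{-4C}/K$.
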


\begin{proof}
We have
\begin{align*}
    \bbP\bp{k_i^*=k\mid V_i, \vv{C_i}, \vv{W_i}; s_0, s_1} &= \frac{e^{s_0(V_i, C_{i,k}) +  W_{i,k}\cdot s_1(V_i, C_{i,k})}}{\sum_{k'=1}^K e^{s_0(V_i, C_{i,k'}) +  W_{i,k'} \cdot s_1(V_i, C_{i,k'})}} \geq \frac{e^{-2C}}{\sum_{k'=1}^K e^{2C}} = e^{-2C-2\log(K)C},
\end{align*}
where the inequality is by Assumption \ref{assump:exposure_overlap}. Set $\delta=e^{-2C-2\log(K)C}$, concluding the proof.    
\end{proof}

\subsection{Score Identification}
\label{appendix:identification_proof}
We prove Proposition \ref{prop:identification} and show that $(s_0, s_1, z)$ can be nonparametrically identified up to a location normalization $s_0(V_i, C_{i,1})\equiv 0$.  Under Assumption \ref{assump:exposure_overlap}, there exists an $\delta>0$ such that $p_k(V_i, \vv{C}_i, \vv{W}_i;s_0,s_1)\geq \delta$ for all $V_i, \vv{C}_i, \vv{W}_i$ by Lemma \ref{lemma:positive_exposure}.

\paragraph{Identifiability of $(s_0, s_1)$.}
First set $\vv{W}_i=\zero$, which happens with probability of $(1-q)^K$ by the creator randomization, we  have $s_0$ be uniquely identified as:
\begin{equation}
    s_0(V_i, C_{i,k}) = \log\frac{p_k(V_i, \vv{C}_i, \zero; s_0, s_1)}{p_1(V_i, \vv{C}_i, \zero; s_0, s_1)},
\end{equation}
with the normalization $s_0(V_i, C_{i,1})\equiv 0$.
Subsequently, 
\begin{align*}
    s_1(V_i, C_{i,1}) = \log\bc{\bp{\sum_{k=2}^K e^{s_0(V_i, C_{i,k})}}\cdot\frac{p_1(V_i, \vv{C}_i, (1,\zero); s_0, s_1)}{1-p_1(V_i, \vv{C}_i, (1,\zero); s_0, s_1)}},
\end{align*}
where $(1,\zero)$ denotes $W_{i,1}=1$ and $W_{i,2:K}=\zero$, which happens with probability $q(1-q)^{K-1}$ by experimental design.
Finally, set $\vv{W}_i=\one$, which happens with probability $q^K$. We have $s_1$ uniquely identified as:
\begin{equation}
    s_1(V_i, C_{i,k}) = \log\frac{p_k(V_i, \vv{C}_i, \one; s_0, s_1)}{p_1(V_i, \vv{C}_i, \one; s_0, s_1)} - s_0(V_i, C_{i,k}) + s_1(V_i, C_{i,1}).
\end{equation}

\paragraph{Identifiability of $z$.} For any $(V_i, \vv{C}_i)$, $k_i^*=k$ happens with probability at least $\delta$ by Lemma \ref{lemma:positive_exposure}, we thus  have $z(V_i, C_{i,k}) =\bbE[Y_i\mid k_i^*=k, V_i, \vv{C}_i]$ uniquely identified from the data distribution.

\subsection{Score Convergence}
\label{appendix:score_convergence}
\begin{assumption}
\label{assump:smoothness}
    (i) Both viewer covariates and content covariates have compact, connected support, with each entry taking values in $[-1,1]$. (ii) As functions of a viewer-content pair $(v,c)$, $s_0(\cdot,\cdot), s_1(\cdot,\cdot), z(\cdot,\cdot)\in \calW^{p,\infty}([-1,1]^{d_V+d_C})$, where for positive integers $p$ and $q$, define the H$\ddot{o}$lder ball $\calW^{p,\infty}([-1,1]^q)$ of function $h:\mathbb{R}^q\rightarrow \mathbb{R}$ with smoothness $p\in \mathbb{N}_+$ as
    \begin{equation*}
        \calW^{p,\infty}([-1,1]^q) := \bc{h:\max_{r,|r|\leq p} \underset{v\in[-1,1]^q}{\mathrm{ess~sup}} |D^rh(v)|\leq 1},
    \end{equation*}
    where $r=(r_1,\dots, r_q)$, $|r|=r_1+\dots + r_q$ and $D^r h$ is the weak derivative.
\end{assumption}
We prove Proposition \ref{prop:convergence} by verifying conditions of Theorem 1 in \cite{farrell2020deep}. We first verify Assumption 1 in \cite{farrell2020deep}.
\begin{itemize}
    \item The nuisance parameters $(s_0, s_1, z)$ are nonparametrically identified by Proposition \ref{prop:identification} and uniformly bounded by the model assumptions.
    \item Appendix \ref{appendix:gradient_of_ell} shows that the loss function has bounded gradients with bounded $z$ and outcome $Y$. Therefore it satisfies the Lipschitz condition.
    \item Write $\theta := (s_0, s_1, z)$ and $\tilde \theta:=(\tilde s_0, \tilde s_1, \tilde z)$ for any nuisance candidate $\tilde s_0, \tilde s_1, \tilde z$. By Taylor expansion, for some $\check\theta$ between $\tilde\theta$ and $\theta$, we have
    \begin{align*}
        &\bbE\bb{\ell(V_i, \vv{C}_i, \vv{W}_i, k_i^*, Y_i;\tilde \theta)\mid V_i, \vv{C}_i} -  \bbE\bb{\ell(V_i, \vv{C}_i, \vv{W}_i, k_i^*, Y_i; \theta)\mid V_i, \vv{C}_i} \\
        = &
         \nabla_\theta  \bbE\bb{\ell(V_i, \vv{C}_i, \vv{W}_i, k_i^*, Y_i; \theta)\mid V_i, \vv{C}_i}\|\tilde\theta(V_i, \vv{C}_i)-\theta(V_i, \vv{C}_i)\|_2 \\
         &\quad + \frac{1}{2}\nabla^2_\theta  \bbE\bb{\ell(V_i, \vv{C}_i, \vv{W}_i, k_i^*, Y_i; \check\theta)\mid V_i, \vv{C}_i}\|\tilde\theta(V_i, \vv{C}_i)-\theta(V_i, \vv{C}_i)\|^2_2\\
        = & \frac{1}{2}\nabla^2_\theta  \bbE\bb{\ell(V_i, \vv{C}_i, \vv{W}_i, k_i^*, Y_i; \check\theta)\mid V_i, \vv{C}_i}\|\tilde\theta(V_i, \vv{C}_i)-\theta(V_i, \vv{C}_i)\|^2_2,
    \end{align*}
    where the last equality is by the first-order optimality of $\theta$ in $\ell$. Lemma \ref{lemma:invertible_H} shows that the Hessian $\nabla^2_\theta  \bbE\bb{\ell(V_i, \vv{C}_i, \vv{W}_i, k_i^*, Y_i; \check\theta)\mid V_i, \vv{C}_i}$ is universally invertible with bounded inverse. Therefore, there exists a $c_1>0$ such that:
    \[
    \bbE\bb{\ell(V_i, \vv{C}_i, \vv{W}_i, k_i^*, Y_i;\tilde \theta)\mid V_i, \vv{C}_i} -  \bbE\bb{\ell(V_i, \vv{C}_i, \vv{W}_i, k_i^*, Y_i; \theta)\mid V_i, \vv{C}_i}\geq c_1 \|\tilde\theta(V_i, \vv{C}_i)-\theta(V_i, \vv{C}_i)\|^2_2.
    \]
    On the other hand, Appendix \ref{appendix:hessian} writes out the explicit form of the Hessian, which has bounded entries, implying bounded eigenvalues. Therefore, there exists a $c_2>0$ such that:
    \[
    \bbE\bb{\ell(V_i, \vv{C}_i, \vv{W}_i, k_i^*, Y_i;\tilde \theta)\mid V_i, \vv{C}_i} -  \bbE\bb{\ell(V_i, \vv{C}_i, \vv{W}_i, k_i^*, Y_i; \theta)\mid V_i, \vv{C}_i}\leq c_2 \|\tilde\theta(V_i, \vv{C}_i)-\theta(V_i, \vv{C}_i)\|^2_2.
    \]
\end{itemize}
We next verify Assumption 2 in \cite{farrell2020deep}.  The random variable are bounded with compact, connected support by our assumptions. The nuisance parameters are also uniformly bounded. Together with smoothness Assumption \ref{assump:smoothness}, we have Assumption 2 in \cite{farrell2020deep} hold. Then we apply Theorem 1 in \cite{farrell2020deep} and yield Proposition \ref{prop:convergence}.

\newpage

\section{Supporting Results for the Debiased Estimator}

\subsection{Invertible Hessian}
\label{appendix:invertible_H}
\begin{lemma}
\label{lemma:invertible_H}
    Suppose Assumptions \ref{assump:exposure_overlap} hold.  Suppose that the estimated scores are bounded: $\|\hat{s}_0\|_{\infty}\leq C$ and   $\|\hat{s}_1\|_{\infty}\leq C$ for the $C$ in Assumption \ref{assump:exposure_overlap}. Under our modeling framework, the expected hessian $H = \bbE\bb{\nabla^2\ell|V,\vv{C}}$ is universally invertible with bounded inverse.
\end{lemma}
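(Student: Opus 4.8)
The plan is to exploit the block-diagonal structure of the expected Hessian and treat the two blocks separately. Writing the nuisance direction as $(\vv{\hat S_0}, \vv{\hat S_1}, \vv{\hat Z})$, the Hessian $\nabla^2\ell$ displayed in Appendix~\ref{appendix:derivatives} is block-diagonal between the score coordinates $(\vv{\hat S_0}, \vv{\hat S_1})$ and the response coordinates $\vv{\hat Z}$, because the mixed second derivatives $\partial^2\ell/\partial\vv{\hat S}\,\partial\vv{\hat Z}$ vanish. Hence $H=\bbE[\nabla^2\ell\mid V,\vv{C}]$ is invertible with bounded inverse iff each diagonal block is. The $\vv{\hat Z}$-block is immediate: from $\partial^2\ell_2/\partial\hat Z_k^2=\bbI[k^*=k]$ and the vanishing off-diagonals, this block is diagonal with entries $\bbE[\bbI[k^*=k]\mid V,\vv{C}]=\bbE_{\vv{W}}[\bbP(k^*=k\mid V,\vv{C},\vv{W})]$, each lying in $[\delta,1]$ by Lemma~\ref{lemma:positive_exposure}; so it is sandwiched between $\delta I$ and $I$.

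The substantive work is the score block $H_1:=\bbE_{\vv{W}}[\nabla^2\ell_1\mid V,\vv{C}]$. First I would observe that $\nabla^2\ell_1$ is the standard multinomial-logit information matrix pulled back through the \emph{linear} score map: writing $\eta_k=\hat S_{0,k}+W_k\hat S_{1,k}$ and $v=A(\vv{W})\theta$, where $A(\vv{W})$ is the Jacobian $\partial\eta/\partial(\vv{\hat S_0},\vv{\hat S_1})$ and $\theta$ is a parameter direction, we have $\theta^\top\nabla^2\ell_1\,\theta=(A\theta)^\top\Lambda(A\theta)$ with $\Lambda=\diag(P)-PP^\top$. Using the same quadratic-form identity already used in the proof of Lemma~\ref{lemma:identification}, $(A\theta)^\top\Lambda(A\theta)=\sum_{k_1<k_2}P_{k_1}P_{k_2}(v_{k_1}-v_{k_2})^2$, and lower-bounding each $P_{k_1}P_{k_2}\ge\delta^2$ via Lemma~\ref{lemma:positive_exposure}, I obtain $\theta^\top\nabla^2\ell_1\,\theta\ge\delta^2K\,(A\theta)^\top\Pi(A\theta)$, where $\Pi=I-\tfrac1K\one\one^\top$ is the projection off the constant direction. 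Taking expectations reduces everything to showing that the fixed matrix $M:=\bbE_{\vv{W}}[A(\vv{W})^\top\Pi A(\vv{W})]$ is positive definite, since then $H_1\succeq\delta^2K\,\lambda_{\min}(M)\,I$.

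To prove $M\succ0$ I would use that $\theta^\top M\theta=\bbE_{\vv{W}}\bn{\Pi A(\vv{W})\theta}^2$ is an average of nonnegative terms, and that under creator-side randomization with $q\in(0,1)$ every configuration $\vv{W}\in\{0,1\}^K$ carries strictly positive probability. Hence $\theta^\top M\theta=0$ forces $\Pi A(\vv{W})\theta=0$, i.e.\ $A(\vv{W})\theta\in\text{span}(\one)$, for \emph{every} $\vv{W}$. Evaluating this at $\vv{W}=\zero$ forces the $\vv{\hat S_0}$-coordinates of $\theta$ to vanish, and then evaluating at each $\vv{W}=\bfe_j$ forces each $\vv{\hat S_1}$-coordinate to vanish, so $\theta=0$. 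Thus $M$, which depends only on $K$ and $q$ and not on $(V,\vv{C})$ or the scores, has strictly positive smallest eigenvalue, and $\lambda_{\min}(M)$ together with $\delta$ yields a universal lower bound on $H_1$. The matching upper bound is routine: $\Lambda\preceq\diag(P)\preceq I$ gives $\nabla^2\ell_1\preceq A^\top A$, whose norm is bounded by a constant depending only on $K$ since the entries of $A$ lie in $\{0,1\}$.

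I expect the main obstacle to be the positive-definiteness of the score block: for any single treatment realization $\vv{W}$ the matrix $\nabla^2\ell_1$ is only positive \emph{semi}-definite, because the multinomial Hessian annihilates the constant direction and the score parametrization maps nontrivial directions (e.g.\ all $\vv{\hat S_1}$ coordinates when $\vv{W}=\zero$) into that null space. Positive definiteness is recovered only after averaging over the randomization, and the crux is the elementary but essential observation that no nonzero $\theta$ can keep $A(\vv{W})\theta$ collinear with $\one$ across all treatment assignments. Everything else—the block decoupling, the $\delta$-bound, and the upper bound—is bookkeeping on top of this point.
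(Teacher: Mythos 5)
Your proof is correct, and it shares the paper's skeleton: the block-diagonal split between the score block and the response block, the identical treatment of the $\vv{\hat{Z}}$-block via the $\delta$-lower bound on exposure probabilities (which, as in the paper, must be applied to the \emph{estimated} bounded scores $\hat{s}_0,\hat{s}_1$ rather than the true ones), and the use of the fact that every configuration $\vv{W}\in\{0,1\}^K$ has probability bounded away from zero under creator-side randomization. Where you genuinely diverge is in how positive definiteness of the score block is established. The paper works directly on the quadratic form of the conditional Hessian: it proves positive semi-definiteness for every $\vv{W}$ by Cauchy--Schwarz, then evaluates the form explicitly at three assignments, $\vv{W}^{(a)}=\zero$, $\vv{W}^{(b)}=(0,1,\dots,1)$, $\vv{W}^{(c)}=(1,0,\dots,0)$, and combines these with explicit weights, yielding the closed-form bound $\lambda_{\min}(H_1)\geq \frac{\eta\delta^2}{2}\min\bc{\frac{1}{3},\frac{\delta^3}{2(1-(K-1)\delta)^2}}$. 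You instead factor $\nabla^2\ell_1=A(\vv{W})^\top\Lambda A(\vv{W})$ with $\Lambda=\diag(P)-PP^\top$, use the exchangeable-pairs identity (the same one appearing in the paper's proof of Lemma \ref{lemma:identification}) to get $\Lambda\succeq \delta^2 K\,\Pi$ with $\Pi=I-\frac{1}{K}\one\one^\top$, and reduce the whole problem to positive definiteness of the score-independent matrix $M=\bbE_{\vv{W}}\bb{A(\vv{W})^\top\Pi A(\vv{W})}$, proved by a null-space argument evaluated at $\vv{W}=\zero$ and $\vv{W}=\bfe_j$, $j=1,\dots,K$ (this needs $K\geq 2$, which is implicit throughout). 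Your route is cleaner and more conceptual: it isolates the one real obstruction—the MNL Hessian annihilates exactly the constant direction, and no nonzero parameter direction maps into $\mathrm{span}(\one)$ under all assignments—and it sidesteps the paper's messiest step (the term involving $\vv{W}^{(c)}$ and the delicate weighting). What it gives up is explicitness: your lower bound $\delta^2 K\,\lambda_{\min}(M)$ involves the unspecified but universal constant $\lambda_{\min}(M)$ depending only on $(K,q)$, whereas the paper's hands-on computation produces fully explicit constants. Both deliver exactly what the lemma requires (the upper bound $\nabla^2\ell_1\preceq A^\top A$ in your last step is harmless but not needed for a bounded inverse), so this is a valid and arguably more transparent alternative proof.
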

\begin{proof}
    Note that $H$ is a two-block diagonal matrix with the first diagonal block being $H_1:=\bbE\bb{\nabla^2\ell_1|V,\vv{C}}$ and the second diagonal block being $H_2:=\bbE\bb{\nabla^2\ell_2|V,\vv{C}}$. It suffices to show that both $H_1$ and $H_2$ are universally invertible with bounded inverses.  

    \paragraph{Regularity of $H_1$.} Consider the sample $(V,\vv{C},\vv{W})$ and estimated nuisance $\hat{s}_0,\hat{s}_1$. For notation convenience, we use $p_k$ to represent $P(k^*=k; V,\vv{C}, \vv{W}, \hat{s}_0,\hat{s}_1)$. Define the vector
    \begin{align*}
        \beta = \begin{pmatrix}
        p_2 \\
        \vdots \\
        p_K\\
        W_1p_1\\
        \vdots\\
         W_{k}p_K
        \end{pmatrix} \in \bbR^{2K-1},
    \end{align*}
    the matrix $A\in \bbR^{(2K-1)\times (2K-1)}$ with its $(K,K)$-th entry being $\one[W_1=1]p_1$ and others zero and matrix
    \begin{equation*}
        B=\begin{pmatrix}
            \diag{(\sqrt{p_2},\dots, \sqrt{p_K})}\\
            0,\dots, 0,\\
            \diag{(\sqrt{p_2}W_2,\dots, \sqrt{p_K}W_{k})}
        \end{pmatrix}\in \bbR^{(2K-1)\times (2K-1)},
    \end{equation*}
    We thus have 
   \begin{align*}
       H_1 = \bbE\bb{
      A+ BB^T - \beta\beta^T 
       \mid V, \vv{C}
       }.
   \end{align*}
   We now show that $H_1$ is positive definite. For  
any vector $x=(x_1,\dots, x_{2K-1})$ and any treatment assignment $\vv{W}$, we have
\begin{align}
   & x^T(A+BB^T - \beta\beta^T)x  \label{eq:conditional_H}
    \\
    &= p_1W_{1}x_{K}^2 + \sum_{j=1}^{K-1} p_{j+1}(x_j + W_{j+1}x_{K+j})^2   - \bp{p_1W_{1}x_{K}+\sum_{j=1}^{K-1}
    p_{j+1}(x_j + W_{j+1}x_{K+j})
    }^2\nonumber\\
    &=\bp{p_1W_{1}x_{K}^2 + \sum_{j=1}^{K-1} p_{j+1}(x_j + W_{j+1}x_{K+j})^2}\bp{\sum_{j=1}^Kp_j} \nonumber  \\
   &\quad\quad  - \bp{p_1W_{1}x_{K}+\sum_{j=1}^{K-1}
    p_{j+1}(x_j + W_{j+1}x_{K+j})
    }^2 
    \stackrel{(i)}{\geq} 0, \nonumber
\end{align}
where (i) is by Cauchy-Swartzh inequality. That is, for any treatment assignment, the Hessian is positive semi-definite, and thus $H_1$ (the expected Hessian) is positive semi-definite. 

We next consider specific assignments of $\vv{W}$ to lower bound the smallest eigenvalue of $H_1$. Also note that under the assumption of the boundedness of score functions, there exists a constant $\delta>0$, such that for any $(U,\vv{V}, \vv{W}, k^*)$, we have 
    $
    P(k^*\mid U,\vv{V},\vv{W},\hat{s}_0, \hat{s}_1)\geq \delta.
    $.
\begin{description}
    \item[Consider $\vv{W}^{(a)}=(0, \dots, 0).$] Then we have  Equation~\eqref{eq:conditional_H} instantiate as
    \begin{align}
          x^T&(A+BB^T- \beta\beta^T)x  = \sum_{j=1}^{K-1} p^{(a)}_{j+1}x_j^2   - \bp{\sum_{j=1}^{K-1}
    p^{(a)}_{j+1}x_j 
    }^2\nonumber\\
   & = p_1^{(a)} \sum_{j=1}^{K-1} p^{(a)}_{j+1}x_j^2 + \bp{\sum_{j=1}^{K-1} p^{(a)}_{j+1}}\sum_{j=1}^{K-1} p^{(a)}_{j+1}x_j^2   - \bp{\sum_{j=1}^{K-1}
    p^{(a)}_{j+1}x_j
    }^2\geq p_1 \sum_{j=1}^{K-1} p^{(a)}_{j+1}x_j^2 \geq \delta^2 \sum_{j=1}^{K-1}x_j^2, \label{eq:no_item}
    \end{align}
    where the last inequality is due to Cauchy-Schwartz inequality.
    \item[Consider $\vv{W}^{(b)}=(0, 1, \dots, 1).$] Then we have  Equation~\eqref{eq:conditional_H} instantiate as
    \begin{align}
        x^T(A+&BB^T - \beta\beta^T)x  = \sum_{j=1}^{K-1} p^{(b)}_{j+1}(x_j + x_{K+j})^2   - \bp{\sum_{j=1}^{K-1}
    p^{(b)}_{j+1}(x_j + x_{K+j})
    }^2\nonumber\\
    &= p^{(b)}_1 \sum_{j=1}^{K-1} p^{(b)}_{j+1}(x_j + x_{K+j})^2 + \bp{\sum_{j=1}^{K-1} p^{(b)}_{j+1}}\sum_{j=1}^{K-1} p^{(b)}_{j+1}(x_j + x_{K+j})^2   - \bp{\sum_{j=1}^{K-1}
    p^{(b)}_{j+1}(x_j + x_{K+j})
    }^2\nonumber\\
    &\stackrel{(i)}{\geq} p^{(b)}_1 \sum_{j=1}^{K-1} p^{(b)}_{j+1}(x_j + x_{K+j})^2  \geq  \delta^2 \sum_{j=1}^{K-1} (x_j + x_{K+j})^2  = \delta^2 \sum_{j=1}^{K-1}\bp{x_j^2+x_{K+j}^2+2x_jx_{K+j}}\label{eq:item_1}\\
    &\stackrel{(ii)}{\geq} \delta^2 \sum_{j=1}^{K-1} (x_j^2+x_{K+j}^2-\frac{2}{3}x_{K+j}^2-\frac{3}{2}x_j^2) = \delta^2 \sum_{j=1}^{K-1} \bp{-\frac{1}{2}x_j^2+\frac{1}{3}x_{K+j}^2 }, \nonumber
    \end{align}
    where both (i) and (ii) are due to Cauchy-Schwartz inequality.

    \item[Consider $\vv{W}^{(c)}=(1,0,\dots,0).$] We have Equation~\eqref{eq:conditional_H} instantiate as
    \begin{align}
    \label{eq:item_2}
         x^T(A+BB^T - \beta\beta^T)x&= p^{(c)}_1x_{K}^2 + \sum_{j=1}^{K-1} p^{(c)}_{j+1}x_j^2   - \bp{p^{(c)}_1x_{K}+\sum_{j=1}^{K-1}
    p^{(c)}_{j+1}x_j
    }^2
    \end{align}
\end{description}
Note that under creator-side randomization,  each $\vv{W}$ has  assignment probability bounded away from zero, and we denote this lower bound as $\eta>0$. Also recall that each item exposure probability is lower bounded by $\delta>0$, with $K\delta\leq 1$.
    Combining Equation~\eqref{eq:no_item}, \eqref{eq:item_1}, \eqref{eq:item_2}, with the fact that the Hessian for any $\vv{W}$ is semi-definite, we have
    \begin{align}
       & x^TH_1x = \bbE[x^T(A+BB^T-\beta\beta^T)\mid V, \vv{C}] 
       \nonumber
       \\
       &\geq P\bp{\vv{W}=\vv{W}^{(a)}} \delta^2\sum_{j=1}^{K-1}x_j^2
         +  P\bp{\vv{W}=\vv{W}^{(b)}} \delta^2\sum_{j=1}^{K-1} \bc{-\frac{1}{2}x_j^2+\frac{1}{3}x_{K+j}^2} \nonumber \\
       &  \quad +  P\bp{\vv{W}=\vv{W}^{(c)}} \bc{p_1^{(c)}x_K^2+ \sum_{j=1}^{K} p_{j+1}^{(c)} x_j^2   - \bp{p_1^{(c)} x_{K}+\sum_{j=1}^{K-1}p_{j+1}^{(c)} x_j
    }^2} 
      \nonumber  \\
       &\geq \eta \delta^2\sum_{j=1}^{K-1}x_j^2
         + \frac{\eta\delta^2}{2}\sum_{j=1}^{K-1} \bc{-\frac{1}{2}x_j^2+\frac{1}{3}x_{K+j}^2}  + \frac{\eta \delta^2}{2(1-(K-1)\delta)^2} \bc{p_1^{(c)}x_K^2+ \sum_{j=1}^{K} p_{j+1}^{(c)} x_j^2   - \bp{p_1^{(c)} x_{K}+\sum_{j=1}^{K-1}p_{j+1}^{(c)} x_j
    }^2} \nonumber \\
    &=   \frac{\eta\delta^2}{2}  \sum_{j=1}^{K-1}\bc{\frac{1}{2}x_j^2+\frac{1}{3}x_{K+j}^2}
    +
I,\label{eq:regularity_H_1_a}
\end{align}
where $I=\frac{\eta\delta^2}{2(1-(K-1)\delta)^2} \bc{ p_1^{(c)} x_{K}^2 + \sum_{j=1}^{K-1} (p_{j+1}^{(c)}+(1-(K-1)\delta)^2) x_j^2   - \bp{p_1^{(c)} x_{K}+\sum_{j=1}^{K-1}p_{j+1}^{(c)} x_j
    }^2}$. We next lower bound term I.
\begin{align}
   I &\geq 
\frac{\eta\delta^2}{2(1-(K-1)\delta)^2} \bc{ p_1^{(c)} x_{K}^2 + \sum_{j=1}^{K-1} (1+p_1^{(c)})p_{j+1}^{(c)} x_j^2   - \bp{p_1^{(c)} x_{K}+\sum_{j=1}^{K-1}p_{j+1}^{(c)} x_j
    }^2} \nonumber \\
    &= \frac{\eta\delta^2}{2(1-(K-1)\delta)^2} \bc{\bp{p_1^{(c)}  x_{K}^2 + \sum_{j=1}^{K-1} (1+p_1^{(c)})p_{j+1}^{(c)} x_j^2}\bp{
    p_1^{(c)} +\sum_{j=1}^{K-1}\frac{p_{j+1}^{(c)}}{1+p^{(c)}_1} + \frac{p_1^{(c)}-(p_1^{(c)})^2}{p_1^{(c)}+1}
    }   - \bp{p_1^{(c)} x_{K}+\sum_{j=1}^{K-1}p_{j+1}^{(c)} x_j
    }^2} \nonumber \\
    &\stackrel{(i)}{\geq}  \frac{\eta\delta^2}{2(1-(K-1)\delta)^2} \bp{p_1^{(c)}  x_{K}^2 + \sum_{j=1}^{K-1} (1+p_1^{(c)})p^{(c)}_{j+1} x_j^2}\bp{
     \frac{p_1^{(c)}-(p_1^{(c)})^2}{p_1^{(c)}+1}
    } \nonumber  \\
    &\geq  \frac{\eta\delta^2}{2(1-(K-1)\delta)^2} \bp{p^{(c)}_1  x_{K}^2 + \sum_{j=1}^{K-1} (1+p^{(c)}_1)p^{(c)}_{j+1} x_j^2}\frac{p^{(c)}_1(1-p_1^{(c)})}{2}
    \nonumber  \\
    &\geq  \frac{\eta\delta^2(p^{(c)}_1)^2(1-p_1^{(c)})}{4(1-(K-1)\delta)^2}   x_{K}^2  \geq \frac{\eta\delta^5}{4(1-(K-1)\delta)^2}   x_{K}^2 , \label{eq:regularity_H_1_b}
    \end{align}
    where (i) is by Cauchy-Schwarz. Putting Equation~\eqref{eq:regularity_H_1_a} and \eqref{eq:regularity_H_1_b} together, for any $x=(x_1,\dots,x_{2K-1})$, we have
    \begin{equation}
        x^TH_1x \geq \frac{\eta\delta^2}{2}  \sum_{j=1}^{K-1}\bc{\frac{1}{2}x_j^2+\frac{1}{3}x_{K+j}^2} + \frac{\eta\delta^5}{4(1-(K-1)\delta)^2}   x_{K}^2.
    \end{equation}
    Therefore, $H_1$ has the smallest eigenvalue of   greater than or equal to $\frac{\eta\delta^2}{2}\min\bp{\frac{1}{3},
    \frac{\delta^3}{2(1-(K-1)\delta)^2}
    }$ and thus is invertible with bounded inverse.

    \paragraph{Regularity of $H_2$.} We have $H_2$ is a diagonal matrix with its $k$-th diagonal entry being
    \begin{align*}
        H_2(k,k)=\bbE\bb{\bbE\bb{P(k^*=k \mid U,\vv{V},\vv{W},\hat{s}_0, \hat{s}_1)}\mid U,\vv{V}}\geq \delta.
    \end{align*}
    As a result, $H_2\geq \delta \cdot I$ and thus is invertible with bounded inverse.
\end{proof}

\subsection{Universal Neyman Orthogonality}
\label{appendix:universal_orthogonality}

We show that the debiased estimate $\psi$, defined in Equation~\eqref{eq:debiase_estimate}, satisfies   the universal Neyman orthogonality \citep{chernozhukov2019semi,foster2023orthogonal}. This property means that the nuisance estimation error only has a second order effect on the debiased estimate -- a key property for achieving the asymptotic normality of the debiased estimator.

\begin{proposition}[Universal Orthogonality]
    The  debiased estimator  $\psi$ defined in Equation~\eqref{eq:debiase_estimate} is   universally orthogonal with respect to the nuisances  in the sense that, for any nuisance components $(\tilde{s}_0, \tilde{s}_1, \tilde{z}, \tilde{H})$, 
    \[
    \bbE[\nabla\psi(V,\vv{C}, \vv{W}, k^*, Y;\tilde{s}_0 = s_0, \tilde{s}_1 =s_1, \tilde{z}=z, \tilde{H}=H )\mid V, \vv{C}]=0,
    \]
    where   $(V,\vv{C}, \vv{W}, k^*, Y)$ is sampled  from the creator-side randomization experiment, and $\nabla\psi$ is the gradient with respect to the nuisances.

\end{proposition}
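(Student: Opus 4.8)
The plan is to differentiate $\psi = \mu - \nabla\mu^\top H^{-1}\nabla\ell$ with respect to the nuisances and show that each block of the resulting gradient has zero conditional mean at the truth. Following Appendix~\ref{appendix:debiased_form}, I treat the nuisances as the finite-dimensional reductions $\theta=(\vv{S_0},\vv{S_1},\vv{Z})$ on which $\mu$ and $\ell$ depend, together with the matrix $H$; here $\nabla\mu=\nabla_\theta\mu$, $\nabla\ell=\nabla_\theta\ell$, and $H^{-1}$ exists by Lemma~\ref{lemma:invertible_H}. I split $\nabla\psi$ into the $\theta$-block and the $H$-block and handle them separately, both evaluated at the true $\theta_0=(\vv{S_0},\vv{S_1},\vv{Z})$ and the true $H=\bbE[\nabla^2_\theta\ell(\theta_0)\mid V,\vv{C}]$.

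The argument rests on two identities, each obtained by first conditioning on $(V,\vv{C},\vv{W})$ and then averaging $\vv{W}$ over the creator-side randomization (the same distribution used to define $H$). First, the score identity $\bbE[\nabla_\theta\ell(\theta_0)\mid V,\vv{C}]=0$: using the explicit gradients in Appendix~\ref{appendix:derivatives}, the choice-model components $\partial\ell_1/\partial S_{0,k}$ and $\partial\ell_1/\partial S_{1,k}$ are standard multinomial-logit scores whose conditional mean vanishes because $\bbE[\bbI\{k^*=k\}\mid V,\vv{C},\vv{W}]=P(k^*=k;\cdot)$ under the true choice model, while the response component $\partial\ell_2/\partial Z_k=\bbI\{k^*=k\}(Z_k-Y)$ has zero conditional mean because, under the viewer response model in Eq.~\eqref{eq:viewer_outcome}, $z$ is the conditional mean of $Y$ given the exposed pair, so $\bbE[Y\mid k^*=k,V,\vv{C}]=Z_k$. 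Averaging over $\vv{W}$ preserves the zero. Second, the Hessian identity $\bbE[\nabla^2_\theta\ell(\theta_0)\mid V,\vv{C}]=H$ holds by the very definition of $H$ as the expected Hessian under the randomization of $\vv{W}$.

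With these in hand the computation is routine. For the $H$-block, entrywise $\partial\psi/\partial H_{kl}=\nabla\mu^\top H^{-1}\bp{\partial H/\partial H_{kl}}H^{-1}\nabla\ell$, and since $\nabla\mu$ and $H^{-1}$ are $(V,\vv{C})$-measurable, the score identity yields $\bbE[\partial\psi/\partial H_{kl}\mid V,\vv{C}]=0$. For the $\theta$-block, differentiating the correction term by the product rule gives
\[
\nabla_\theta\bp{\nabla\mu^\top H^{-1}\nabla\ell}=\bp{\nabla^2_\theta\mu}H^{-1}\nabla_\theta\ell+\bp{\nabla^2_\theta\ell}H^{-1}\nabla_\theta\mu,
\]
using symmetry of $H^{-1}$ and of the Hessians. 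Again $\nabla_\theta\mu$, $\nabla^2_\theta\mu$, and $H^{-1}$ are $(V,\vv{C})$-measurable, so taking $\bbE[\cdot\mid V,\vv{C}]$ and invoking the two identities gives $\bp{\nabla^2_\theta\mu}H^{-1}\cdot 0+H\cdot H^{-1}\nabla_\theta\mu=\nabla_\theta\mu$, which exactly cancels the $\nabla_\theta\mu$ produced by differentiating $\mu$ itself. Hence $\bbE[\nabla_\theta\psi\mid V,\vv{C}]=0$, completing the argument.

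The main obstacle is the score identity, and specifically its response-model piece: it relies essentially on the modeling assumption (Eq.~\eqref{eq:viewer_outcome}) that $Y$ depends only on the exposed viewer-content pair and not on treatment status, so that $z$ is a genuine conditional mean and the $\ell_2$-score is centered. One must also be careful that the marginalization over $\vv{W}$ in both identities is taken under exactly the creator-side randomization distribution used to define $H$; otherwise the Hessian identity fails and the $\theta$-block no longer cancels. The reparameterization of Appendix~\ref{appendix:debiased_form}, namely that $\mu$ and $\ell$ depend on the nuisances only through $\theta$, is what makes $\nabla\psi$ a well-defined finite-dimensional object, and Lemma~\ref{lemma:invertible_H} guarantees the existence of $H^{-1}$ throughout.
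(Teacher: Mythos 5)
Your proposal is correct and follows essentially the same route as the paper's proof: differentiate $\psi$ via the product rule, use the first-order optimality (score) identity $\bbE[\nabla_\theta\ell(\theta_0)\mid V,\vv{C}]=0$ and the Hessian identity $\bbE[\nabla^2_\theta\ell(\theta_0)\mid V,\vv{C}]=H$, and exploit the $(V,\vv{C})$-measurability of $\nabla\mu$, $\nabla^2\mu$, and $H^{-1}$ to make the $\nabla\mu$ terms cancel and the Hessian-block term vanish. The only cosmetic difference is that the paper parameterizes the matrix nuisance as $\tilde{H}^{-1}$ while you differentiate with respect to the entries of $H$ via the derivative-of-inverse formula; both arguments reduce to linearity in $\nabla\ell$, which has zero conditional mean, and your explicit verification of the score identity from the Appendix gradient formulas simply spells out what the paper invokes as the first-order optimality condition.
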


\begin{proof}
    Recall that 
    \begin{align}
\psi(V, \vv{C}, \vv{W}, k^*, Y; \tilde{s}_0, \tilde{s}_1,&\tilde{z}, \tilde{H} ) =~\mu(V,\vv{C}; \tilde{s}_0, \tilde{s}_1,\tilde{z}) \nonumber \\
&- \nabla\mu(V, \vv{C}; \tilde{s}_0, \tilde{s}_1,\tilde{z})^T \tilde{H}(V,\vv{C};\tilde{s}_0, \tilde{s}_1,\tilde{z})^{-1} \nabla\ell(V, \vv{C}, \vv{W}, k^*, Y; \tilde{s}_0, \hat{s}_1,\tilde{z}),\nonumber
\end{align}
Let $\tilde{h} = (\tilde{s}_0, \tilde{s}_1, \tilde{z}, \tilde{H}^{-1})$ with the ground truth $h = (s_0, s_1, z, H^{-1})$. It suffices to show that 
 \[
    \bbE[\nabla_{\tilde{h}}\psi(V,\vv{C}, \vv{W}, k^*, Y; \tilde{h}=h)\mid V, \vv{C}]=0.
\]
We have
\begin{align*}
    \frac{\partial \psi(\tilde{h}=h)}{\partial (\tilde{s}_0, \tilde{s}_1, \tilde{z})} = &\nabla\mu(V,\vv{C}; {s}_0, {s}_1,{z})   -\nabla^2 \mu(V,\vv{C}; {s}_0, {s}_1,{z}) H(V,\vv{C};s_0, s_1,z)^{-1} \nabla\ell(V, \vv{C}, \vv{W}, k^*, Y; s_0, s_1,z)\\
    &\quad- \nabla \mu(V,\vv{C}; {s}_0, {s}_1, {z}) H(V,\vv{C};s_0, s_1,z)^{-1} \nabla^2\ell(V, \vv{C}, \vv{W}, k^*, Y; s_0, s_1,z).
\end{align*}
 Then taking the expectation with respect to $(\vv{W},k^*,Y)$, we have
\begin{align*}
\bbE\bb{\frac{\partial \psi(\tilde{h}=h)}{\partial (\tilde{s}_0, \tilde{s}_1, \tilde{z})}\mid V,\vv{C}}& = \nabla\mu(V,\vv{C}; {s}_0, {s}_1,{z})\bp{I-H(V,\vv{C};s_0, s_1,z)^{-1} \bbE\bb{ \nabla^2\ell(V, \vv{C}, \vv{W}, k^*, Y; s_0, s_1,z)\mid  V,\vv{C}}}\\
&\quad\quad -\nabla^2 \mu(V,\vv{C}; {s}_0, {s}_1,{z}) H(V,\vv{C};s_0, s_1,z)^{-1} \bbE\bb{ 
\nabla\ell(V, \vv{C}, \vv{W}, k^*, Y; s_0, s_1,z)
\mid V, \vv{C}}\\
&\stackrel{(i)}{=} \nabla\mu(V,\vv{C}; {s}_0, {s}_1,{z})\bp{I-H(V,\vv{C};s_0, s_1,z)^{-1}H(V,\vv{C};s_0, s_1,z)}=0,
\end{align*}
where (i) is because the ground truth $(s_0,s_1,z)$ satisfies the first-order optimality condition of loss function $\ell$. 
Similarly, we have
\begin{align*}
    \bbE\bb{ \frac{\partial \psi(\tilde{h}=h)}{\partial (\tilde{H}^{-1})}\mid V,\vv{C} } = -\bbE\bb{\nabla\ell(V, \vv{C}, \vv{W}, k^*, Y; s_0, s_1,z)\mid V,\vv{C}}\nabla\mu(V,\vv{C}; {s}_0, {s}_1,{z})^T = 0.
\end{align*}
\end{proof}

\newpage
\section{Proof of the Asymptotics of the Debiased Estimator}
\label{appendix:clt}
We prove the below theorem to show the asymptotic regularity of the debiased estimator.
\clt* 

Define the oracle estimator $\tilde{\tau}^{DB}_n$ as
\begin{equation}
    \tilde{\tau}^{DB}_n = \frac{1}{n} \sum_{i=1}^n  \psi\bp{V_i, \vv{C_i}, \vv{W_i}, k^*_i, Y_i; s_0, s_1,z, H},
\end{equation}
with that 
\begin{align}
\psi(V_i, \vv{C_i}, \vv{W_i}, k^*_i, Y_i; s_0, s_1,&z, H ) =~\mu(V_i,\vv{C_i}; s_0, s_1, z) \\
&- \nabla\mu(V_i, \vv{C_i}; s_0, s_1, z)^T H(V_i,\vv{C_i};s_0, s_1, z)^{-1} \nabla\ell(V_i, \vv{C_i}, \vv{W_i}, k^*_i, Y_i; s_0, s_1, z).\nonumber
\end{align}
Note that 
\begin{equation}
    \bbE[\psi(V_i, \vv{C_i}, \vv{W_i}, k^*_i, Y_i; s_0, s_1,z, H )\mid V_i, \vv{C_i}] = \mu\bp{V_i, \vv{C_i}; s_0, s_1,z},
\end{equation}
by the first order condition of $(s_0, s_1,z)$ in $\ell$ when conditioning on $(V_i, \vv{C}_i, \vv{W}_i)$. 

Our proof follows three main steps: 
\begin{enumerate}
    \item[(i)] Show $\hat{\tau}_n^{DB}$ and  $\tilde{\tau}_n^{DB}$ have similar asymptotic behavior.
    \item[(ii)] Show asymptotic normality of $\tilde{\tau}_n^{DB}$.
    \item[(iii)] Show asymptotically consistent variance estimator $\widehat{V}_n$.
\end{enumerate}

\subsection{\texorpdfstring{Step I: Show $\hat{\tau}_n^{DB}$ and $\tilde{\tau}_n^{DB}$ have similar asymptotic behavior.}{Similar asymptotics}}
This part's proof is largely motivated by the proof pattern for Theorem 3.1 in \cite{chernozhukov2018double}, except that we need to additionally deal with the correlation among samples due to the shared items in their consideration sets. 
For notation convenience, define $\theta_0=(s_0, s_1,z, H)$ and $\hat{\theta}_{0} = \bp{\hat{s}_0, \hat{s}_1,\hat{z}, \hat{H}}.$ Write $Z_i:=(V_i, \vv{C_i}, \vv{W_i}, k_i^*, Y_i)$. 
 Note that 
\begin{align}
     \bbE\bb{\psi\bp{Z_i; \theta_0}\mid V_i, \vv{C}_i, \vv{W}_i}\stackrel{(i)}{=} \bbE\bb{\mu\bp{V_i, \vv{C_i}; s_0, s_1,z}\mid V_i, \vv{C}_i, \vv{W}_i} = \mu\bp{V_i, \vv{C_i}; s_0, s_1,z},
\end{align}
where $(i)$ is by the first-order optimality of $(s_0, s_1, z)$ in $\ell$. 
We have
\begin{align}
    \sqrt{n}\ba{\hat{\tau}_n^{DB} -\tilde{\tau}_n^{DB}} 
    \leq I_{1} + I_{2},
\end{align}
where 
\begin{align*}
    I_{1}: &= \bn{\frac{1}{\sqrt{n}} \sum_{i}\bp{\psi(Z_i;\hat{\theta}_0) - \bbE[\psi(Z_i;\hat{\theta}_0)]} - \frac{1}{\sqrt{n}} \sum_{i}\bp{\psi(Z_i;\theta_{0}) - \bbE[\psi(Z_i;\theta_{0})]}},\\
     I_{2}: &= \bn{\sqrt{n} \bp{ \bbE[\psi(Z_i;\hat{\theta}_{0})] - \bbE[\psi(Z_i;\theta_{0})]} }.
\end{align*}
We now bound $I_{1}$ and $I_{2}$ respectively. 

\emph{Bounding $I_{1}$.}
Write $Q_i :=\bp{\psi(Z_i;\hat{\theta}_{0}) - \bbE[\psi(Z_i;\hat{\theta}_{0})]} -\bp{\psi(Z_i;\theta_{0}) - \bbE[\psi(Z_i;\theta_{0})]}$.
We have
\begin{align}
    \bbE[I_{1}^2] = & \frac{1}{n}\bbE\bb{\bp{\sum_iQ_i}^2}=\frac{1}{n}\sum_i\bbE[Q_i^2] + \frac{1}{n}\sum_{i}\sum_{j\neq i} \one\bp{\vv{W}_i\cap \vv{W}_j\neq \emptyset}\bbE[Q_iQ_j]\nonumber\\
    =&\frac{O(a_n)}{n}\sum_i\bbE[Q_i^2] = O\bp{\frac{a_n}{n}}\sum_i\bbE\bb{\bc{\bp{\psi(Z_i;\hat{\theta}_{0}) - \bbE[\psi(Z_i;\hat{\theta}_{0})]} -\bp{\psi(Z_i;\theta_{0}) - \bbE[\psi(Z_i;\theta_{0})]}}^2}\\
    \leq & O\bp{\frac{a_n}{n}}
 \sum_i \bbE\bb{\bp{\psi(Z_i;\hat{\theta}_{0}) -\psi(Z_i;\theta_{0})} ^2} =O(a_n\epsilon_n^2).
\end{align}
Therefore, by Markov inequality, we have $I_1=O_p(a_n^{1/2}\epsilon_n)$. 

\emph{Bounding $I_{2}$.}
 Define the function 
\begin{equation}
    f(r):=\bbE[\psi(Z_i;\theta_0 + r(\hat{\theta}_{0}-\theta_0))] - \bbE[\psi(Z_i;\theta_0))],\quad r\in(0,1).
\end{equation}
By Taylor expansion, we have 
\begin{equation}
    f(1) = f(0) + f'(0) + f''(\tilde{r})/2,\quad \mbox{for some }\tilde{r}\in(0,1).
\end{equation}
Note that by the Neyman orthogonality (shown in \ref{appendix:universal_orthogonality}), we have $f'(0)= 0$.
With the bounded inverse of Hessian, we have
\begin{equation}
    \bbE[\|f''(\tilde{r})\|]\leq \sup_{r\in(0,1)}\|f''(r)\|= O(\epsilon_n^2).
\end{equation}
We have
\begin{align}
     I_{2,k}  =O_p \bp{\sqrt{n}\epsilon_n^2}.
\end{align}
Combining the bound for $I_1$ and $I_2$, we have
\begin{align}
   \sqrt{n}\bp{\hat{\tau}_n^{DB} - \tilde{\tau}_n^{DB}} =O_p\bp{a_n^{1/2} \epsilon_n +  n^{1/2}\epsilon_n^2 }=o_p(1), \label{eq:tilde_hat_tau}
\end{align}
with $a_n=O(n^{1/4})$ and $\epsilon_n=o(n^{-1/4})$.

\subsection{\texorpdfstring{Step II: Show asymptotic normality of $\tilde{\tau}_n^{DB}$.}{Asymptotic normality}
}\label{appendix:clt_tautilde}
Recall the data generating process: when a new viewer $V_i$ arrives, the back-end retrieval system firstly generates the consideration set $\vv{C}_i$. For content items that have shown in previous samples, the  treatment status remains unchanged. For content items that haven't appear, we sample the treatment status from i.i.d.~Bernoulli randomized trials. This procedure constructs the treatment collection $\vv{W}_i$. Then given $(V_i, \vv{C}_i, \vv{W}_i)$, the algorithm chooses  item $k_i^*$ to expose, yielding the viewer outcome  $Y_i$ and the observation tuple $Z_i:=(V_i, \vv{C}_i, \vv{W}_i, k_i^*, Y_i)$. We now apply the martingale theorem to analyze the asymptotic behavior of $\tilde{\tau}_n$. Denote the $\sigma$-field $\calF_i:=\sigma(Z_{1},\dots Z_i)$. We have that
\begin{align*}
    \bbE[\psi(Z_i; \theta_0)\mid \calF_{i-1}] = &\bbE\bb{
    \bbE\bb{\psi(Z_i; \theta_0)\mid V_i,\vv{C}_i, \vv{W_i} }\mid \calF_{i-1}
    } \\
    =&  \bbE\bb{
    \bbE\bb{\mu(V_i,\vv{C}_i; \theta_0)\mid V_i,\vv{C}_i, \vv{W_i} }\mid \calF_{i-1}
    } = \bbE\bb{\mu(V_i,\vv{C}_i; \theta_0)} = \tau.
\end{align*}
Therefore $\{\psi(Z_i; \theta_0) -\tau\}$ forms a martingale difference sequence with respect to filtration $\{\calF_i\}$.
We now apply the following result from \cite{hall2014martingale}. 
\begin{proposition}[Martingale Central Limit Theorem, Theorem 3.2, \cite{hall2014martingale}]
    Let $\{\xi_i\}$ be a martingale difference sequence with respect to filtration $\{\calF_i\}$, and let $\eta^2$ be an a.s.~finite random variable. Suppose that:
    \begin{align}
        &\max_i |\xi_i|\xrightarrow{p}0,\label{eq:mds_decay}\\
        &\sum_i \xi_i^2 \xrightarrow{p}\eta^2,\label{eq:mds_var}\\
        &\bbE[\max_i \xi_i^2] \text{ is bounded}.\label{eq:mds_bound}
    \end{align}
    Then $\sum_{i=1}^n X_i \xrightarrow{d}Z$ (stably), where the random variable $Z$ has characteristic function $\bbE[\exp(-\frac{1}{2}\eta^2t^2)]$.
\end{proposition}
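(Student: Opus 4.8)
The statement is the standard stable martingale central limit theorem, and the plan is to prove it through McLeish's characteristic-function device, which exploits the martingale-difference structure to turn an otherwise intractable expectation of $\exp(itS_n)$, where $S_n=\sum_{i=1}^n\xi_i$, into a telescoping product. Fix $t\in\bbR$, write $V_n=\sum_{i=1}^n\xi_i^2$, and introduce the complex-valued product
\[
T_n=\prod_{j=1}^n\bp{1+it\xi_j}.
\]
Since $\bbE[\xi_j\mid\calF_{j-1}]=0$, we have $\bbE[1+it\xi_j\mid\calF_{j-1}]=1$, so conditioning successively on $\calF_{n-1},\calF_{n-2},\dots$ yields $\bbE[T_n]=1$ (the conditional-expectation interchanges are legitimate because \eqref{eq:mds_bound} forces each $\xi_j\in L^2$, hence $T_n\in L^1$). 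This identity is the engine of the proof: it supplies an expectation we can compute exactly and against which $\exp(itS_n)$ will be compared.

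Next I would relate $T_n$ to $\exp(itS_n)$. Using the complex expansion $\log(1+it\xi_j)=it\xi_j+\tfrac12 t^2\xi_j^2+\rho_j$, with $|\rho_j|\le c\,|t\xi_j|^3$ whenever $|t\xi_j|$ lies below a fixed threshold, taking the product gives
\[
T_n=\exp\bp{itS_n+\tfrac12 t^2V_n+\textstyle\sum_j\rho_j},
\qquad
\exp(itS_n)=T_n\,\exp\bp{-\tfrac12 t^2V_n-\textstyle\sum_j\rho_j}.
\]
The remainder is controlled by $\sum_j|\rho_j|\le c|t|^3\bp{\max_j|\xi_j|}V_n$, which tends to $0$ in probability because $\max_j|\xi_j|\xrightarrow{p}0$ by \eqref{eq:mds_decay} and $V_n=O_p(1)$ by \eqref{eq:mds_var}; the same condition \eqref{eq:mds_decay} makes the event ``every $|t\xi_j|$ below threshold'' have probability tending to one, so the expansion is valid with high probability. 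Consequently the correcting factor $G_n:=\exp(-\tfrac12 t^2V_n-\sum_j\rho_j)$ converges in probability to $G:=\exp(-\tfrac12 t^2\eta^2)$ and is bounded by a constant on that high-probability event.

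The crux — and the step I expect to be the main obstacle — is to pass to the limit in $\bbE[\exp(itS_n)]=\bbE[T_nG_n]$ despite the strong dependence between $T_n$ and $G_n$; a naive product-of-limits is unavailable because $G_n$ does not decouple from $T_n$. This is exactly what McLeish's lemma delivers: if $\{T_n\}$ is uniformly integrable and $\bbE[T_n]\to1$, then $\bbE[T_nG_n]\to\bbE[G]$ for bounded $G_n\xrightarrow{p}G$. Establishing the uniform integrability of $\{T_n\}$ is where condition \eqref{eq:mds_bound} enters: from $|T_n|^2=\prod_j(1+t^2\xi_j^2)\le\exp(t^2V_n)$ and a stopping-time truncation at the first time $V_n$ exceeds a fixed level, the boundedness of $\bbE[\max_i\xi_i^2]$ together with the negligibility of the largest increment controls the tails of $|T_n|$ uniformly in $n$. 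Granting this, one obtains $\bbE[\exp(itS_n)]\to\bbE[\exp(-\tfrac12 t^2\eta^2)]$, which is precisely the claimed (unconditional) characteristic function.

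To upgrade ordinary convergence to the asserted \emph{stable} convergence, I would repeat the computation with $\exp(itS_n)$ weighted by an arbitrary bounded $\calF_m$-measurable random variable $\lambda$ (fixed $m$, then $n\to\infty$). Freezing the first $m$ factors, the martingale-difference telescoping gives $\bbE[\lambda T_n]=\bbE\bb{\lambda\prod_{j\le m}(1+it\xi_j)}$ for $n\ge m$, and since those first $m$ increments are asymptotically negligible by \eqref{eq:mds_decay} this tends to $\bbE[\lambda]$. The identical remainder and uniform-integrability arguments then give $\bbE[\lambda\exp(itS_n)]\to\bbE[\lambda\exp(-\tfrac12 t^2\eta^2)]$; as variables measurable with respect to some $\calF_m$ are dense in the bounded $\sigma(\bigcup_m\calF_m)$-measurable functions, this identifies the stable limit as the variance mixture of centered Gaussians with random variance $\eta^2$, completing the proof.
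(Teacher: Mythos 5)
Two things up front: the paper does not actually prove this proposition --- it is imported verbatim from Hall and Heyde (their Theorem 3.2), so the relevant benchmark is the textbook proof, and your sketch does follow exactly that canonical McLeish route (the product $T_n=\prod_j(1+it\xi_j)$ with mean one, the exponential comparison $e^{itS_n}=T_nG_n$, uniform integrability, then a weighting argument for stability). Architecturally you are on the right path, but two steps are wrong as written. First, the claim that $\xi_j\in L^2$ forces $T_n\in L^1$ fails: a product of $n$ square-integrable factors need not be integrable (you would need roughly $\xi_j\in L^n$), so even the telescoping identity $\bbE[T_n]=1$ is not justified for the raw sequence. The standard fix --- which you invoke only later, and only for uniform integrability --- must come first: replace $\xi$ by the sequence stopped when $\sum_{j<i}\xi_j^2$ crosses a fixed level, prove $\bbE[T_n]=1$ and UI for the stopped product (here $\bbE[\max_i\xi_i^2]<\infty$ controls the single overshooting increment), and note the stopping is vacuous with probability tending to one by \eqref{eq:mds_var}. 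The order of operations is not cosmetic; without it the engine of the proof has no fuel.

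Second, and more seriously, the ``McLeish lemma'' you lean on at the crux --- $T_n$ UI, $\bbE[T_n]\to1$, $G_n\xrightarrow{p}G$ bounded imply $\bbE[T_nG_n]\to\bbE[G]$ --- is false when $G$ is random and dependent on the $T_n$'s: take $T_n\equiv 1+Z$ with $Z$ bounded, $\bbE[Z]=0$, and $G_n=G=Z$; then $\bbE[T_nG]=\bbE[Z^2]\neq 0=\bbE[G]$. McLeish's theorem covers constant limiting variance, where $G$ factors out of the expectation; the random-$\eta^2$, stable-convergence version is precisely what makes Hall and Heyde's Theorem 3.2 harder, and it is why their statement carries a nesting condition on the array of $\sigma$-fields (silently dropped in the paper's single-filtration transcription, where it holds trivially). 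The repair is the device your last paragraph gestures at, but it must be applied to $G$ itself, not only to the weight $\lambda$: approximate $G=\exp\bp{-\frac{1}{2}t^2\eta^2}$, which is in general tail-measurable and hence not $\calF_m$-measurable for any fixed $m$, by $G^{(m)}=\bbE[G\mid\calF_m]$; use the martingale property of the product beyond time $m$ together with the asymptotic negligibility of the first $m$ increments --- noting that this step, like your final paragraph, only makes sense if $\{\xi_i\}$ is read as a triangular array $\xi_{n,i}$ (as in the paper's application $\xi_i=n^{-1/2}(\psi(Z_i;\theta_0)-\tau)$), since for a fixed sequence \eqref{eq:mds_decay} would force $\xi_i\equiv 0$ --- and then let $m\to\infty$, using UI of $\{T_n\}$ to control $\bbE\bb{|T_n|\,|G-G^{(m)}|}$. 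With that substitution (or with Hall and Heyde's alternative of discretizing $\eta^2$ on $\calF_m$-measurable events), your outline becomes the textbook proof; as it stands, the decoupling step is a genuine gap.
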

Now let's verify the above conditions for $\sqrt{n}(\Tilde{\tau}-\tau)$. Define
\[
\xi_i = \frac{1}{\sqrt{n}} \bc{\psi(Z_i;\theta_0) - \tau}.
\]
We have $\bbE[\xi_i|\calF_{i-1}]=0$. By the bounded inverse of Hessian, we have $|\xi_i| = O(n^{-1/2})$, implying Equation~\eqref{eq:mds_decay} and \eqref{eq:mds_bound}. Section \ref{appendix:variance_convergence} shows that:
\[
\widetilde{V}_n:=\sum_i\xi_i^2 = \frac{1}{n}\sum_i (\psi_i(Z_i;\theta_0)-\tau)^2 \xrightarrow{p}\eta^2,
\]
 where $\eta^2:=\Var(\psi_i(Z_i;\theta_0)$ that is a finite number, verifying Equation~\eqref{eq:mds_var}.
We thus have
\begin{equation}
\label{eq:tilde_tau_tau}
    \sqrt{n}(\Tilde{\tau}_n^{DB}-\tau) =\sum_i \xi_i \xrightarrow{d} \calN(0,\eta^2),
\end{equation}
where $z$ is a Gaussian random variable $\calN(0, \eta^2)$ that has characteristic function $\bbE[\exp(-\frac{1}{2}\eta^2 t^2)]$. Connecting Equation~\eqref{eq:tilde_tau_tau} with Equation~\eqref{eq:tilde_hat_tau}, by Slutsky's theorem, we have
\begin{equation}
    \sqrt{n}\bp{\hat{\tau}_n^{DB} - \tau} = \sqrt{n}\bp{\hat{\tau}^{DB}_n - \tilde{\tau}^{DB}_n} + \sqrt{n}\bp{\tilde{\tau}^{DB}_n - \tau} \xrightarrow{d} z,
\end{equation}
yielding that $   \sqrt{n}(\hat{\tau}^{DB}_n - \tau)/\eta$ is asymptotically normal and converges to $\calN(0,1)$.

\subsection{\texorpdfstring{Step III: Variance convergence and connecting $\widetilde{V}_n$ to $\widehat{V}_n$.}{variance convergence}} 
\label{appendix:variance_convergence}
We start by showing that $\widetilde{V}_n$ converges in probability to $\eta^2:=\Var(\psi_i(Z_i;\theta_0)$. We first argue that $\bbE\bb{(\psi_i(Z_i;\theta_0)-\tau)^2}$ has the same value, denoted as $\eta^2$. Note that $\bbE\bb{\psi_i(Z_i;\theta_0)}  = \bbE\bb{\bbE\bb{\psi_i(Z_i;\theta_0)\mid \calF_{i-1}}} = \tau$. Therefore $\bbE\bb{(\psi_i(Z_i;\theta_0)-\tau)^2}$  represents the variance of $\psi_i(Z_i;\theta_0)$. Given $n$ samples, let $\vv{w}$ denote the treatment status of all items involved in the $n$ samples. We  have that each entry of $\bfw$ is i.i.d.~sampled from Bernoulli distribution with the treated probability being $q$, for which we abuse notation and denote as $\bfw\sim\calB(q)$. Therefore,
\begin{align*}
   \Var(\psi_i(Z_i;\theta_0)):&= \bbE\bb{(\psi_i(Z_i;\theta_0)-\tau)^2} =\bbE_{\bfw\sim \calB(q)}\bb{\bbE\bb{(\psi_i(Z_i;\theta_0)-\tau)^2\mid \bfw} }\\
   &=\bbE_{\vv{W}_i\sim\calB(q)}\bb{\bbE\bb{(\psi_i(Z_i;\theta_0)-\tau)^2\mid \vv{W}_i} }\\
   &=\bbE_{(V_i, \vv{C}_i),\vv{W}_i\sim\calB(q)}\bb{\bbE\bb{(\psi_i(Z_i;\theta_0)-\tau)^2\mid V_i, \vv{C}_i,\vv{W}_i} }.
\end{align*}
Recall that $\{(V_i, \vv{C}_i)\}_{i=1}^n$ are sampled i.i.d.~from i.i.d.~viewer queries, we thus obtain that $\Var(\psi_i(Z_i;\theta_0))$ across $i$ have the same value, denoted as $\eta^2$. Similarly, we have that for any  $m$-moment  $\bbE[(\psi_i(Z_i;\theta_0)-\tau)^m]$, they have the same bounded value across sample $i$, where the boundedness comes from  the boundedness of $\psi_i$.

We thus have $\bbE[\widetilde{V}_n]=\eta^2$. We proceed to show $\widetilde{V}_n \xrightarrow{p}\eta^2$. By Markov inequality, given any $\epsilon>0$, we have
\begin{align}
\label{eq:Vn_converge}
    P\bp{|\widetilde{V}_n - \eta^2|\geq \epsilon}\leq \frac{\Var(\widetilde{V}_n)}{\epsilon^2}.
\end{align}
Note that 
\begin{align*}
\Var\bp{\widetilde{V}_n} 
&= \frac{\sum_{i=1}^n\bbE\bb{(\psi_i(Z_i;\theta_0)-\tau)^4} }{n^2}
+\frac{\sum_{i\ne j}\one\bp{\vv{W}_i\cap \vv{W}_j\neq \emptyset} \Cov\bc{(\psi_i(Z_i;\theta_0)-\tau)^2,(\psi_j(Z_j;\theta_0)-\tau)^2} }{n^2}\\
&\stackrel{(i)}{=} O\bp{\frac{n}{n^2}} + \frac{na_n}{n^2}=O\bp{\frac{a_n}{n}}=o(n^{-3/4})=o(1),
\end{align*}
where $(i)$ uses the boundedness of $\bbE\bb{(\psi_i(Z_i;\theta_0)-\tau)^4}$ and the Cauchy-Schwartz inequality such that $ \Cov\bc{(\psi_i(Z_i;\theta_0)-\tau)^2,(\psi_j(Z_j;\theta_0)-\tau)^2}\leq \sqrt{\bbE\bb{(\psi_i(Z_i;\theta_0)-\tau)^4}\bbE\bb{(\psi_j(Z_j;\theta_0)-\tau)^4}}$. Continuing Equation~\eqref{eq:Vn_converge}, we have
\begin{equation*}
      P\bp{|\widetilde{V}_n - \eta^2|\geq \epsilon}\leq \frac{\Var(\widetilde{V}_n)}{\epsilon^2} = o(1),
\end{equation*}
implying that $\widetilde{V}_n \xrightarrow{p} \eta^2$. Therefore,  $ \sqrt{n}\bp{\hat{\tau}^{DB}_n - \tau}/\eta$ converges in distribution to  $\calN(0,1)$.

 We now show that $\widehat{V}_n$  also converges in probability to  $\eta^2$.  We have 
\begin{align*}
    |\widehat{V}_n - \widetilde{V}_n| &= \ba{\frac{1}{n}\sum_i\bp{\psi(Z_i;\hat{\theta}_0) - \psi(Z_i;\theta_0) - \hat{\tau}^{DB}_n +\tau}\bp{\psi(Z_i;\hat{\theta}_0) + \psi(Z_i;\theta_0)- \hat{\tau}^{DB}_n -\tau}}\\
    &\stackrel{(i)}{\leq}  O(1)\cdot \bc{ \frac{1}{n}\sum_i\ba{\psi(Z_i;\hat{\theta}_0) - \psi(Z_i;\theta_0)}+\frac{1}{n}\sum_i |\hat{\tau}^{DB}_n -\tau|}\\
    &= O_p(\|\hat{\theta}_0 - \theta_0\|_{L_2}) + O_p(n^{-1/2}) = O_p(\epsilon_n) + O_p(n^{-1/2})=o_p(1),
\end{align*}
where (i) is by the  boundedness of $\psi(Z_i;\hat{\theta_0})$ and $\psi(Z_i;\theta_0)$.

Therefore,
\[
 |\widehat{V}_n - \eta^2|\leq  |\widehat{V}_n - \widetilde{V}_n|+ |\eta^2-\widetilde{V}_n| =o_p(1).
\]
Combining the above with $\sqrt{n}(\hat{\tau}^{DB}-\tau)/\eta\xrightarrow{d}\calN(0,1)$, by Slutsky's theorem, we have $\sqrt{n}(\hat{\tau}^{DB}-\tau)/\sqrt{\widehat{V}_n}\xrightarrow{d}\calN(0,1)$, concluding the proof.

\newpage

\end{APPENDICES}

\end{document}